\documentclass[11pt]{article}
\usepackage{lowe}
\usepackage{microtype}
\usepackage[margin=1in]{geometry}

\setcounter{tocdepth}{2}
\newcommand*\samethanks[1][\value{footnote}]{\footnotemark[#1]}

\begin{document}

\title{Randomized truncation of quantum states}

\author{Aram W. Harrow~\thanks{Center for Theoretical Physics -- a Leinweber Institute, MIT, Cambridge, MA, 02139, USA}\and Angus Lowe~\samethanks[1]\and Freek Witteveen~\thanks{QuSoft and CWI, Science Park 123, 1098 XG Amsterdam, Netherlands}}
\date{\today}
\maketitle

\begin{abstract}
  A fundamental task in quantum information is to approximate a pure quantum state in terms of sparse states or, for a bipartite system, states of bounded Schmidt rank. The optimal deterministic approximation in each case is straightforward, and maximizes the fidelity: keep the largest entries or singular values.  On the other hand, random mixtures of sparse states can achieve quadratically improved trace distances, and yield nontrivial bounds on other distance measures like the robustness.   In this work, we give efficient algorithms for finding mixtures of sparse states that optimally approximate a given pure state in either trace distance or robustness.  These algorithms also yield descriptions of efficiently samplable ensembles of sparse, or less-entangled, states that correspond to these optimal mixed approximations. This can be used for the truncation step of algorithms for matrix product states, improving their accuracy while using no extra memory, and we demonstrate this improvement numerically. Our proofs use basic facts about convex optimization and zero-sum games, as well as rigorous guarantees for computing maximum-entropy distributions.
\end{abstract}
\tableofcontents

\section{Introduction}\label{sec:intro}
In this paper we study approximations of pure quantum states by randomly drawn states which are sparse in the standard basis, or have low Schmidt rank. We say a vector is $k$-sparse if it has at most $k$ nonzero entries in the standard basis. Given a $d$-dimensional vector $v$ and a sparsity parameter $k\leq d$, it is clear that keeping just the $k$ entries of $v$ with the largest absolute values should give the best approximation among all possible $k$-sparse vectors. Similarly, it is well known that the optimal low-rank approximation to a matrix, as measured by any unitarily invariant norm, is obtained by truncating (keeping just a few of) its singular values. Another instance of this type of problem arises when simulating quantum many-body systems using tensor networks, where multipartite states are represented using states that have low Schmidt rank with respect to several partitions of the subsystems.

In the setting of quantum information, the unit vector $v$ corresponds to a pure quantum state $\rho=vv^\dag$, where $v$ is interpreted as a column vector. (We eschew bra-ket notation in this paper, partly to emphasize that our quantum results apply more generally to any normalized complex vectors.)
An approximation to $\rho$ can be obtained by keeping the top $k$ entries of $v$, normalizing, and taking the quantum state corresponding to this new unit vector. In other words, one may deterministically truncate to a $k$-sparse pure state. This solution, however obvious, is not always optimal among those quantum states whose operational meaning is the preparation of a $k$-sparse pure state, potentially at random. We refer to this latter, more general class of approximation methods as \textit{randomized truncation}.

When error is measured in a way that corresponds to a single fixed test, such as fidelity or expected energy, then randomness does not help.  However, when error is measured by examining
the maximum of many linear tests --- as in the case of trace distance error --- then randomness can offer a significant improvement.  In particular, mixtures of $k$-sparse states can achieve better trace distances compared to what is possible using pure approximations, by as much as a quadratic factor.  If one considers instead the robustness (or equivalently max relative entropy) as a distance measure there is an even stronger justification for considering randomness, since then only mixed states can achieve any nontrivial approximation.

\begin{figure}
  \centering
  \begin{subfigure}[b]{0.3\textwidth}
    \centering
    \includegraphics[width=\linewidth]{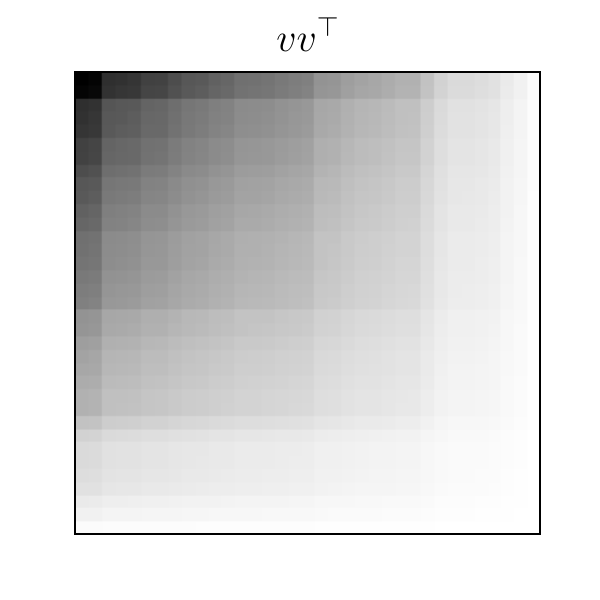}
    \caption{\label{fig:density_matrix_plot_1}}
  \end{subfigure}
  \begin{subfigure}[b]{0.3\textwidth}
    \centering
    \includegraphics[width=\linewidth]{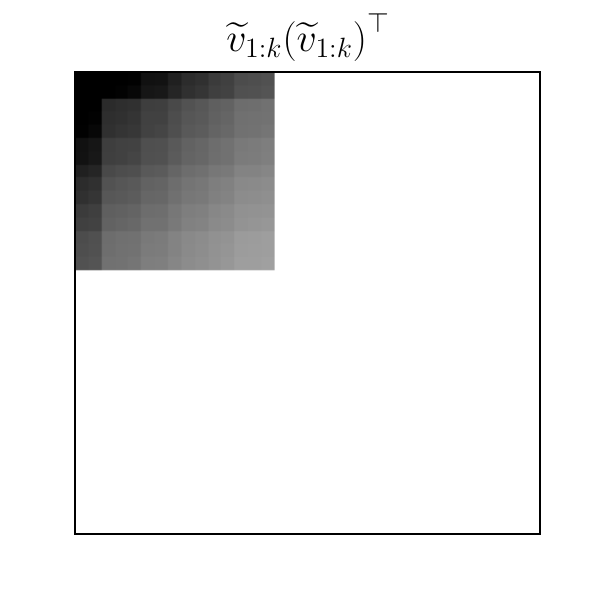}
    \caption{\label{fig:density_matrix_plot_2}}
  \end{subfigure}
  \begin{subfigure}[b]{0.3\textwidth}
    \centering
    \includegraphics[width=\linewidth]{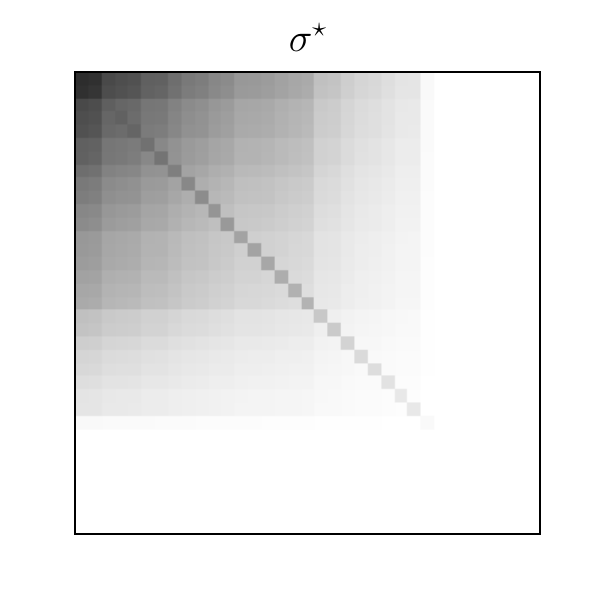}
    \caption{\label{fig:density_matrix_plot_3}}
  \end{subfigure}
  \caption{\label{fig:density_matrix_plot} With $d=35$, $k=15$, and darker cells representing larger values: (a) Quantum state given by a random unit vector $v\in\bbR^d$ with positive entries sorted in decreasing order. (b) Approximation obtained by keeping the first $k$ entries of $v$ and normalizing, which yields $\widetilde{v}_{1:k}$. (c) Optimal approximation $\sigma^\star$ of $vv^\top$, with respect to trace distance, by a convex combination of states given by $k$-sparse unit vectors.}
\end{figure}

\subsection{Simple example}\label{sec:simple-example}
To make the benefits of mixed approximations concrete, we will describe a simple but instructive example.
Let $v = (\sqrt{1- 2\eps}, \sqrt{\eps}, \sqrt{\eps} )^\top$
and suppose we want to approximate this by a 2-sparse state.
The best approximation in fidelity is given by either
$w_1 \propto (\sqrt{1- 2\eps}, \sqrt{\eps}, 0)^\top$ or $w_2 \propto (\sqrt{1- 2\eps}, 0, \sqrt{\eps})^\top$,
both giving a fidelity $F(v,w_i) = \sqrt{1 - \eps}$, and hence a trace distance of $T(v,w_i) = \sqrt{\eps}$.
If we look at the density matrix
\begin{align}
  vv^\top = \begin{pmatrix}
           1 - 2\eps              & \sqrt{\eps(1 - 2\eps)} & \sqrt{\eps(1 - 2\eps)} \\
           \sqrt{\eps(1 - 2\eps)} & \eps                   & \eps                   \\
           \sqrt{\eps(1 - 2\eps)} & \eps                   & \eps
         \end{pmatrix}
\end{align}
then we see that when truncating to one of the $w_i$, the dominant error in trace distance comes from the off-diagonal terms on the order of $\sqrt{\eps}$.
We can mitigate this by taking a uniform mixture of states $\tilde w_1 \sim (\sqrt{1- 2\eps}, 2\sqrt{\eps}, 0)^\top$ and $\tilde w_2 \sim (\sqrt{1- 2\eps}, 0, 2\sqrt{\eps})^\top$.
This way, we see that the mixed state
\begin{align}
  \sigma = \frac12 \tilde w_1\tilde w_1^\top+ \frac12 \tilde w_2 \tilde w_2^\top
  = \frac{1}{{1+2\eps}} \begin{pmatrix}
                          1 - 2\eps              & \sqrt{\eps(1 - 2 \eps)} & \sqrt{\eps(1 - 2\eps)} \\
                          \sqrt{\eps(1 - 2\eps)} & 2\eps                   & 0                      \\
                          \sqrt{\eps(1 - 2\eps)} & 0                       & 2\eps
                        \end{pmatrix}
  \label{eq:example-sigma}
\end{align}
only differs from $v$ in trace distance to order $\eps$ instead of $\sqrt{\eps}$.
Intuitively, the scarce resource here is the superposition between the 1st entry and the 2nd and 3rd entries, and we can boost this by increasing the amplitude of the 2nd and 3rd entries.  This introduces inaccuracies of order $\eps$ along the diagonals while fixing the error of order $\sqrt\eps$. The tradeoff between these two sources of error is also visible in the larger example illustrated in \Cref{fig:density_matrix_plot}.

The approximation in \cref{eq:example-sigma} turns out to achieve the optimal robustness of $2\eps$, meaning $vv^\top \preceq (1+2\eps)\sigma$. In \cref{sec:robustness} we will see more generally how to construct this sort of robustness-optimal ensemble.

\subsection{Overview of results}
\paragraph{Efficient algorithms.} We study the task of optimally approximating a pure state by a mixture of $k$-sparse states.  This is both a computational and a mathematical problem. It is important that our target states be pure since the more general question of approximating mixed states with mixtures of $k$-sparse states --- or even determining whether a given mixed state can be decomposed into $k$-sparse states --- is $\NP$-hard.  These, and other hardness results, are discussed in \cref{sec:hardness}.

Our main result is a collection of efficient algorithms that can find these randomized approximations when the state is pure.
\begin{thm}[Informal]\label{thm:informal-approx}
  Given a pure state $v\in\bbC^d$ and a sparsity bound $k\leq d$, it is possible in time $\poly(d)$ to find a description of the optimal randomized approximation to $v$ in terms of either trace distance $T$ or robustness $R$.  This description includes:
  \begin{enumerate}
    \item the optimal value of $T$ or $R$;
    \item the corresponding density matrix; and
    \item a procedure to sample $k$-sparse pure states from this distribution in time $O(d)$.
  \end{enumerate}
\end{thm}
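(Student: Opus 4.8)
The plan is to treat the trace-distance and robustness problems as two convex optimization problems over the set of mixtures of $k$-sparse states, to pin down the structure of their optima by combining convex duality with the permutation and phase symmetries of the problem, and finally to express the optimal density matrix as the average of an explicitly describable, efficiently samplable ensemble of $k$-sparse pure states.

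First I would normalize the instance: the set of $k$-sparse pure states and both distance measures $T,R$ are invariant under diagonal unitaries and coordinate permutations, so we may replace $v$ by the vector of its sorted absolute values and assume $v_1\ge v_2\ge\cdots\ge v_d\ge 0$, recovering the answer for the original $v$ by conjugating back. Next I would write each problem as a semidefinite program. For robustness, $R(vv^\dagger)+1=\min\{\operatorname{tr}\tau:\tau\succeq vv^\dagger,\ \tau\in\mathcal C_k\}$, where $\mathcal C_k$ is the convex cone generated by $uu^\dagger$ over $k$-sparse $u$ (so that the set of mixtures of $k$-sparse pure states is $\{\tau\in\mathcal C_k:\operatorname{tr}\tau=1\}$), whose dual cone is the set of Hermitian matrices all of whose $k\times k$ principal submatrices are positive semidefinite. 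For trace distance I would use $\min_{\sigma}T(vv^\dagger,\sigma)=\min_\sigma\max_{0\preceq M\preceq I}\operatorname{tr} M(vv^\dagger-\sigma)$ and apply Sion's minimax theorem, reading the result as the value of a zero-sum game between a ``truncator'' who picks a distribution over $k$-sparse states and a ``distinguisher'' who picks a measurement $M$. Taking Lagrangian duals in both cases yields a dual SDP together with complementary-slackness conditions linking the optimal $\sigma$ (or $\tau$) to the optimal dual witness.

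The crucial step is to determine the structure of the optimum. I would symmetrize the primal--dual pair over the residual symmetry group (permutations fixing the level sets of $v$, together with an averaging-and-convexity argument) and then use a direct rearrangement/exchange argument to show that the optimal $\sigma$ has the ``prefix plus boosted random tail'' shape already visible in the example of \Cref{sec:simple-example}: each coordinate $i$ is retained with a marginal probability $q_i$ that is nonincreasing in $i$, equal to $1$ on an initial prefix, and the amplitudes on the retained coordinates are boosted past the factor $1/\sqrt{q_i}$ that would merely preserve the diagonal, by an amount governed by the tradeoff between off-diagonal fidelity and diagonal leakage. The remaining free parameters (the marginals $q_i$, or equivalently a few tier boundaries and boost factors) are then determined by the KKT/complementary-slackness equations, which reduce to a low-dimensional monotone root-finding problem solvable to any desired precision in $\poly(d)$ time; this is also where one reads off items (1) and (2), and the robustness-optimal ensemble of \Cref{sec:robustness} appears as a special case. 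For item (3), given the optimal marginals $q$ with $\sum_i q_i=k$ and $q_i\in[0,1]$ nonincreasing, the relevant Gale--Ryser-type majorization condition for the existence of a distribution over $k$-element subsets of $[d]$ with exactly those inclusion marginals holds automatically; to make it efficiently samplable I would take the maximum-entropy such distribution, of product form $p(S)\propto\prod_{i\in S}\lambda_i$ on $\{|S|=k\}$, and invoke the rigorous guarantees for approximately computing maximum-entropy distributions to find the $\lambda_i$ in $\poly(d)$ time and draw a sample in $O(d)$ time after $\poly(d)$ preprocessing, checking that averaging the boosted $k$-sparse states over $p$ reproduces the optimal $\sigma$ up to controlled error.

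\textbf{The main obstacle} is the structure step: because $v$ generically has all-distinct entries, symmetrization alone gives only partial symmetry, so ruling out more exotic optima and proving that the optimal value is a convex, piecewise-smooth function of the tier parameters (so that the root-finding converges) requires a careful analysis of the optimal face of the dual SDP; and for trace distance one must additionally show that the distinguisher's best response---in general an NP-hard sparse-eigenvalue problem---becomes tractable once sortedness forces the optimal measurement $M$ to be ``sorted'' as well. A secondary subtlety is that normalizing the boosted per-sample vectors reweights $p$ by a linear functional of the subset indicator, which must be shown to preserve efficient samplability; the degenerate marginals $q_i\in\{0,1\}$ are handled separately and are easy.
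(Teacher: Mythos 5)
Your high-level architecture overlaps with the paper's in several places -- sorting away phases and permutations, viewing trace distance as a Sion/zero-sum game, and, for item (3), taking the maximum-entropy (conditional Poisson) distribution with prescribed inclusion marginals, computing its weights via the rigorous max-entropy guarantees, and sampling in $O(d)$; that last component is essentially identical to \cref{sec:max_entropy}. However, the core of the argument -- how one actually determines the structure of the optimum and computes the optimal value in $\poly(d)$ time -- is exactly the step you defer to ``symmetrization plus a rearrangement/exchange argument over the optimal face of the dual SDP,'' and this is where the gap lies. Generic $v$ has no residual symmetry, and the robustness SDP you write down has a dual cone (all $k\times k$ principal submatrices PSD) over which optimization is itself intractable for variable $k$, so KKT/face analysis of that SDP does not by itself yield a polynomial-time algorithm. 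The paper never solves such an SDP: for robustness it uses the known closed form $R_k(v)=\norm{v}_{(k,*)}^2-1$ and exhibits an explicit feasible ensemble, certified optimal because the residual $\widetilde\tau - vv^\dag$ is diagonally dominant (hence proportional to an element of $\cI_2$) -- and that certification crucially uses the \emph{negative correlation} $Q_{ij}\le q_iq_j$ of the conditional Poisson distribution, a property your plan never invokes and which an arbitrary distribution with the right marginals need not have. For trace distance, the enabling idea you are missing is the rank-one reduction of the optimal measurement (\cref{lem:rank_1_suffices}): because $v$ is pure, $vv^\dag-\sigma$ has at most one positive eigenvalue, so the problem collapses to $\max_{\norm{m}_2=1}\langle m,v^\downarrow\rangle^2-\norm{m}_{(k)}^2$, which is then solved in closed form through the Fenchel/subdifferential machinery of the $k$-support norm: the optimal $m$ has a three-region shape (proportional to $v$, flat, proportional to $v$, then truncated), and $\lambda=T_k(v)$ is a root of a cubic for each pair $(r,\ell)$, giving the $O(dk+d\log d)$ algorithm. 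Relatedly, you have the hardness concern backwards: the distinguisher's best response is just the positive part of $vv^\dag-\sigma$ and is easy; it is the truncator's best response $\max_{\sigma\in\cI_k}\tr(M\sigma)$ that is a hard sparse-eigenvalue problem for general $M$, and it is rank-one-ness of $M$ (not sortedness) that tames it, since then it equals $\norm{m}_{(k)}^2$.

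A second concrete gap is how the ensemble's marginals are obtained and why averaging reproduces the optimum. In the paper the marginals for the trace-distance ensemble fall out of the saddle-point eigenvector equation $(vv^\dag-\sigma^\star)m=T_k\,m$, which forces $q_j=v_j/\theta-T_k$ on the middle block, and optimality of the max-entropy choice follows because $\sigma$ enters the saddle conditions only through $\sigma m$, so \emph{any} distribution with those one-body marginals is optimal. Your proposal instead promises to ``check that averaging the boosted $k$-sparse states over $p$ reproduces the optimal $\sigma$,'' but without the eigenvector equation you have no formula for the marginals, and without the one-body-marginals-suffice observation (for $T$) or the negative-correlation certificate (for $R$) the check has no mechanism behind it. Finally, note that in the optimal ensembles the amplitudes on the randomly chosen coordinates are \emph{flat} ($\theta$ for trace distance, $s^v_{k-r}/(r+1)$ for robustness), not individually boosted by $1/q_i$ per coordinate as your importance-sampling heuristic suggests; this matters because the flat value is exactly what makes the closed-form cubic and the diagonal-dominance argument go through.
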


These algorithms are given in \cref{sec:trace-dist,sec:robustness}, building on sampling algorithms described in \cref{sec:max_entropy}. Specifically, the trace distance version of this statement follows from Theorems~\ref{thm:td_optimal_value_computation}, \ref{thm:opt_td_state}, and \ref{thm:opt_td_sampling}. For the robustness, the optimal value computation appears in prior work~\cite{Regula18,kcoh18}, and we add to this by giving an efficient method to describe and sample from the optimal solution in Theorems~\ref{thm:robust_tau} and \ref{thm:robust_ensemble}.  
In these sections we also describe the form of our approximating ensembles.  We find that for both trace distance and robustness, the approximating sparse states involve keeping some of the top entries of $v$ and sampling from some or all of the remaining entries.  This form of sampling can be thought of as importance sampling for subsets of indices, which we discuss in more detail in \cref{sec:max_entropy}.

In \cref{sec:trace-dist}, we show that the results in \cref{thm:informal-approx} also apply in the setting where the limited resource is entanglement, and where one aims to approximate a pure bipartite state by states with lower Schmidt rank.
We also bound the variance of observable expectation estimates obtained as a result of these randomized truncations in \Cref{sec:variance}.

\paragraph{When randomized truncation is useful.}
We discuss a range of representative examples in \cref{sec:examples} to give a sense of when the quadratic improvement in trace distance is or is not possible.  This turns out to depend on the decay rate of the sorted entries of the target vector, and specifically on the relation between the $\ell_1$ and $\ell_2$ norms of its tails.

\paragraph{Numerics and application to MPS.} An important motivation for our work is the application to randomized methods for simulating quantum systems, which we explore numerically in \cref{sec:MPS}. In particular, matrix product states are a widely used technique for modelling quantum states where our randomized truncation schemes can be used as a drop-in replacement for the usual deterministic approach to reducing bond dimension.
Code for our paper is available at \url{https://github.com/angusjlowe/rtrunc}. It includes a package for computing optimal approximations for pure states as well as code to reproduce the plots in this paper.

\subsection{Related work}\label{sec:related}

Randomized approximations to classical and quantum objects have appeared in several different contexts.

\paragraph{Randomized compiling.}  A related question to the one we study in this work is how well unitary operators can be approximated by mixtures of unitaries.  Here, the power of randomness is somewhat greater for channels than for states, which we discuss further in \cref{sec:intuition}.   Prior work by Campbell~\cite{Campbell17} and Hastings~\cite{Hastings16} first showed (nonconstructively) that mixing unitaries can quadratically reduce diamond-norm error under general conditions.  This led to randomized product formulas for Hamiltonian simulation~\cite{COS19,qDRIFT,OWC20,HW23}, rotations~\cite{Low21,koczor2024probabilistic}, and Quantum Signal Processing~\cite{MR25}. The general question of approximating the set of all unitaries by convex combinations of an allowed set of unitaries was studied by Akibue, Kato and Tani~\cite{AKT24-unitary}, who found that mixing quadratically improves the worst-case error, i.e., given an $\sqrt{\eps}$-net of unitaries, taking convex combinations yields an $O(\eps)$-net in diamond norm.

\paragraph{Probabilistic state synthesis.} Randomized approximations to states have also been considered previously in \cite{AKT22,AKT24-state}. Given a set of allowed pure states in the random mixture, these works show a quadratic improvement in accuracy for the problem of \textit{worst-case approximation} (i.e.~the hardest possible target state $v$) as well as when certain symmetry conditions are met. Their setting does not substantially overlap with ours since we consider approximations for arbitrary pure states, which in general do not obey their symmetry conditions. Indeed, in
\cref{sec:examples}, we give examples where the quadratic accuracy
improvement of \cite{AKT24-state} does not apply.  Refs.~\cite{AKT22,AKT24-state} also
provide algorithms based on semidefinite programming for finding the optimal randomized approximation, whose runtime scales polynomially in the number of allowed states in the mixture. In our setting, this number is at least exponential in the sparsity parameter $k$. See \cref{sec:hardness} for further
discussion on this point.  Although this prior work does not resolve our problem, their findings of widespread quadratic advantage serve as inspiration for pursuing efficient algorithms applicable to our setting. Finally, we note that simultaneous work~\cite{wang2025faster} studies the application of a randomized approach to state preparation on quantum devices, and considers a similar set of examples to those appearing in \Cref{sec:examples}.

\paragraph{Resource theory of coherence.} The idea of a ``resource theory'' is to define a set of ``free'' states and/or operations, and then to use this to define distances from these objects which quantify some resource of interest~\cite{CG19}.  The original example of this was to make separable states and LOCC operations free and to seek to quantify the amount of entanglement in a non-separable state~\cite{BBPS96}.  One example of a natural resource measure is the robustness (related to the max-relative entropy) of a given state $\rho$ with respect to the free states~\cite{VT99,LBT19}.  The resource theory that directly relates to our work is that of ``$k$-incoherence" (also known as ``$k$-coherence''), which treats mixtures of $k$-sparse states as free, and quantifies the distance of a given state from this set; see \cite{SAP17} for a review of this topic. 
The $k=1$ case was studied in \cite{Chen_2016_quantifying}, using both the trace distance and robustness as resource measures. The $k>1$ case was analyzed in detail in \cite{Regula18,kcoh18}.  Among other results, these works show that for pure states, different notions of robustness are equal to each other and can be expressed in terms of a norm called the $k$-support norm (see \Cref{def:k-support}). Our result on the robustness in \Cref{sec:robustness} can be seen as a constructive version of the upper bound on the robustness given in Theorem~1 of \cite{kcoh18}.

\paragraph{The $k$-support norm.}
Both the optimal fidelity and the optimal robustness can be understood in terms of two norms which are dual to one another: the top-$k$ norm, and the $k$-support norm.   These are defined in \cite{AFS12} and reviewed in our paper in \cref{sec:distance_measures}.  In \cite{AFS12} these norms were used to develop convex relaxations for the task of finding sparse solutions to regression problems.  They also gave an expression for the proximal operator for the $k$-support norm~\cite[{Section~4}]{AFS12} which is similar to our trace distance optimization in \Cref{sec:trace-dist}.
Gauge norms are also studied in optimization settings~\cite{convex-inverse12} where the goal of finding a simple
model can be formalized through constraints on sparsity or rank, or more generally, being a linear combination of a small number of atoms.

\paragraph{Tensor network Monte Carlo.}
An important application of truncation, both deterministic and randomized, is to reduce the bond dimension in matrix product states and tensor networks.   An immediate application of our work is to approximate a tensor by a lower-rank tensor, as we will discuss in more detail in \Cref{sec:MPS}.  Ferris~\cite{Ferris15} has proposed a tensor truncation scheme with some similarities to ours, and this idea has also been studied more recently in~\cite{todo2024markov}.  We defer the detailed comparison of our work with \cite{Ferris15} to \Cref{sec:MPS}.  As a brief summary of the comparison, our methods provably minimize robustness or trace distance for bipartite states, while Ferris' method samples indices to keep using a heuristic which is not based on the solution to an optimization problem.

\paragraph{Randomized SVD.}
When truncating bond dimensions in a tensor network, one typically performs many singular value decompositions (SVDs). Since the $O(d^3)$ (or $O(d^\omega)$ with $\omega\approx 2.37$ in theory) runtime to perform an SVD can be prohibitive, there has been a long line of work (reviewed in \cite{KT24}) on randomized algorithms for this task.  In general, these are faster and less accurate than performing a deterministic SVD and truncating. The intention behind our algorithms, by contrast, is to be slower and more accurate.



\subsection{Proof techniques and intuition}\label{sec:intuition}
Our work shows that in order to approximate a pure state, given a library of simpler pure states, the best accuracy comes from introducing randomness. This may at first seem counterintuitive, but there are a few reasons why we might expect this to be the case.

One explanation comes from zero-sum games. The optimal trace distance approximation can be viewed as such a game where, given a target pure state $v$, we seek the value of
\be
\min_{\sigma\in\cI_k}
\max_{0\preceq M\preceq \mathds{1}}\tr(M(vv^\dag-\sigma)).
\ee
Here, $\cI_k$ is the set of mixtures of $k$-sparse states.  This has the form of a zero-sum game, with the $\sigma$ player trying to minimize the distinguishability and the $M$ player trying to maximize it.  A common feature of equilibria is that the optimal strategies are mixed.  In  game terms, the $\sigma$ player often does better with a mixed strategy because they may then perform simultaneously well against many possible responses by the $M$ player.  This picture also tells us that a fixed $M$ obviates the need for a mixed strategy by the $\sigma$ player.  This happens when the goal is to minimize an observable such as the energy of a system, or when maximizing the fidelity with a fixed state.  In these cases, optimal pure strategies exist, while optimal mixed strategies may or may not also exist.

This principle also explains why randomness helps with approximating unitaries.  The diamond norm involves a maximization over both the final distinguishing measurement {\em and} the choice of input state.  In fact, randomness can be even more helpful in reducing diamond norm error.  For example, the unitary $\exp(i\frac{\pi}{n} \sum_{i=1}^n Z_i)$ needs $\Omega(n)$ gates to approximate deterministically to accuracy $\eps$ but only $O(1/\eps^2)$ gates if we allow randomness~\cite{qDRIFT,CHKT21}.   If the input state is fixed then we are back in the setting of approximating states.  But if we fix {\em both} the input state and the final observable, then again deterministic approximation can achieve the optimal error.

Another way to understand the quadratic improvement in trace distance comes from the geometry of the state space, an idea which is due to \cite{AKT22,AKT24-unitary,AKT24-state}.  This is illustrated in  \Cref{fig:convex_combination_diagram}, where the best pure approximations to $v$ are $w_1,w_2$, each at distance $\sqrt\eps$, but a convex combination of these states has distance $O(\eps)$.

\begin{figure}
  \centering
    \begin{tikzpicture}
  \def\R{10} 
  \coordinate (C) at (0,0);

  \coordinate (Pone)    at ($(C)+(60:\R)$);   
  \coordinate (Ptwelve) at ($(C)+(90:\R)$);   
  \coordinate (Pelev)   at ($(C)+(120:\R)$);  


  \draw[line width=1.2pt]
    ($(C)+(50:\R)$) arc[start angle=50, delta angle=80, radius=\R];

  \foreach \ang in {120,90,60} {
    \draw[very thick] ($(C)+(\ang:\R+0.25)$) -- ($(C)+(\ang:\R-0.25)$);
  }

  \draw[dashed] (Pelev) -- (Pone);

  \path let \p1=(Pelev), \p2=(Pone) in coordinate (M) at ($(\p1)!0.5!(\p2)$);
  \draw (Ptwelve) -- (M) node[midway, left] {$O(\eps)$};

  \draw (Pelev) -- (Ptwelve) node[midway, below] {$\sqrt{\eps}$};
  \draw (Pone) -- (Ptwelve) node[midway, below] {$\sqrt{\eps}$};
  
  \node[above left]  at ($(Pelev)+(0,0.2)$) {$w_1w_1^\top$};
  \node[above]       at ($(Ptwelve)+(0,0.2)$) {$vv^\top$};
  \node[above right] at ($(Pone)+(0,0.2)$) {$w_2w_2^\top$};

  \node[below] at (M) { $\sigma$};
\end{tikzpicture}
  \caption{\label{fig:convex_combination_diagram} Illustration of the geometric intuition underlying cases where randomized truncation offers a significant improvement in approximating a pure state $vv^\top$.}
\end{figure}
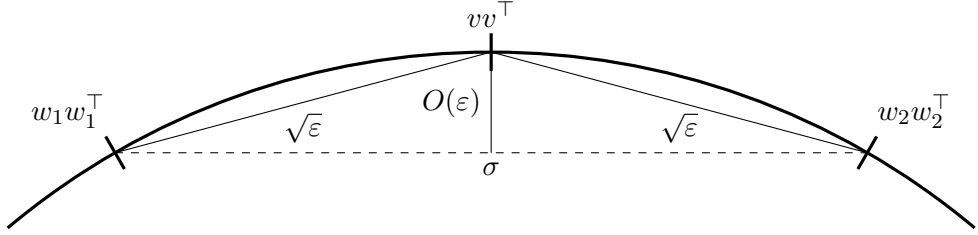

Finally, our results illustrate the value of importance sampling of sets, the main idea for which we now sketch.  Importance sampling often refers to the task of sampling an element $i$ from a set, say $[d]$, and using this to construct an estimator of some quantity of interest. For example, suppose $v_1,\ldots,v_d$ and $x_1,\dots, x_d$ are positive reals with $\sum_i v_i=V$ and the goal is to estimate $s=\sum_{i=1}^d v_i x_i$.   A natural unbiased estimator, familiar from Monte Carlo algorithms, is obtained by sampling $i$ with probability $v_i/V$ and outputting $\hat s_i \coloneq V x_i$. This procedure can alternatively be viewed as constructing a randomized $1$-sparse approximation to a vector $v=(v_1,\dots, v_d)$. Namely, if $(e_j:j=1,\dots, d)$ are the standard basis vectors and $i$ is drawn randomly with probability $v_i/V$ then the vector-valued random variable $Ve_i$ is an unbiased estimator of $v$.

Now, if we wish to generalize to a $k$-sparse approximation then we need to sample a random subset of the indices $S = \{i_1,\ldots,i_k\} \subset [d]$.  As before, we can choose to impose that this $k$-sparse estimator be unbiased for the target vector $v$.  This means that if $i\in S$ with probability $q_i\in (0,1]$ then the random vector ought to be of the form
\be\label{eq:k_sparse_estimate_intuition}
\frac{v_{i_1}}{q_{i_1} } e_{i_1} + \cdots + \frac{v_{i_k}}{q_{i_k} } e_{i_k}.
\ee
Since $|S|=k$ and the $q_i$ represent marginal probabilities, we must have $\sum_{i=1}^d q_i = k$. The task at hand --- say, minimizing the variance of some function of this random vector --- informs the desired values of the marginal probabilities $q_{1},\dots, q_d$.

For the randomized approximations in robustness and trace distance, we do not explicitly enforce that the $k$-sparse states be unbiased in the sense described above. Nevertheless, the right intuition is conveyed by importance sampling of sets. In particular, the construction of the optimal ensemble of $k$-sparse pure states in robustness, described in \Cref{sec:robustness}, is of the form in \cref{eq:k_sparse_estimate_intuition} with $q_1,\dots,q_m$ set to $1$ for some $m \leq k$ and $q_i\propto v_i$ for each $i=m+1,\dots, d$. Furthermore, a similar construction holds for the optimal trace distance ensemble, described in \Cref{sec:optimal_density_matrix_td}, upon deterministically truncating some of the indices in a way that depends on the form of the target state $v$. 

Carrying out the sampling, or computing the density matrix corresponding to this ensemble, is a nontrivial algorithmic challenge which we address \cref{sec:max_entropy}. Namely, we require some method of efficiently computing and representing a distribution over $k$-element subsets of $[d]$ subject to marginal constraints. Fortunately, such algorithms are known to be feasible from the literature on maximum entropy sampling. We draw on this past work and  improve upon it as well.

\subsection{Open questions}
We give a comprehensive analysis of the computational problem of finding optimal randomized approximations of pure quantum states by $k$-sparse states or states with Schmidt rank $k$. However, there are various interesting questions for future work.

\paragraph{Application to tensor networks.} We propose tensor network truncation as a natural application of randomized truncation. How useful this is in practice, and for \emph{which} applications precisely, will require detailed numerical study, which we leave to follow-up work.

\paragraph{Computing the sampling weights.} We use a black-box result based on the ellipsoid method to argue that the sampling weights can be efficiently computed~\cite{SV19}. While this has polynomial scaling, it is not very practical. Numerically, using a Newton method and a diagonal approximation of the Hessian works well. It would be interesting to give rigorous performance bounds for this method.  It may also be that the heuristic version of Newton's method, given in \Cref{sec:fast_algorithms}, can be further improved.

\paragraph{Applications beyond quantum states.} We focus on the case where the vector we try to approximate is a quantum state. A natural follow-up direction is to investigate applications beyond this interpretation. This may require generalizing to approximations in norms besides the (Schatten) 1-norm, such as the operator norm or general $p$-norms. Moreover, it may not be natural to impose a normalization condition on the approximating vectors, which changes the optimization problem.

\paragraph{Relation to Hamiltonian simulation.} As described in
\cref{sec:related}, there are several randomized Hamiltonian
simulation methods that can be viewed as solving a related problem.
Namely, given a Hamiltonian $H=\sum_{i=1}^L v_ih_i$, can we
approximate $e^{-iHt}$ with a random mixture of
$\exp(-i\sum_i w_i h_i)$ for a collection of $k$-sparse vectors $w$?
The qDRIFT algorithm can be viewed as doing this for 1-sparse $w$.
Applied to a starting state $\ket \psi$, and assuming
$h_i^2 \propto I$, this means approximating $e^{-iHt}\ket\psi$ with a
mixture of superpositions of $\ket\psi$ and $h_i\ket\psi$.  Applying
our robustness-optimal scheme with $k=2$, and neglecting higher-order
terms, turns out to yield precisely qDRIFT.  Improving this to higher
values of $k$ is a compelling open problem.  One of the key
difficulties is that qDRIFT addresses only first-order and
second-order error while modern Hamiltonian simulation methods
generally control higher-order errors as well.  An exploration of methods for partially randomized Hamiltonian simulation is initiated in recent work~\cite{Q4Bio-PhaseEstimation}.

\subsection{Notation}\label{sec:notation}
For a positive integer $d$, we write $[d]=\{1,2,\dots,d\}$.
The standard basis of $\bbC^d$ is denoted by $\{e_i : i \in [d]\}$. For any operator $X$ on $\bbC^d$ we let $X^\dag$ denote the Hermitian conjugate. We adopt the physics convention where the Hermitian inner product $\langle\cdot, \cdot\rangle$ is linear in the second argument, so that $\langle u,v\rangle=u^\dag v$ for any $u,v\in\bbC^d$. We write $v = \sum_{i=1}^d v_i e_i$, where $v_i$ denotes the coordinate of $v$ in the standard basis.
We let $|v_i^\downarrow|$ denote the $i$th largest element (in nonincreasing order) of the sequence $(|v_j| : j \in [d])$, and define
$
  v^\downarrow \coloneq \sum_{i=1}^d |v_i^\downarrow| e_i.
$
For $v\in\bbC^d$ and $1\leq j \leq d$, set
\begin{align}\label{eq:s_defn}
  s^{v}_j \;\coloneq\; \sum_{i=j}^d |v_i^\downarrow|.
\end{align}
The set of linear operators on $\bbC^d$ is denoted $\cL(\bbC^d)$, and the set of density matrices on $\bbC^d$ (positive semidefinite with unit trace) is denoted $\cD(\bbC^d)$.
We use $v,w,\dots$ for vectors in $\bbC^d$ (which also represent pure states), and $\rho,\sigma,\dots$ for density matrices in $\cD(\bbC^d)$. Matrix norms are Schatten $p$-norms throughout.
Given $v \in \bbC^d$ and $\cA \subseteq [d]$, we let $v_\cA$ denote the vector obtained by keeping coordinates $v_i$ for $i \in \cA$ and setting the rest to zero.
If $\cA$ is a contiguous block $\cA_{i,j} = \{i,i+1,\dots,j\}$, we also write $v_{i:j} \coloneq v_{\cA_{i,j}}$.
For $p \in \bbZ_{>0}$ we denote the $\ell_p$ norm of $v$ by $\norm{v}_p$, and $\norm{v}_0$ is the number of nonzero coordinates of $v$ in the standard basis.
Finally, for a set $S$ and scalar $a$, the rescaled set is $aS = \{as : s \in S\}$, and for a point $g$ we write $g+S = \{g+s : s \in S\}$.
Other notation will be introduced as needed.

\section{Quantifying coherence and entanglement}\label{sec:distance_measures}
Diagonal density matrices represent classical states and can be viewed as mixtures of 1-sparse pure states: the standard basis vectors. By considering mixtures of $k$-sparse states, we can generate increasingly non-classical states, as measured by the resource theory of coherence.
\begin{dfn}[$k$-sparse mixture, $\cI_k$]
  A pure state $v \in \bbC^d$ is \emph{$k$-sparse} if $\norm{v}_0\leq k$.
  A mixed state $\rho \in \cD(\bbC^d)$ is a \emph{$k$-sparse mixture} if
  $
    \rho = \sum_{j}p_j u_j u_j^\dagger
  $
  for some $k$-sparse pure states $u_j\in \bbC^d$ and probabilities $p_j>0,\ \sum p_j = 1$\footnote{In more conventional quantum notation we write $\rho = \sum_j p_j \dyad{u_j}$.}. We denote the set of all $k$-sparse mixtures by $\cI_k(\bbC^d)$, or simply $\cI_k$ when the Hilbert space is evident.
\end{dfn}
The notation $\cI_k$ is in reference to previous work which refers to this as the set of ``incoherent" states~\cite{kcoh18}. Although this definition is basis dependent, it is closely related to other basis-independent objects.
For example, another familiar resource in quantum information is bipartite entanglement, which can be measured by the Schmidt rank.
A bipartite state $v\in \bbC^a\otimes \bbC^b$ has a Schmidt decomposition given by
\begin{align}\label{eq:schmidt_decomp}
  v = \sum_{i=1}^r \lambda_i\, x_i \ot y_i,
\end{align}
for some positive integer $r\leq \min\{a,b\}$, orthonormal bases $\{x_i\}\subset \bbC^a$ and $\{y_i\}\subset \bbC^b$, and Schmidt coefficients $\lambda_i > 0$, $\sum\lambda_i^2 = 1$.
\begin{dfn}[Schmidt number $k$, $\cS_k$]
  A pure state $v$ has \emph{Schmidt rank $k$} if it has $k$ nonzero Schmidt coefficients. A mixed state $\rho \in \cD(\bbC^a \ot \bbC^b)$ has \emph{Schmidt number $k$} if
  $
    \rho = \sum_{j} p_j u_j u_j^\dagger
  $
  for some pure states $u_j\in\bbC^a\otimes\bbC^b$ with Schmidt rank at most $k$ and probabilities $p_j>0$, $\sum p_j=1$, and where at least one state in the mixture has Schmidt rank $k$. We denote the set of mixed states with Schmidt number at most $k$ by $\cS_k(\bbC^a,\bbC^b)$ or $\cS_k$ when the Hilbert spaces are evident.
\end{dfn}
The Schmidt number for density matrices was originally introduced and studied in~\cite{terhal2000schmidtnumber}. We next briefly recall standard quantum distance measures; see e.g. \cite[{Ch.~5}]{nielsen_chuang}.
The \emph{fidelity} between states $\rho,\sigma$ is
$
  F(\rho,\sigma) = {\lVert\sqrt{\rho}\sqrt{\sigma}\rVert}_1.
$
If $\sigma = vv^\dagger$ is pure, then $F(\rho,\sigma) = \sqrt{\langle v, \rho v\rangle}$, and if $\rho=ww^\dag$ is also pure we have $F(vv^\dag, ww^\dag) = |\langle v,w\rangle|$.
The \emph{trace distance} is
$
  T(\rho,\sigma) = \tfrac12 \norm{\rho-\sigma}_1.
$
When convenient, we abbreviate $F(v,w) := F(vv^\dagger, ww^\dagger)$ and similarly for $T$.
We also use $F(v,\rho)$ and $T(v,\rho)$ for comparisons between a mixed state $\rho$ and a pure state given as a unit vector $v$.  For two pure states $v,w$ these are related by $F(v,w)^2 + T(v,w)^2=1$.  More generally, the Fuchs-van de Graaf inequalities relate $F$ and $T$ by
\be\label{eq:fvdg}
1-F \leq T \leq \sqrt{1-F^2}.
\ee
If one of the two states is pure, the lower bound can be improved, and we have
\be\label{eq:fvdg_pure}
1 - F^2 \leq T \leq \sqrt{1-F^2}.
\ee
We are interested in the maximum fidelity (or minimum trace distance) achievable by approximating a state with one that has limited coherence or entanglement. The lower bound in \cref{eq:fvdg_pure} implies that, for pure states, we cannot hope to attain more than a quadratic advantage using a mixed approximation. Here and throughout, we let $d,a,b\in\bbZ_{>0}$ such that $a,b\leq d$.
\begin{dfn}[Fidelity with $\cI_k$ or $\cS_k$]
  For $\rho \in \cD(\bbC^d)$, let
  $
    F_k(\rho) := \max_{\sigma \in \cI_k} F(\rho,\sigma).
  $
  For $\rho \in \cD(\bbC^a \ot \bbC^b)$, let
  $
    F_k^{(E)}(\rho) := \max_{\sigma \in \cS_k} F(\rho,\sigma).
  $
\end{dfn}
As in the case of $F(\cdot,\cdot)$, if $v$ is a unit vector corresponding to a pure state we abbreviate $F_k(vv^\dag)$ by $F_k(v)$ and similarly for $F^{(E)}_k(\cdot)$. These measures are straightforward to compute for pure states.
\begin{lem}\label{lem:fidelity_equals_top_k}
  If $v\in\bbC^d$ is pure then for any positive integer $k\leq d$ it holds that
  $
    F_k(v)^2 = \sum_{i=1}^k |v_i^{\downarrow}|^2.
  $
\end{lem}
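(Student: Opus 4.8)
The plan is to reduce the optimization over mixed $k$-sparse states to an optimization over pure $k$-sparse states, and then to a purely combinatorial choice of support. First I would use the formula recalled in \cref{sec:distance_measures}: for a pure target $\rho = vv^\dag$ we have $F(vv^\dag,\sigma)=\sqrt{\langle v,\sigma v\rangle}$, so $F_k(v)^2 = \max_{\sigma\in\cI_k}\langle v,\sigma v\rangle$. The objective $\sigma\mapsto\langle v,\sigma v\rangle = \tr(\sigma\, vv^\dag)$ is linear in $\sigma$. Since any $\sigma\in\cI_k$ is a convex combination $\sigma=\sum_j p_j u_j u_j^\dag$ of pure $k$-sparse states $u_j$, linearity gives $\langle v,\sigma v\rangle = \sum_j p_j\langle v,u_ju_j^\dag v\rangle \le \max_j\langle v,u_ju_j^\dag v\rangle$, so the maximum is attained at a pure $k$-sparse state. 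Hence $F_k(v)^2 = \max\{|\langle w,v\rangle|^2 : \norm{w}_0\le k,\ \norm{w}_2=1\}$.

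Next I would fix the support set $\cA = \{i : w_i\neq 0\}$, so $|\cA|\le k$. Because $w$ is supported on $\cA$ we have $\langle w,v\rangle = \langle w,v_\cA\rangle$, and Cauchy--Schwarz gives $|\langle w,v_\cA\rangle|\le\norm{w}_2\norm{v_\cA}_2 = \norm{v_\cA}_2$, with equality for $w=v_\cA/\norm{v_\cA}_2$ (or any unit vector supported on $\cA$ if $v_\cA=0$). Thus the inner maximization over unit vectors $w$ with $\supp(w)\subseteq\cA$ equals $\norm{v_\cA}_2^2 = \sum_{i\in\cA}|v_i|^2$. It then remains to maximize $\sum_{i\in\cA}|v_i|^2$ over all $\cA\subseteq[d]$ with $|\cA|\le k$; since all terms are nonnegative, the optimum is achieved by taking $\cA$ to index the $k$ largest values $|v_i^\downarrow|$, which yields $\sum_{i=1}^k|v_i^\downarrow|^2$ and proves the claim.

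There is no genuine obstacle here --- the argument is short and self-contained. The only point meriting a word of care is the first step: one should not invoke ``extreme points'' of $\cI_k$ too casually, since the pure $k$-sparse states form a continuum rather than a finite vertex set; the clean justification is simply the linearity/averaging bound $\langle v,\sigma v\rangle\le\max_j\langle v,u_ju_j^\dag v\rangle$ applied to any decomposition of $\sigma$, together with the fact that pure $k$-sparse states themselves lie in $\cI_k$ so the bound is tight.
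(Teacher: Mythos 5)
Your proof is correct and follows the same route the paper takes: the paper states this lemma without a written proof, but its proof of the Schmidt-rank analogue (\cref{lem:fidelity_ent_equals_top_k}) uses exactly your first step — linearity of $\sigma\mapsto\langle v,\sigma v\rangle$ over the convex set of $k$-sparse mixtures to reduce to pure $k$-sparse states — followed by the rank-$k$ optimization, which there is handled by Eckart--Young and in your sparse setting is handled by the elementary Cauchy--Schwarz-plus-support argument. Your closing remark is also well taken: the averaging bound over an arbitrary decomposition of $\sigma$ is the cleanest way to justify the reduction, and your argument is complete as written.
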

Similarly, if $v\in\bbC^a\otimes \bbC^b$ is a bipartite state with a Schmidt decomposition of the form in \cref{eq:schmidt_decomp}, we have the following.
\begin{lem}\label{lem:fidelity_ent_equals_top_k}
  If $v$ is of the form given in \cref{eq:schmidt_decomp} then for any positive integer $k\leq r$ it holds that
  $
    F_k^{(E)}(v)^2 = \sum_{i=1}^k \lambda_i^2.
  $
\end{lem}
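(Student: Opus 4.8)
The plan is to follow the same template as the proof of \Cref{lem:fidelity_equals_top_k}, with the single change that ``restricting $v$ to $k$ of its coordinates'' is replaced by ``restricting $v$ to a $k$-dimensional subspace of $\bbC^a$ on the first tensor factor.'' The first step is to reduce the optimization over mixed $\sigma\in\cS_k$ to an optimization over pure states of Schmidt rank at most $k$. Indeed, for any $\sigma=\sum_j p_j u_j u_j^\dag\in\cS_k$ we have $F(v,\sigma)^2=\langle v,\sigma v\rangle=\sum_j p_j\,|\langle v,u_j\rangle|^2\leq \max_j |\langle v,u_j\rangle|^2$, and each $u_j$ has Schmidt rank at most $k$. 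Hence $F_k^{(E)}(v)^2=\max\{\,|\langle v,w\rangle|^2 : w\text{ has Schmidt rank at most }k\,\}$, and it remains to evaluate this maximum.

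For the upper bound, I would use the standard characterization that a pure state $w$ has Schmidt rank at most $k$ iff its reduced density matrix $\tr_B(ww^\dag)$ has rank at most $k$; letting $P$ be the projector onto the support of $\tr_B(ww^\dag)$ we get $(P\ot\mathds{1})w=w$ with $\rank(P)\leq k$. Then $\langle v,w\rangle=\langle (P\ot\mathds{1})v, w\rangle$, so Cauchy--Schwarz gives $|\langle v,w\rangle|^2\leq \norm{(P\ot\mathds{1})v}_2^2=\tr(P\rho_A)$, where $\rho_A=\tr_B(vv^\dag)=\sum_{i=1}^r\lambda_i^2\, x_i x_i^\dag$. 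Maximizing $\tr(P\rho_A)$ over rank-$\le k$ projectors $P$ yields $\sum_{i=1}^k\lambda_i^2$, since the eigenvalues of $\rho_A$ in nonincreasing order are $\lambda_1^2\geq\cdots\geq\lambda_r^2$. This proves $F_k^{(E)}(v)^2\leq\sum_{i=1}^k\lambda_i^2$.

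For the matching lower bound, since $k\leq r$ the coefficients $\lambda_1,\dots,\lambda_k$ are all positive, so $w^\star\coloneq\big(\sum_{i=1}^k\lambda_i^2\big)^{-1/2}\sum_{i=1}^k\lambda_i\, x_i\ot y_i$ is a well-defined unit vector of Schmidt rank exactly $k$, hence $w^\star(w^\star)^\dag\in\cS_k$; a one-line computation gives $\langle v,w^\star\rangle=\big(\sum_{i=1}^k\lambda_i^2\big)^{1/2}$, so $F(v,w^\star(w^\star)^\dag)^2=\sum_{i=1}^k\lambda_i^2$. I do not expect a serious obstacle here: the only point requiring a little care is invoking the equivalence between Schmidt rank at most $k$, the marginal having rank at most $k$, and being fixed by $P\ot\mathds{1}$ for a rank-$k$ projector $P$; the rest is linearity and a Cauchy--Schwarz estimate. (Alternatively, one could phrase the whole argument symmetrically on the $\bbC^b$ factor, which changes nothing.)
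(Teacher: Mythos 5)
Your proof is correct, and its overall skeleton matches the paper's: both first reduce the maximization over $\cS_k$ to a maximization of $|\langle v,w\rangle|^2$ over pure $w$ of Schmidt rank at most $k$ (you do this by averaging over the mixture, the paper by appealing to extreme points of the convex set $\cS_k$ — the same content). The difference is in how the resulting pure-state optimization is evaluated. The paper simply cites the Eckart--Young theorem to conclude that the optimum is the state obtained by keeping the top $k$ Schmidt coefficients. You instead give a direct argument: writing $w=(P\ot\mathds{1})w$ for a rank-$\le k$ projector $P$ onto the support of $\tr_B(ww^\dag)$, using Cauchy--Schwarz to get $|\langle v,w\rangle|^2\le\tr(P\rho_A)$, invoking the standard fact (Ky Fan's maximum principle) that this is at most the sum of the top $k$ eigenvalues $\lambda_1^2+\cdots+\lambda_k^2$ of $\rho_A$, and exhibiting the normalized truncation $w^\star$ to show the bound is attained. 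What this buys you is a self-contained proof that also makes the optimizer explicit; what it costs is that you still lean on a standard-but-nontrivial spectral fact of essentially the same weight as the citation the paper uses, so the two routes are of comparable depth. One small point of care, shared with the paper: the formula $\sum_{i=1}^k\lambda_i^2$ implicitly assumes the Schmidt coefficients in \cref{eq:schmidt_decomp} are listed in nonincreasing order, which you use when identifying the top $k$ eigenvalues of $\rho_A$ with $\lambda_1^2,\dots,\lambda_k^2$.
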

\begin{proof}
  By definition, $F_k^{(E)}(v)^2 = \max_{\sigma\in\cS_k}\langle v, \sigma v\rangle$. The function $\sigma\mapsto \langle v,\sigma v\rangle$ is linear and the set $\cS_k$ is convex so it suffices to consider a maximization over the extreme points of $\cS_k$. Thus, $F_k^{(E)}(v)^2 = \max_{\varphi}|\langle \varphi, v\rangle|^2$ where the maximization is over pure states on $\bbC^a\otimes \bbC^b$ having Schmidt rank at most $k$. The optimal solution to this problem is attained by the state proportional to keeping the top $k$ Schmidt coefficients, by Eckart-Young Theorem~\cite{Eckart1936,mirsky1960symmetric}.
\end{proof}
These facts agree with the intuition that if we want to maximize the overlap with a $k$-sparse state, we should keep the $k$ largest coordinates. We can also phrase this in terms of an appropriate norm, whose appearance in the context of truncation is natural.
\begin{dfn}[Top-$k$ norm]
  For $v\in\bbR^d$ and $1 \leq k \leq d$, the \emph{top-$k$ norm} is
  \begin{align}
    \norm{v}_{(k)} \;\coloneq\; \Bigl(\sum_{i=1}^k |v_i^\downarrow|^2\Bigr)^{1/2}.
  \end{align}
\end{dfn}
With this terminology, Lemmas~\ref{lem:fidelity_equals_top_k} and~\ref{lem:fidelity_ent_equals_top_k} state that (assuming $v$ is real)
$
  F_k(v) = {\lVert v\rVert}_{(k)}
$
and, if $v$ is bipartite, $F_k^{(E)}(v) = {\lVert\lambda\rVert}_{(k)}$ where $\lambda$ is a vector of Schmidt coefficients.
The story is more interesting for optimal trace-distance approximations.
\begin{dfn}[Trace distance with $\cI_k$ or $\cS_k$]
  For $\rho \in \cD(\bbC^d)$, let
  $
    T_k(\rho) \coloneq \min_{\sigma \in \cI_k} T(\rho,\sigma).
  $
  For $\rho \in \cD(\bbC^a \ot \bbC^b)$, let
  $
    T_k^{(E)}(\rho) \coloneq \min_{\sigma \in \cS_k} T(\rho,\sigma).
  $
\end{dfn}


\medskip

Finally, we recall the \emph{robustness} of coherence and entanglement \cite{VT99, kcoh18}, which can be thought of as measuring a multiplicative approximation error by states in $\cI_k$ or $\cS_k$.

\begin{dfn}[Robustness with $\cI_k$ or $\cS_k$]\label{def:robustness}
  For $\rho \in \cD(\bbC^d)$,
  \begin{align}
    R_k(\rho) := \min \Bigl\{ s\geq 0 :  \frac{\rho+s\sigma}{1+s} \in \cI_k,\ \sigma \in \cD(\bbC^d), \Bigr\}.
    \label{eq:Rk-def}
  \end{align}
  For $\rho \in \cD(\bbC^a \ot \bbC^b)$,
  \begin{align}
    R_k^{(E)}(\rho) := \min \Bigl\{ s\geq 0 : \frac{\rho+s\sigma}{1+s} \in \cS_k,\ \sigma \in \cD(\bbC^a \ot \bbC^b) \Bigr\}.
  \end{align}
\end{dfn}
The robustness gives a bound on the trace distance through
\begin{align}\label{eq:robustness_upper_bound}
  T_k(\rho) \leq \frac{R_k(\rho)}{1 + R_k(\rho)}
\end{align}
which follows by taking the trace distance between $\rho$ and the optimum in~\cref{eq:Rk-def}. There is an alternative notion of the robustness which arises when one restricts the feasible set in the minimization in \Cref{def:robustness} to density matrices $\sigma$ which are themselves in $\cI_k$, or $\cS_k$. However, in \cite{kcoh18} (building on \cite{AFS12,Regula18}) it was proven that for pure states, these notions coincide. Moreover, in this case the robustness can be computed using the dual norm to the top-$k$ norm. We now define this norm, and collect some of its relevant properties. Below, we write $\binom{\cA}{k}$ for the family of $k$-element subsets of a finite set $\cA$.
\begin{dfn}[$k$-support norm]\label{def:k-support}
  The \emph{$k$-support norm} is the dual of the top-$k$ norm. An equivalent definition is
  \begin{align}
    \norm{v}_{(k,*)} \;\coloneq\; \min
    \ \Bigl\{\ \sum_{S\in {[d]\choose k}}\norm{w_S}_2 \ :\ v=\sum_{S\in {[d]\choose k}} w_S\ \Bigr\}
    \label{eq:k-support-norm}
  \end{align}
  for any $v\in\bbR^d$ and $1 \leq k \leq d$.
\end{dfn}
This norm was introduced in the context of sparse prediction problems in~\cite{AFS12}. We refer the interested reader to~\cite{Regula18} and~\cite[Ch.~4]{Bhatia1997} for additional background. That $\norm{\cdot}_{(k)}$ and $\norm{\cdot}_{(k,*)}$ are dual to one another follows from the observation that the $k$-support norm is a gauge function for the set
\begin{align}
  C_k\;\coloneq\;\conv\bigl\{\,w\in\bbR^d:\ \norm{w}_0=k,\ \norm{w}_2\leq1\,\bigr\},
  \label{eq:Ck-def}
\end{align}
This implies $C_k$ is the unit ball for the $k$-support norm, and the top-$k$ norm is indeed the maximum inner product with a vector from this set.
For $k=1$ and $k=d$ one recovers the $\ell_1$ and $\ell_2$ norms, respectively.  \cref{eq:k-support-norm,eq:Ck-def} do not suggest an efficient algorithm for computing $\norm{v}_{(k,*)}$.
However, there is a closed-form expression for $\norm{v}_{(k,*)}$ which 
can be computed in $O(d\log d)$ time.
\begin{lem}[{\cite[Prop.~2.1]{AFS12}}]\label{lem:k_supp_efficient_expr}
  For every $v\in\bbR^d$,
  \begin{align}
    \norm{v}_{(k,*)}^2 \;=\; \sum_{i=1}^{k-r-1} |v_i^\downarrow|^2 \;+\; \frac{\bigl(s^{v}_{k-r}\bigr)^2}{r+1},
  \end{align}
  where $r\in\{0,1,\dots,k-1\}$ is the unique\footnote{The uniqueness of the choice of $r$ satisfying the condition in \cref{eq:condition k support norm} is stated as Lemma~3.1 in \cite{govind1985unitarily}.}  index satisfying
  \be\label{eq:condition k support norm}
  |v_{k-r-1}^\downarrow|\;>\;\frac{s^{v}_{k-r}}{r+1}\;\ge\;|v_{k-r}^\downarrow|
  \ee
  (with the convention $|v_0^\downarrow|:=+\infty$), and where $s^{v}_j$ is the partial sum defined in \cref{eq:s_defn}.
\end{lem}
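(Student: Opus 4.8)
Here is how I would prove \cref{lem:k_supp_efficient_expr}.

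\textbf{Plan.} The plan is to compute $\norm{v}_{(k,*)}$ through duality and then reduce it to a one-dimensional isotonic-regression (projection) problem. Since $\norm{\cdot}_{(k,*)}$ is by definition the dual of the top-$k$ norm, we have $\norm{v}_{(k,*)} = \max\{\langle v,u\rangle : \norm{u}_{(k)}\leq 1\}$. The top-$k$ ball is invariant under permutations and sign changes of the coordinates, so I would first argue that we may assume $v = v^\downarrow$ has nonnegative entries in nonincreasing order; then, replacing $u$ by its entrywise absolute value and sorting it in nonincreasing order only increases $\langle v,u\rangle$ while preserving $\norm{u}_{(k)}$ (rearrangement inequality), so we may also take $u = u^\downarrow \geq 0$. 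For such $u$ we have $\norm{u}_{(k)}^2 = \sum_{i=1}^k u_i^2$, and since $v\geq 0$, once $u_1,\dots,u_k$ are fixed the objective is maximized by taking $u_i = u_k$ for all $i>k$. Using $\sum_{i=k}^d |v_i^\downarrow| = s^{v}_{k}$, this turns the problem into the finite-dimensional program
\[
  \norm{v}_{(k,*)} = \max\Bigl\{ \langle c,u\rangle : u_1\geq\cdots\geq u_k\geq 0,\ \textstyle\sum_{i=1}^{k} u_i^2\leq 1\Bigr\},
\]
where $c = (|v_1^\downarrow|,\dots,|v_{k-1}^\downarrow|,\, s^{v}_{k})\in\bbR^k_{\geq 0}$.

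\textbf{Key step.} Next I would invoke the classical fact that the maximum of a linear functional $\langle c,\cdot\rangle$ over the intersection of a closed convex cone $K$ with the unit ball equals $\norm{P_K(c)}_2$, where $P_K$ denotes Euclidean projection onto $K$, attained at $u = P_K(c)/\norm{P_K(c)}_2$; this follows from the Moreau decomposition $c = P_K(c) + P_{K^\circ}(c)$ together with Cauchy--Schwarz. Here $K$ is the monotone nonnegative cone $\{u_1\geq\cdots\geq u_k\geq 0\}$, so $P_K(c)$ is precisely the isotonic regression of $c$. Because $c$ is already nonincreasing on its first $k-1$ coordinates and only its last coordinate $c_k = s^{v}_{k}$ can violate monotonicity, the pool-adjacent-violators algorithm merges a suffix $\{k-r,\dots,k\}$ of $c$ into its common average $\frac{1}{r+1}\sum_{i=k-r}^{k} c_i = \frac{s^{v}_{k-r}}{r+1}$ and leaves the first $k-r-1$ coordinates fixed. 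I would check that the algorithm's stopping conditions---that the suffix average not exceed the preceding entry and be at least the first entry of the suffix---are exactly the two inequalities in \cref{eq:condition k support norm}, with uniqueness of $r$ supplied by the cited Lemma~3.1 of~\cite{govind1985unitarily}. Then
\[
  \norm{v}_{(k,*)}^2 = \norm{P_K(c)}_2^2 = \sum_{i=1}^{k-r-1}|v_i^\downarrow|^2 + (r+1)\Bigl(\frac{s^{v}_{k-r}}{r+1}\Bigr)^2 = \sum_{i=1}^{k-r-1}|v_i^\downarrow|^2 + \frac{(s^{v}_{k-r})^2}{r+1},
\]
which is the claimed formula; the running time is dominated by sorting $v$, i.e.\ $O(d\log d)$, with an extra $O(d)$ to locate $r$ and evaluate the expression.

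\textbf{Main obstacle.} The delicate part is the last step: one must verify carefully that the pooled vector really is the isotonic projection $P_K(c)$---equivalently, check the KKT/optimality conditions of the finite-dimensional program at the candidate point and rule out any other pooling pattern---and confirm that the pooling thresholds coincide exactly with \cref{eq:condition k support norm} and pin down a unique $r$. The remaining ingredients are routine: the duality identity is the definition of $\norm{\cdot}_{(k,*)}$, the reduction to the monotone cone uses only the invariance of the top-$k$ ball and rearrangement inequalities, and the ``$\max$ over cone-intersect-ball equals $\norm{P_K(c)}_2$'' identity is standard convex analysis.
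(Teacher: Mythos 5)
Your proposal is correct, but it reaches the formula by a genuinely different route than the paper. The paper does not prove \cref{lem:k_supp_efficient_expr} at all: it is imported from \cite[Prop.~2.1]{AFS12}, and the surrounding text instead records the Fenchel-conjugate viewpoint (the optimal value of \cref{eq:k_supp_opt_problem} is $\tfrac12\norm{v}_{(k,*)}^2$ by \Cref{lem:fenchel_conjugate_of_norm}, and the structured maximizer of \Cref{lem:optimal_soln_to_k_supp} -- keep the top $k-r-1$ entries, flatten the rest to $s^v_{k-r}/(r+1)$ -- then gives the formula by substitution). You instead evaluate the dual norm directly as a support function: symmetrize so $v=v^\downarrow\geq 0$ and $u=u^\downarrow\geq 0$, collapse the tail to the common value $u_k$ (which is exactly where $c_k=s^v_k$ appears, and where keeping the monotonicity constraint is essential), and then use the Moreau decomposition to turn the maximization over the monotone nonnegative cone intersected with the unit ball into computing $\norm{P_K(c)}_2$, which isotonic regression (PAVA) identifies as ``pool a suffix to its average.'' The KKT verification you flag as the delicate step does go through: for the candidate with pooled value $\theta=s^v_{k-r}/(r+1)$ the multipliers are the partial sums $\sum_{j\geq k-r}(\theta-c_j)$, which are nonnegative precisely when $\theta\geq |v^\downarrow_{k-r}|$, the boundary constraint is $|v^\downarrow_{k-r-1}|\geq\theta$, and the last multiplier vanishes because $\theta$ is the block average; note the projection argument naturally yields weak inequalities on both sides, the strict inequality in \cref{eq:condition k support norm} only serving to select a canonical $r$ (ties give the same projection and hence the same value), which is consistent with your and the paper's deferral to \cite{govind1985unitarily}. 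What each approach buys: yours is self-contained and explains conceptually where the averaging structure and the threshold condition come from (they are literally the PAVA merge and its optimality conditions), while the paper's conjugate-based route also delivers the explicit maximizer of \cref{eq:k_supp_opt_problem}, which is what is actually reused downstream in the trace-distance analysis (\Cref{lem:m_restricted_form}); your maximizing $u$ (the normalized projection, extended constantly on the tail) carries essentially the same information up to scaling.
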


The following optimization view underlies \cref{lem:k_supp_efficient_expr}.
For $v\in\bbR^d$, the optimal value for
\begin{equation}\label{eq:k_supp_opt_problem}
  \begin{aligned}
    \textnormal{max.}\quad & \langle v,x\rangle - \tfrac12 \norm{x}_{(k)}^2 \\
    \textnormal{s.t.}\quad & x\in\bbR^d
  \end{aligned}
\end{equation}
is equal to $\tfrac{1}{2}\norm{v}_{(k,*)}^2$. This is a standard fact about dual norms, stated explicitly in \Cref{lem:fenchel_conjugate_of_norm}.  (See \Cref{sec:convex} for a brief overview of the convex analysis relevant in this work.) \Cref{lem:k_supp_efficient_expr} is then an easy consequence of the following, which is essentially shown in the proof of Proposition~2.1 in \cite{AFS12}. We give the full argument in \Cref{sec:k_supp_lem_proof}.
\begin{lem}\label{lem:optimal_soln_to_k_supp}
  Let $v\in\bbR^d$ and let $\ell$ be the smallest index with $v_\ell^\downarrow=0$ (set $\ell=d+1$ if none). Assume $\ell\geq k+1$.
  A vector $x$ is optimal for the maximization in~\cref{eq:k_supp_opt_problem} if and only if, with $r$ as in \Cref{lem:k_supp_efficient_expr},
  \begin{align}\label{eq:k_supp_opt_soln_condn}
    x_j^\downarrow=\begin{cases}
                     v_j^\downarrow,             & j=1,\dots,k-r-1,    \\[2pt]
                     \;\dfrac{s^{v}_{k-r}}{r+1}, & j=k-r,\dots,\ell-1.
                   \end{cases}
  \end{align}
\end{lem}
For pure states, the robustness is directly related to the $k$-support norm.  This was proven in \cite{Regula18} (see also \cite{kcoh18}) and our construction in \Cref{sec:robustness} also yields a simple alternate proof.
\begin{thm}[$R_k$ for pure states \cite{Regula18}] \label{thm:Rk-value}
  If $v \in \bbR^d$ is a pure state,
  \be
  R_k(v) = \norm{v}_{(k,*)}^2 - 1.
  \ee
  Also, if $v \in \bbC^a \ot \bbC^b$ has a vector of Schmidt coefficients $\lambda$,
  $
    R_k^{(E)}(v) = \norm{\lambda}_{(k,*)}^2 - 1.
  $
\end{thm}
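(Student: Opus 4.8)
The plan is to sandwich the quantity in \eqref{eq:Rk-def} between $\|v\|_{(k,*)}^2-1$ on both sides, exploiting the two complementary descriptions of the $k$-support norm: its min-over-decompositions form \eqref{eq:k-support-norm} for the upper bound, and its role as the dual of the top-$k$ norm for the lower bound. First I would reformulate the robustness as a minimization over rank-one decompositions. If $(s,\sigma)$ is feasible in \eqref{eq:Rk-def} and $\tau:=(\rho+s\sigma)/(1+s)\in\cI_k$, then $\rho\preceq(1+s)\tau$; conversely, if $\rho\preceq t\tau$ with $\tau\in\cI_k$, then $\tr\rho\le t\tr\tau$ forces $t\ge 1$, and $\sigma:=(t\tau-\rho)/(t-1)$ is a density matrix realizing $s=t-1$. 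Hence $R_k(\rho)+1=\min\{t:\rho\preceq t\tau,\ \tau\in\cI_k\}$, with the minimum attained because $\cI_k$ is compact. Writing $t\tau$ as a finite nonnegative combination of rank-one projectors onto $k$-sparse unit vectors and absorbing the weights into the vectors, so that $t=\sum_j\|u_j\|_2^2$, and specializing to $\rho=vv^\dag$, this becomes
\begin{equation}\label{eq:plan-decomp}
  R_k(v)+1 \;=\; \min\Bigl\{\sum_j\|u_j\|_2^2 \;:\; vv^\dag\preceq\sum_j u_ju_j^\dag,\ \|u_j\|_0\le k\Bigr\},
\end{equation}
and identically for $R_k^{(E)}$ with ``$\|u_j\|_0\le k$'' replaced by ``$u_j$ has Schmidt rank $\le k$''.

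\textbf{The two bounds.} For the upper bound I would take an optimal decomposition $v=\sum_{S\in\binom{[d]}{k}}w_S$ achieving $\sum_S\|w_S\|_2=\|v\|_{(k,*)}$ (discarding zero terms) and apply the elementary operator inequality $\bigl(\sum_i x_i\bigr)\bigl(\sum_i x_i\bigr)^\dag\preceq\bigl(\sum_i a_i\bigr)\bigl(\sum_i a_i^{-1}x_ix_i^\dag\bigr)$, valid for positive $a_i$ (it is just Cauchy--Schwarz applied to $|\langle\phi,\sum_i x_i\rangle|^2$). With $a_S=\|w_S\|_2$ this gives $vv^\dag\preceq\|v\|_{(k,*)}\sum_S\|w_S\|_2^{-1}w_Sw_S^\dag$, so $u_S:=\sqrt{\|v\|_{(k,*)}/\|w_S\|_2}\,w_S$ is feasible for \eqref{eq:plan-decomp} with $\sum_S\|u_S\|_2^2=\|v\|_{(k,*)}\sum_S\|w_S\|_2=\|v\|_{(k,*)}^2$, hence $R_k(v)+1\le\|v\|_{(k,*)}^2$. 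For the lower bound, since $\|\cdot\|_{(k,*)}$ is dual to $\|\cdot\|_{(k)}$, I would choose $x\in\bbR^d$ with $\|x\|_{(k)}=1$ and $\langle v,x\rangle=\|v\|_{(k,*)}$; for any feasible $\{u_j\}$, evaluating the quadratic form of $vv^\dag\preceq\sum_j u_ju_j^\dag$ on $x$ gives $\|v\|_{(k,*)}^2=\langle v,x\rangle^2\le\sum_j|\langle x,u_j\rangle|^2$. Each $u_j$ is supported on a set $S_j$ of size $\le k$, so $|\langle x,u_j\rangle|=|\langle x_{S_j},u_j\rangle|\le\|x_{S_j}\|_2\|u_j\|_2\le\|x\|_{(k)}\|u_j\|_2=\|u_j\|_2$, using that a sum of at most $k$ squared entries of $x$ is at most $\|x\|_{(k)}^2$. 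Thus $\sum_j\|u_j\|_2^2\ge\|v\|_{(k,*)}^2$, and combining with the upper bound and \eqref{eq:plan-decomp} yields $R_k(v)=\|v\|_{(k,*)}^2-1$.

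\textbf{The entanglement version.} I would run the identical argument in the Schmidt basis. For the upper bound, an optimal decomposition $\lambda=\sum_S w_S$ of the Schmidt-coefficient vector produces Schmidt-rank-$\le k$ vectors $v_S:=\sum_{i\in S}(w_S)_i\,x_i\otimes y_i$ with $\|v_S\|_2=\|w_S\|_2$ and $v=\sum_S v_S$, and the operator Cauchy--Schwarz step above goes through verbatim. For the lower bound I would take $x\in\bbR^r$ dual-optimal for $\lambda$ and lift it to $\hat x:=\sum_i x_i\,x_i\otimes y_i$, so that $\langle\hat x,v\rangle=\langle x,\lambda\rangle=\|\lambda\|_{(k,*)}$; under the vectorization identifying $\bbC^a\otimes\bbC^b$ with $a\times b$ matrices, $\hat x$ has singular values $\{|x_i|\}$ while each $u_j$ has at most $k$ nonzero singular values, so von Neumann's trace inequality followed by Cauchy--Schwarz gives $|\langle\hat x,u_j\rangle|\le\|x\|_{(k)}\|u_j\|_2=\|u_j\|_2$. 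The remaining steps are as before, giving $R_k^{(E)}(v)=\|\lambda\|_{(k,*)}^2-1$.

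\textbf{Main obstacle.} The only genuine content is the reformulation \eqref{eq:plan-decomp} — establishing the two-way equivalence between the original definition \eqref{eq:Rk-def} and the rank-one decomposition form, including attainment of the minimum and the clean passage from ``$(\rho+s\sigma)/(1+s)\in\cI_k$'' to the spectral dominance $vv^\dag\preceq\sum_j u_ju_j^\dag$ — and, in the entanglement case, checking that the support-size estimate transfers faithfully to Schmidt rank once one tracks the conjugations in the Schmidt decomposition through the vectorization. Everything else reduces to two applications of Cauchy--Schwarz, one in vector form and one (via von Neumann's inequality) in matrix form; this is essentially the route by which the cited works tie the robustness to the $k$-support/top-$k$ norm pair.
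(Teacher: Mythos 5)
Your argument is correct, and it is genuinely different from what the paper itself does: the paper does not prove \Cref{thm:Rk-value} at all but imports it from \cite{kcoh18}, and the closest in-paper argument is \Cref{thm:robust_tau}, which re-derives only the achievability direction constructively. Your route first recasts \cref{eq:Rk-def} as the rank-one decomposition problem $R_k(v)+1=\min\{\sum_j\norm{u_j}_2^2 : vv^\dag\preceq\sum_j u_ju_j^\dag,\ \norm{u_j}_0\le k\}$ and then sandwiches it using the two faces of the $k$-support norm: the decomposition formula \cref{eq:k-support-norm} plus the operator Cauchy--Schwarz inequality for the upper bound, and duality with the top-$k$ norm (von Neumann's trace inequality in the Schmidt-rank case) for the lower bound. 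This is essentially the convex-duality route of the cited literature, and it is self-contained and covers both directions, which the present paper never does. By contrast, the paper's \Cref{thm:robust_tau} proves $vv^\dag\preceq(1+R_k)\tau$ for an explicit $\tau\in\cI_k$ built from the states $u(S)$ of \cref{eq:us} with max-entropy marginals $q_i\propto v_i$, using negative correlation (\Cref{lem:negative_correlation}) and the diagonally-dominant-implies-$\cI_2$ lemma of \cite{kcoh18}; what that buys, and what your existential construction does not, is an efficiently computable and samplable optimal ensemble (the algorithmic content of \Cref{thm:robust_tau,thm:robust_ensemble}), whereas your decomposition-based witness $u_S=\sqrt{\norm{v}_{(k,*)}/\norm{w_S}_2}\,w_S$ proves the value but says nothing about efficient representation or sampling. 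Two small patches you should make explicit: in the reformulation, handle $t=1$ separately (your formula $\sigma=(t\tau-\rho)/(t-1)$ degenerates there, but then $\rho=\tau\in\cI_k$ and $s=0$ is feasible), and note that the decomposition in \cref{eq:k-support-norm} carries the implicit constraint $\mathrm{supp}(w_S)\subseteq S$ (as in \cite{AFS12}), which is exactly what makes your $u_S$ $k$-sparse.
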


\paragraph{Sorting entries.} A useful symmetry is that the distance measures we use, as well as our notions of sparsity, are invariant under permutation and phases.  Specifically, for the three distance measures introduced in this section and,  for any $v \in \bbC^d$,
\be\label{eq:perm_diag_invariance}
F_k(v) = F_k(v^{\downarrow}), \quad T_k(v) = T_k(v^{\downarrow}), \quad \textnormal{and}\quad R_k(v) = R_k(v^{\downarrow}).
\ee
Thus in all computations, we can assume without loss of generality that the entries are real, nonnegative, and ordered. Obtaining the optimal solutions for the optimization problems defining each of these quantities is then a straightforward matter of tracking the phases and ordering of entries in the original vector $v$, and we neglect this detail throughout this work for the sake of clarity.


\section{Computational hardness for mixed states}\label{sec:hardness}
How hard is it to compute approximations by sparse states or states with small Schmidt numbers?
For an arbitrary state $\rho \in \cD(\bbC^d)$ the problem is computationally intractable.
In fact, even weak membership of $\cI_k$ and $\cS_k$ is $\NP$-hard\footnote{Weak membership of a convex set $S$ is the task of deciding whether $\rho$ is in the $\eps$-interior or $\eps$-exterior of $S$ in Euclidean distance. Weak linear optimization is the task of approximately solving $\max_{\sigma\in S} \tr[M\sigma]$ up to additive error $\eps$.  These terms are defined in detail in \cite{GLS12}.}.
As a consequence, computing $F_k(\rho)$, $F_k^{(E)}(\rho)$, $T_k(\rho)$, and $T_k^{(E)}(\rho)$ is $\NP$-hard in general.

For entanglement, weak membership of $\cS_1$ is precisely the separability problem, which is known to be $\NP$-complete \cite{gurvits2003classical, gharibian2008strong}.
In contrast, checking 1-sparsity is trivial: $\rho\in\cI_1$ if and only if $\rho$ is diagonal in the standard basis.
However, for larger $k$, the problem again becomes $\NP$-hard.  
\begin{prop}
  Given $\rho \in \cD(\bbC^d)$ and $k$, the problem of weak membership of $\rho$ in $\cI_k$ is $\NP$-complete.
\end{prop}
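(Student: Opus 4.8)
The plan is to reduce from a known $\NP$-hard problem about classical set systems, most naturally from the hardness of weak membership in $\cS_1$ (separability) or directly from a combinatorial problem. The cleanest route I would take is to embed the separability problem into $\cI_k$ membership for suitable $k$. Recall that $\rho$ on $\bbC^a\ot\bbC^b$ is separable iff it lies in $\cS_1$, and this is $\NP$-hard to decide even in the weak sense. The key observation is that a product pure state $x\ot y\in\bbC^a\ot\bbC^b$, written in the standard (tensor) basis of $\bbC^{ab}$, has coordinates $(x_iy_j)_{i\in[a],j\in[b]}$, which need not be sparse at all. So a direct identification of $\cS_1$ with $\cI_k$ fails. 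Instead I would look for a gadget that forces the relevant pure states in any decomposition to be supported on a controlled number of coordinates.

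The approach I actually expect to work is a reduction from a purely classical $\NP$-hard problem, such as deciding whether a nonnegative matrix has nonnegative rank at most $k$, or an exact set cover / exact cover by $k$-sets type problem, which matches the combinatorial flavor of "can $\rho$ be written as a mixture of $k$-sparse pure states." Given an instance, one builds a diagonal-plus-low-rank density matrix $\rho$ whose off-diagonal pattern encodes the incidence structure of the instance: the support pattern of any pure state $uu^\dagger$ appearing in a decomposition $\rho=\sum_j p_j u_ju_j^\dagger$ must be an "independent set" in the sense dictated by which off-diagonal entries of $\rho$ vanish, since $(\rho)_{i\ell}=0$ forces $\sum_j p_j (u_j)_i\overline{(u_j)_\ell}=0$, and with care (e.g.\ making all entries real and nonnegative, or using phase constraints) one can promote this to the statement that no single $u_j$ has both $i$ and $\ell$ in its support. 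Then $\rho\in\cI_k$ becomes equivalent to covering the support of $\rho$ by allowed index sets of size $\le k$ with compatible weights, which is the $\NP$-hard instance. One then checks that the reduction is polynomial-time and that the "no" instances are bounded away from $\cI_k$ by a constant gap in Euclidean distance, so that weak membership (not just exact membership) is already hard.

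The steps, in order, would be: (1) fix the source problem and show it is $\NP$-hard; (2) describe the map from an instance to a state $\rho$, verifying $\rho\succeq 0$ and $\tr\rho=1$ and that the construction is polynomial size; (3) prove the forward direction — a "yes" instance yields an explicit $k$-sparse mixture equal to $\rho$; (4) prove the converse — from any decomposition $\rho=\sum_j p_j u_ju_j^\dagger$ with $\|u_j\|_0\le k$, extract a solution to the source instance, using the vanishing-off-diagonal argument to constrain supports and a positivity/rank argument to constrain the weights; (5) establish a constant Euclidean-distance gap for "no" instances so the hardness survives the weak-membership relaxation. The main obstacle is step (4): going from an arbitrary (possibly continuum, possibly complex, possibly overcomplete) decomposition into $k$-sparse pure states back to a discrete combinatorial certificate. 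The delicate point is that a single off-diagonal zero of $\rho$ does not by itself forbid a pure state with support on both indices — cancellations between different $u_j$ are allowed — so the gadget must be engineered (e.g.\ via sign patterns or by making the relevant $2\times 2$ principal minors rank-deficient in a way that is only consistent with each $u_j$ avoiding the forbidden pair) to rule this out; getting this rigidity while keeping $\rho$ PSD and the reduction polynomial is where the real work lies.
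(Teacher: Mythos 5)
Your proposal is a plan rather than a proof, and the gap you yourself flag in step (4) is exactly the part that is never resolved: a vanishing off-diagonal entry $\rho_{i\ell}=0$ only constrains the weighted sum $\sum_j p_j (u_j)_i\overline{(u_j)_\ell}$, so nothing prevents individual $k$-sparse states in the decomposition from straddling a ``forbidden'' pair, and you do not exhibit any gadget (sign pattern, minor-rank condition, or otherwise) that achieves the rigidity you need while keeping $\rho$ PSD. Without that, neither the converse direction of the reduction nor the robustness statement needed for \emph{weak} membership (your step (5)) is established. So as written the argument does not go through; it is a research program whose central difficulty is acknowledged but left open.

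For contrast, the paper's proof sidesteps the decomposition-rigidity issue entirely. By the Gr\"otschel--Lov\'asz--Schrijver equivalence, weak membership in the convex set $\cI_k$ is polynomial-time equivalent to weak linear optimization $\max_{\sigma\in\cI_k}\tr[M\sigma]$. Since the extreme points of $\cI_k$ are $k$-sparse pure states, this optimum equals $\max_{|S|=k}\norm{M_S}_\infty$ over $k\times k$ principal submatrices. Taking $M$ to be the adjacency matrix of a graph $G$, the value is $k-1$ if and only if $G$ contains a $k$-clique, and is otherwise bounded away from $k-1$ by an inverse polynomial in $k$, which supplies the gap needed for the weak version. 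Thus $k$-Clique reduces to weak optimization, hence to weak membership, with no need to reason about arbitrary mixtures at all. If you want to salvage your direct-encoding route, you would need a construction in which the support patterns of the pure states are forced (for instance by making certain diagonal entries zero, which genuinely kills coordinates, rather than relying on off-diagonal zeros), but the optimization-based argument is both shorter and already robust.
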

\begin{proof}
By Carath\'{e}odory's theorem, any element in $\cI_k$ can be written as a mixture of at most $d^2 + 1$ $k$-sparse states. This shows the existence of an efficiently verifiable witness for weak membership, so the problem is in $\NP$.
  To prove hardness, we use the fact that weak membership, weak separation, and weak linear optimization are polynomial-time equivalent for convex sets \cite{GLS12}.
  Thus it suffices to consider weak linear optimization over $\cI_k$, i.e.
  \begin{align}\label{eq:optimize over incoherent}
    m_* \;=\; \max_{\sigma\in \cI_k} \tr[M\sigma].
  \end{align}
  The maximum is achieved at an extreme point of $\cI_k$, namely a pure $k$-sparse state.
  Equivalently,
  \begin{align}
    m_* \;=\; \max_{\substack{S\subseteq[d] \\ |S|=k}}\; \norm{M_S}_{\infty},
  \end{align}
  where $M_S$ is the $k\times k$ principal submatrix of $M$ indexed by $S$.

  We consider the case where $M = A_G$ is the adjacency matrix of a graph $G$ on $d$ vertices.
  Then each $M_S$ is the adjacency matrix $A_H$ of the induced subgraph $H=G[S]$.
  For any $k$-vertex graph $H$, we have $\norm{A_H}_{\infty} \le k-1$, with equality if and only if $H$ is the complete graph $K_k$.
  Moreover Proposition~1.3.10 of \cite{cvetkovic2010introduction} implies that
  if $H$ is not complete then $\norm{A_H}_{\infty}$ is bounded away from $k-1$ by an inverse polynomial in $k$.
  Thus the $k$-Clique problem reduces to weak linear optimization over $\cI_k$, proving $\NP$-hardness.
\end{proof}

We emphasize that this hardness requires $k$ to be a  variable.  If $k$ is fixed then $\cI_k$ can be expressed as a semidefinite program of size $d^{O(k)}$, since any $\sigma\in \cI_k$ can be expressed as a mixture of states in $\cD(\bbC^S)$ for $S$ ranging over all subsets of $[d]$ of size $k$.  Here $\bbC^S$ means the $|S|$-dimensional subspace of states with support contained in $S$.  The hardness results shows that this SDP is essentially optimal if finding $k$-cliques in $d$-vertex graphs takes time $d^{\Omega(k)}$.  This latter conjecture is implied by the Exponential Time Hypothesis and other plausible complexity-theory conjectures~\cite{ABW15}.

For \emph{pure} states, membership in $\cI_k$ and $\cS_k$ is easy to decide: in the first case by support size, in the second by Schmidt rank.
We have already seen that for pure $v$, the fidelities $F_k(v)$ and $F_k^{(E)}(v)$ are easy to compute.
In \Cref{sec:trace-dist} we will show that the trace distances $T_k(v)$ and $T_k^{(E)}(v)$ can also be computed efficiently, although these algorithms are considerably more complicated than those for the fidelity.

Our algorithms for $T_k(v)$ and $R_k(v)$ thus rely crucially on $v$ being a pure state and involve details of the properties of trace distance and robustness.  For a general distance measure on quantum states $D(\rho,\sigma)$, one can define $D(v,\cI_k) \coloneq \min_{\sigma\in\cI_k} D(vv^\dag, \sigma)$.  We do not have a general reduction from $D(v,\cI_k)$ to the pairwise distances $D(\rho,\sigma)$, nor do we know examples where the pairwise distances are easy but calculating  $D(v,\cI_k)$ is hard.

\section{Max-entropy sampling of subsets with given marginals}\label{sec:max_entropy}

\subsection{Importance sampling of sets}
We make use of efficient methods to compute and sample from distributions over $\ell$ distinct integers drawn from the set $[n]=\{1,2,\dots,n\}$, with $n\geq \ell$. Equivalently, we are interested in random $n$-bit strings of Hamming weight $\ell$, representing inclusion in the draw.
We refer to this set as $H_\ell\coloneq\{x\in\{0,1\}^n: |x|=\ell\}$.
Distributions over $H_\ell$ arise, for example, in physics when studying systems of $\ell$ fermions restricted to occupy $n$ distinct modes. They also implement the importance sampling of sets that we described in \cref{sec:intuition}.
For randomized truncation they are relevant because our methods, described in \Cref{sec:trace-dist,sec:robustness}, involve sampling random fixed-size subsets of indices from the vector to be truncated.

In each case, the input is the alphabet size $n$, the size of the set $\ell$, and the marginal probabilities $q_1,\ldots,q_n$ that each element is included in the set.  It will be convenient to assume that each $q_j\in (0,1)$, and this is without loss of generality, since if $q_j$ is 0 or 1 then we can deterministically leave out or include that element and do not need to involve it in the sampling.    Normalization requires that $\sum_{j=1}^n q_j = \ell$.
Given these inputs our goal is to find a distribution $p:H_\ell\to [0,1]$ which can be represented efficiently and satisfies four desired criteria:
\begin{enumerate}
  \item marginal probability constraints are satisfied, i.e., $q_j=\sum_{x\in H_\ell}p(x)x_j$ for every $j\in [n]$;
  \item two-element marginals
        $Q_{i,j}\coloneq \sum_{x\in H_\ell}p(x) x_i x_j$
        can be computed in time $\poly(n)$;
  \item the distribution can be sampled from in time $\poly(n)$; and
  \item negative correlation (of indicator random variables): $Q_{i,j}\leq q_iq_j$ for all $i,j\in [n]$.
\end{enumerate}

\subsection{Max-entropy distributions}
A convenient, although not the only, solution to the above constraints is the maximum-entropy distribution. Namely, it suffices to use the solution to the optimization problem:
\begin{equation}\label{eq:max_ent_primal}
  \begin{alignedat}{2}
    \underset{p\in \bbR_{\geq 0}^{H_\ell}}{\textnormal{max.}} & \quad \rmH(p) \coloneq \sum_{x\in H_\ell} p(x)\log(1/p(x))                              \\[0.5em]
    \textnormal{s.t.}                                         & \quad \sum_{x\in H_\ell} p(x)x_j = q_j                     & \quad & \forall j \in [n], \\
                                                              & \quad \sum_{x\in H_\ell} p(x) = 1.
  \end{alignedat}
\end{equation}
This is precisely the problem studied in~\cite{Chen94}, a generalization of which is the subject of more recent work~\cite{singh2014entropy,SV19}.
Since there are exponentially many variables and only $n+1$ constraints, most work on the subject focuses on the dual optimization problem.  First we observe that the optimal solution to \cref{eq:max_ent_primal} takes the form
\begin{align}
  p^\star(x) & = \frac{\ee^{-\sum_{j=1}^nx_j \mu_j^\star}}{Z}\ \forall x\in H_\ell, \quad Z\coloneq \sum_{x\in H_\ell}\ee^{-\sum_{j=1}^nx_j \mu_j^\star}
  \label{eq:FMNHD}
\end{align}
for some parameters $(\mu_j^\star\in\bbR$: $j\in [n])$.
The distribution in \cref{eq:FMNHD} is a special case of Fisher's multivariable noncentral hypergeometric distribution.  
Physicists will recognize this as a Gibbs distribution with fermionic energy levels given by $\mu_1,\dots,\mu_n\in\bbR$, temperature set to one and the total number of particles constrained to be $\ell$~\cite{Ferris15}. Within the field of survey sampling, it is also called sampling with unequal probabilities and without replacement, or conditional Poisson sampling~\cite{Chen94,Tille06}. For a self-contained proof of a slightly more general fact than the above, see \cite[Lemma~A.1]{singh2014entropy}.

The optimal distribution therefore has a succinct description in terms of just $n$ parameters $\mu_1^\star,\dots,\mu_n^\star$, which correspond to the optimal dual variables for the dual problem:
\begin{equation}\label{eq:max_ent_dual}
  \begin{alignedat}{2}
    \textnormal{min.}  & \quad  g_q(\mu)\coloneq \sum_{i=1}^n\mu_iq_i + \log(\sum_{x\in H_\ell}\ee^{-\sum_{j=1}^nx_j\mu_j})                             \\[0.5em]
    \textnormal{s.t .} & \quad \mu_j\in\bbR                                                                                 \quad \forall j \in [n]
  \end{alignedat}
\end{equation}
where $q\coloneq (q_1,q_2,\dots,q_n)^\top$.

\subsection{Provably efficient algorithms}
We consider the following algorithmic tasks:
\bit
\item
Compute $\mu^\star$ from $q$.
\item
Given $\mu^\star$, sample from $p_{\mu^\star}$.
\item
Given $\mu^\star$, compute $q_i$ and $Q_{ij}$.
\eit
The latter two problems are more straightforward. Given $\mu\in\bbR^n$
define the partition function $Z_\mu$ and the distribution $p_\mu$ over $H_\ell$ according to
\begin{align}
p_\mu(x) = \frac{\ee^{-\sum_{j=1}^nx_j\mu_j}}{Z_\mu}
\qand
Z_\mu = \sum_{x\in H_\ell} \ee^{-\sum_{j=1}^n x_j\mu_j}.\label{eq:pZ_lambda}
\end{align}
Even though $Z_\mu$ involves a sum over exponentially many terms, there is a straightforward recursive formula to compute it.  Similar methods apply to computing the marginals $q_i$ and $Q_{ij}$ and to sampling from $p_\mu$.

We introduce some notation before proceeding further.  Earlier in this section we considered $q$ to be an input to the optimization problems \cref{eq:max_ent_dual,eq:max_ent_primal}.  Now we treat $q(\mu)$ as a function of $\mu$.  Specifically, given $\mu$ we determine a distribution $p_\mu$ over $H_\ell$ via \cref{eq:pZ_lambda}, and then we use this to determine the 1-body marginals $q(\mu)_i = \sum_x p_\mu(x) x_i$ and the 2-body marginals $Q(\mu)_{ij} = \sum_x p_\mu(x) x_i x_j$.  (Technically we should call $q(\mu)$ and $Q(\mu)$ the moments instead of the marginals, but in these cases $q$ carries the same information as the 1-body marginals and $Q$ carries the same information as the 2-body marginals.)

\begin{lem}\label{lem:marginal_computation}
  Given $\mu\in\bbR^n$ one can compute $Z_\mu$ in time $O(n\poly\log n + \ell^2)$,  $q(\mu)$ in time $O(n\ell)$, and $Q(\mu)$  in time $O(n^2)$.   A sample from $p_\mu$ can be obtained in expected time $O(n\log n)$.
\end{lem}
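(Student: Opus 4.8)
The plan is to reduce every quantity in the statement to a few coefficients of elementary symmetric polynomials in the weights $w_j \coloneq \ee^{-\mu_j}$. Writing $R(z) \coloneq \prod_{j=1}^n(1+w_j z) = \sum_{k=0}^n e_k(w)\,z^k$, where $e_k$ denotes the $k$-th elementary symmetric polynomial, and splitting the defining sum $\sum_{x\in H_\ell}\prod_j w_j^{x_j}$ according to whether $x_i$ (and $x_j$) equals $1$, one gets $Z_\mu = e_\ell(w) = [z^\ell]R(z)$ together with
\[
q(\mu)_i = \frac{w_i\, e_{\ell-1}(w^{(-i)})}{e_\ell(w)}, \qquad\qquad Q(\mu)_{ij} = \frac{w_i w_j\, e_{\ell-2}(w^{(-i,-j)})}{e_\ell(w)},
\]
where $w^{(-i)}$ and $w^{(-i,-j)}$ are $w$ with the indicated coordinate(s) deleted. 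Since $\prod_{j\neq i}(1+w_j z) = R(z)/(1+w_i z)$, each of these is a coefficient of $R(z)$ deflated by one or two linear factors, so the lemma is a matter of fast polynomial arithmetic.

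For $Z_\mu$, the elementary recursion $e_k(w_1,\dots,w_m) = e_k(w_1,\dots,w_{m-1}) + w_m\,e_{k-1}(w_1,\dots,w_{m-1})$ already yields $e_0,\dots,e_\ell$ in $O(n\ell)$ operations; to reach the sharper $O(n\poly\log n + \ell^2)$ I would instead (i) compute $P(z) \coloneq \prod_j(1-w_j z)\bmod z^{\ell+1}$ by a balanced divide-and-conquer product with FFT-based multiplication, truncating each intermediate product to degree $\ell$, at cost $O(n\poly\log n)$; (ii) extract the power sums $p_m = \sum_j w_j^m$ for $m\leq\ell$ from $\sum_{m\geq1}p_m z^{m-1} = -P'(z)/P(z)$ by one power-series inversion, in $O(\ell\poly\log\ell)$; and (iii) recover $e_\ell(w)=Z_\mu$ from the $p_m$ by Newton's identities in $O(\ell^2)$. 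For $q(\mu)$, I would compute $R(z)\bmod z^\ell$ once and then, for each $i$, recover $e_{\ell-1}(w^{(-i)})$ by the synthetic-division recurrence $p^{(i)}_k = r_k - w_i\,p^{(i)}_{k-1}$ with $r_k = [z^k]R(z)$; this costs $O(\ell)$ per index and $O(n\ell)$ overall.

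The only genuinely delicate step is $Q(\mu)$, because doing the above naively per pair costs $O(n^2\ell)$. To avoid this I would use the partial-fraction identity, valid for $w_i\neq w_j$,
\[
\frac{R(z)}{(1+w_i z)(1+w_j z)} = \frac{1}{w_i - w_j}\Bigl(\, w_i\,\frac{R(z)}{1+w_i z}\;-\;w_j\,\frac{R(z)}{1+w_j z}\,\Bigr),
\]
whose coefficient of $z^{\ell-2}$ is $e_{\ell-2}(w^{(-i,-j)}) = (w_i c_i - w_j c_j)/(w_i - w_j)$, with $c_i \coloneq [z^{\ell-2}]\,R(z)/(1+w_i z) = p^{(i)}_{\ell-2}$. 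The $n$ numbers $c_1,\dots,c_n$ are obtained in $O(n\ell)\subseteq O(n^2)$ time by the same synthetic divisions used for $q(\mu)$, after which each of the $\binom{n}{2}$ entries of $Q(\mu)$ is an $O(1)$ arithmetic expression, for $O(n^2)$ total. Pairs with $w_i = w_j$ are a lower-order case handled by the limiting form of the identity (one extra deflation per repeated weight value), and one must keep the cancellation $w_i - w_j$ in view for numerical stability. I expect this collapse from $O(n^2\ell)$ to $O(n^2)$, together with the generating-function/Newton-identity detour needed to beat the obvious $O(n\ell)$ bound for $Z_\mu$, to be the main obstacles; the rest is bookkeeping.

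Finally, to sample from $p_\mu$ I would precompute the suffix partition functions $A^{(m)}_k \coloneq e_k(w_m,\dots,w_n)$ for $m\in[n+1]$ and $k\leq\ell$ by the same recursion (cost $O(n\ell)$, and in any case already produced when computing $q(\mu)$), then generate $x$ one coordinate at a time: having fixed $x_1,\dots,x_{m-1}$ with $t \coloneq \sum_{j<m}x_j$ ones, set $x_m=1$ with probability $w_m\,A^{(m+1)}_{\ell-t-1}/A^{(m)}_{\ell-t}$, forcing the outcome when $t=\ell$ or when $\ell-t$ equals the number $n-m+1$ of remaining coordinates. Telescoping these conditional probabilities reproduces $p_\mu(x) = \prod_j w_j^{x_j}/Z_\mu$ exactly, so this is an exact sampler taking $O(n)$ time per sample once the one-time table is built. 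One can also avoid the $\ell$-dependent preprocessing entirely by rejection sampling: draw the coordinates as independent Bernoullis with odds $\theta w_j$, where $\theta$ is tuned so that the expected Hamming weight equals $\ell$, and accept iff the weight is exactly $\ell$; conditioning a product of Bernoullis on its total weight returns $p_\mu$ independently of $\theta$.
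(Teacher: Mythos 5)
Your treatment of $Z_\mu$, $q(\mu)$ and $Q(\mu)$ is correct (in exact arithmetic, which is all the lemma counts) and is essentially the paper's argument in generating-function clothing: the paper also computes the power sums $p_m=\sum_j e^{-m\mu_j}$ by an FFT-based product tree applied to $\prod_j(1-e^{-\mu_j}t)$ and then runs the Newton-identity recursion of Chen to get $Z_\mu$ in $O(n\,\mathrm{poly}\log n+\ell^2)$; its $O(n\ell)$ recursion for $q$ plays the role of your per-index deflation; and its closed form $Q_{ij}=\bigl(q_i e^{-\mu_j}-q_j e^{-\mu_i}\bigr)/\bigl(e^{-\mu_j}-e^{-\mu_i}\bigr)$ for $\mu_i\neq\mu_j$ is algebraically identical to your partial-fraction identity after substituting $q_i=w_i e_{\ell-1}(w^{(-i)})/Z_\mu$, with ties handled there by the row-sum constraint $\sum_j Q_{ij}=q_i\ell$ rather than by a double deflation. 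So for those three claims the proposal stands.

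The genuine gap is the expected-$O(n)$ sampling claim. Your sequential sampler is exact and takes $O(n)$ per sample, but only after building the suffix table $A^{(m)}_k=e_k(w_m,\dots,w_n)$, which costs $O(n\ell)$ and is \emph{not} ``already produced when computing $q(\mu)$'': your $q$ algorithm deflates the single prefix polynomial $R(z)$ and never forms suffix products, so the first sample costs $O(n\ell)$, not $O(n)$. Your fallback, plain rejection sampling of independent Bernoullis tuned to mean $\ell$ and accepting iff the weight is exactly $\ell$, is exact in distribution but you give no bound on the acceptance probability; for a Poisson--binomial with mean $\ell$ this probability is generically $\Theta(\ell^{-1/2})$ (e.g.\ equal weights), so the expected cost is $O(n\sqrt{\ell})$, which does not meet the stated bound. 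The paper closes exactly this hole with an extra idea: after shifting $\mu$ so that $\sum_i e^{-\mu_i}=\ell$, it draws a product sample and then runs up to $n$ steps of single-site Glauber dynamics, halting the first time the weight equals $\ell$; Darroch's theorem (the mean $\ell$ is a median of the weight) together with the $1/n$ spectral gap of Glauber dynamics shows each such $O(n)$-work attempt succeeds with probability $\Omega(1)$, giving expected time $O(n)$ with no $\ell$-dependent preprocessing. Without either that argument or an explicit acceptance-probability analysis, your proposal does not establish the last sentence of the lemma.
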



For large enough    $\ell$ this improves on the results in \cite{Chen94, Tille06}, which computed $Z_\mu$ in time $O(n\ell)$, $q(\mu)$ in time $O(n\ell^2)$ and $Q(\mu)$ in time $O(n^2\ell)$, and sampled from $p_\mu$ in time $O(n\ell)$.
We neglect bit complexity here and count only the number of arithmetic operations.  However, \Cref{lem:efficient_computation_of_weights} below will give us a bound on $\norm{\mu}_\infty$ which means that arithmetic operations will only add a polylogarithmic overhead.

Since the full proof of the algorithms in \cref{lem:marginal_computation} is somewhat lengthy, we give an informal self-contained description of how to achieve the somewhat weaker but still poly-time runtimes of \cite{Chen94, Tille06}.  We defer a full proof to \cref{lem:marginal_computation} to \cref{sec:faster-sampling}.
\begin{proof}[Proof of weaker version of \cref{lem:marginal_computation}]
  Let $H_{\ell^\prime}(S)\coloneq \{x\in\{0,1\}^S: |x|=\ell^\prime\}$ and $Z_\mu(\ell^\prime, S)\coloneq \sum_{x\in H_{\ell^\prime}(S)}\ee^{-\sum_{j\in S} x_j\mu_j}$ for $S\subseteq [n]$ such that $0\leq \ell^\prime\leq |S|$.
  Given knowledge of the various $Z_\mu(\cdot,\cdot)$ we can compute $q$ and $Q$ according to
  \begin{align}\label{eq:two-element-marginal-expression}
    {Q}(\mu)_{ij}
    & = \frac{\ee^{-(\mu_i + \mu_j)}Z_\mu(\ell-2, \{i,j\}^c)}{Z_\mu(\ell,[n])}
                                    & \forall i,j\in [n], i\neq j                                                         \\
    q(\mu)_i =      {Q}(\mu)_{ii} = &
    \frac{\ee^{-\mu_i}Z_\mu(\ell-1,\{i\}^c)}{Z_\mu(\ell, [n])}
                                    & \forall i\in [n].    \label{eq:one-element-marginal-expression}
  \end{align}
  Here the superscript $c$ means complement with respect to $[n]$, so $\{i,j\}^c$ refers to $ [n]\backslash \{i,j\}$.
  
  It remains only to compute $Z_\mu$.  This is achieved via the recursive formula
  \be
  Z_\mu(\ell, S) =
  Z_\mu(\ell, S \backslash \{i\}) +
  e^{-\mu_i} Z_\mu(\ell-1, S \backslash \{i\}),
  \label{eq:Z-recursion}
  \ee
  whenever $i\in S$.  We can use this to evaluate $Z_\mu = Z_\mu(\ell,[n])$ by taking $i=1,\ldots, n$.  For more general $S$ we can likewise iterate through the elements of $S$.  It is important to use a consistent order so that only $O(\ell |S|)$ different $Z_\mu(\cdot,\cdot)$ need to be computed.  The recursion stops with the boundary conditions
  $Z_\mu(0,S)=1$ for any $S\subseteq [n]$ and $Z_\mu(\ell,S)=0$ whenever $\ell > |S|$.

  Similar ideas can be used to sample from $p_\mu$.  We sequentially sample $x_1,x_2,\ldots,x_n$.  To sample $x_i$ conditioned on our choices for $x_1,\ldots,x_{i-1}$ we need to calculate
  \be
  \Pr[x_i | x_1,\ldots,x_{i-1}] =
  \frac{e^{-\mu_i} Z_\mu(\ell-1-x_1-\cdots-x_{i-1}, \{i+1,\ldots,n\})}
  {Z_\mu(\ell-x_1-\cdots-x_{i-1}, \{i,\ldots,n\})}.
  \ee

  These arguments only show polynomial-time algorithms.  See \cref{sec:faster-sampling} for algorithms achieving the stated runtime.
\end{proof}

We next record the fact that marginals under the max-entropy distributions discussed above are negatively correlated, in the following sense.
\begin{lem}\label{lem:negative_correlation}
  For any $\mu\in\bbR^n$ and $i,j\in [n]$ with $i\neq j$ we have $0 < {Q}(\mu)_{ij} < q(\mu)_{i}q(\mu)_j$.
\end{lem}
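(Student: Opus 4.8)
The plan is to reduce the stated inequality to a log-concavity statement about the partition functions $Z_\mu(\cdot,\cdot)$, using the marginal formulas \cref{eq:one-element-marginal-expression}, \cref{eq:two-element-marginal-expression} and the recursion \cref{eq:Z-recursion}. Throughout I write $Q_{ij}=Q(\mu)_{ij}$, $q_i=q(\mu)_i$, and assume $2\le\ell\le n-1$ (the nontrivial regime: the standing assumption $q_j\in(0,1)$ already forces $1\le\ell\le n-1$, and $\ell\ge 2$ is needed for two prescribed indices to lie in a weight-$\ell$ string). The strict lower bound is then immediate: every $x\in H_\ell$ has $p_\mu(x)=e^{-\sum_j x_j\mu_j}/Z_\mu>0$, and since $\ell\ge 2$ some $x\in H_\ell$ has $x_i=x_j=1$, so $Q_{ij}>0$.

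For the upper bound $Q_{ij}<q_iq_j$, I would use \cref{eq:two-element-marginal-expression} and \cref{eq:one-element-marginal-expression} to rewrite it, cancel the common positive factor $e^{-(\mu_i+\mu_j)}$, and clear the positive denominators; writing $T\coloneq[n]\setminus\{i,j\}$, this turns $Q_{ij}<q_iq_j$ into
\[
  Z_\mu(\ell-2,T)\,Z_\mu(\ell,[n]) \;<\; Z_\mu(\ell-1,\{i\}^c)\,Z_\mu(\ell-1,\{j\}^c).
\]
Peeling off the indices $i$ and $j$ with the recursion \cref{eq:Z-recursion} (writing $a=e^{-\mu_i}$, $b=e^{-\mu_j}$) gives $Z_\mu(\ell,[n])=Z_\mu(\ell,T)+(a+b)\,Z_\mu(\ell-1,T)+ab\,Z_\mu(\ell-2,T)$, $Z_\mu(\ell-1,\{i\}^c)=Z_\mu(\ell-1,T)+b\,Z_\mu(\ell-2,T)$, and $Z_\mu(\ell-1,\{j\}^c)=Z_\mu(\ell-1,T)+a\,Z_\mu(\ell-2,T)$. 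Expanding both sides, every term carrying an $a$ or a $b$ cancels, and what is left is exactly
\[
  Z_\mu(\ell,T)\,Z_\mu(\ell-2,T) \;<\; Z_\mu(\ell-1,T)^2 .
\]

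This last inequality involves only the $n-2$ positive reals $w_t\coloneq e^{-\mu_t}$ for $t\in T$, since $Z_\mu(r,T)=\sum_{U\in\binom{T}{r}}\prod_{t\in U}w_t=e_r\bigl((w_t)_{t\in T}\bigr)$ is their $r$-th elementary symmetric polynomial; so the task is to show that $r\mapsto e_r$ is strictly log-concave at $r=\ell-1$. This is Newton's inequality: for positive reals and $m\coloneq|T|=n-2$ one has $e_r^2\ge\bigl(\binom{m}{r}^2/(\binom{m}{r-1}\binom{m}{r+1})\bigr)e_{r-1}e_{r+1}$, and since the coefficient equals $\tfrac{(m-r+1)(r+1)}{r(m-r)}=1+\tfrac{m+1}{r(m-r)}>1$ while $e_{r-1}e_{r+1}>0$ whenever $1\le r\le m-1$, it follows that $e_r^2>e_{r-1}e_{r+1}$ strictly in that range. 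Taking $r=\ell-1$ covers $2\le\ell\le n-2$; the only leftover case, $\ell=n-1$, is immediate because then $Z_\mu(\ell,T)=e_{n-1}\bigl((w_t)_{t\in T}\bigr)=0<Z_\mu(\ell-1,T)^2$.

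The step I expect to be the crux is this last one — the strict log-concavity (Newton's inequality) of elementary symmetric polynomials of positive reals — although it is classical; everything before it is bookkeeping with \cref{eq:Z-recursion}. Conceptually, $q_iq_j-Q_{ij}$ is the mixed second derivative $\partial_{\mu_i}\partial_{\mu_j}(-\log Z_\mu)$, i.e.\ minus the covariance of the occupation variables $x_i,x_j$; without the hard constraint $|x|=\ell$ these would be independent, so it is exactly that constraint, reflected in the fixed-degree truncation defining $Z_\mu(\cdot,T)$, that produces the strict negative correlation.
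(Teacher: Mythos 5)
Your proof is correct and follows the same route the paper points to: the paper's own "proof" simply cites Property~2 of \cite{Chen94}, noting that it is obtained by applying the recursion \cref{eq:Z-recursion} to the marginal formulas \cref{eq:two-element-marginal-expression,eq:one-element-marginal-expression}, which is exactly your reduction to $Z_\mu(\ell,T)Z_\mu(\ell-2,T) < Z_\mu(\ell-1,T)^2$; you then make the argument self-contained by supplying the crux via Newton's inequality (strict log-concavity of elementary symmetric polynomials of positive reals). Your explicit treatment of the boundary regime ($2\le\ell\le n-1$, and the separate case $\ell=n-1$) is in fact more careful than the paper's one-line positivity remark.
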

\begin{proof}
The expressions in \cref{eq:two-element-marginal-expression,eq:one-element-marginal-expression} are sums of positive terms, so $Q_{ij}$ and $q_i$ are all positive.  For $Q_{ij}<q_iq_j$, this is Property 2 from \cite{Chen94}, and is proven there by applying \cref{eq:Z-recursion} to \cref{eq:two-element-marginal-expression,eq:one-element-marginal-expression}. 
\end{proof}

Finally we turn to the inverse problem of finding $\mu$ such that $q(\mu)=q^\star$ given some input $q^\star$.
Naively \cref{eq:max_ent_primal} is a convex program with $|H_\ell| = \binom{n}{\ell}$ variables, while \cref{eq:max_ent_dual} has only $n$ variables but the objective function requires summing over $\binom{n}{\ell}$ terms.  Fortunately, this exponentially large sum can be summed over efficiently, due to \cref{lem:marginal_computation}, and this is the basis for an efficient algorithm.

 We also need to show that an accurate solution of \cref{eq:max_ent_dual} yields a good approximation of $p^\star$.
A straightforward calculation reveals that the dual objective function in \cref{eq:max_ent_dual} can be rewritten as
\be
g_q(\mu)= \mathrm{H}(p^\star) + \rmD_{\textnormal{KL}}\infdivx{p^\star}{p_\mu}.
\ee
This implies that approximate dual solutions lead to approximate primal solutions. Specifically, by Pinsker's Inequality, an $\eps$-accurate solution $\mu$ to the dual problem gives a solution $p_{\mu}$ to the primal problem which is $O(\sqrt{\eps})$ in total variation distance from the optimum $p^\star$. Hence, it suffices to approximately solve the dual problem, which can be accomplished through the ellipsoid method, as shown by Straszak and Vishnoi~\cite{SV19}. This leads to the following, which we state without proof.
\begin{lem}[Implied by {\cite[{Theorem~15}]{SV19}}]\label{lem:efficient_computation_of_weights}
  Let $q_j\in (0,1)$ for each $j\in [n]$ and $\eps\in (0,1]$. There is an algorithm running in time $\poly(n,\log(1/\eps))$ which computes a vector $\mu\in \bbR^n$ such that $\norm{\mu}_{\infty}\leq \poly(n,\log(1/\eps))$ and
  $
    \norm{p_\mu - p^\star}_1\leq \eps
  $
  where $p^\star$ is the optimal solution to the optimization problem in \cref{eq:max_ent_primal}.
\end{lem}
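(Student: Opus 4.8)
The plan is to derive this statement from the ellipsoid-based analysis of the dual program \cref{eq:max_ent_dual} due to Straszak and Vishnoi, using the identity $g_q(\mu)= \mathrm{H}(p^\star) + \rmD_{\textnormal{KL}}\infdivx{p^\star}{p_\mu}$ recorded just above to convert a near-optimal dual point into a distribution close to $p^\star$. First I would invoke \cite[Theorem~15]{SV19} with the set $H_\ell$ in the role of the combinatorial family: it produces, in time $\poly(n,\langle q\rangle,\log(1/\delta))$, a point $\mu\in\bbR^n$ with dual suboptimality $g_q(\mu)-g_q(\mu^\star)\leq\delta$ and $\norm{\mu}_\infty\leq\poly(n,\langle q\rangle)$, where $\langle q\rangle$ denotes the bit complexity of the target marginals. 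The two hypotheses this theorem needs are (i) an efficient evaluation oracle for the partition function $Z_\mu$, equivalently for $g_q$ and for its gradient $q(\mu)$, which is exactly what \Cref{lem:marginal_computation} supplies; and (ii) that $q$ lies in the (relative) interior of the marginal polytope of $H_\ell$, which here is the hypersimplex $\{q'\in[0,1]^n:\sum_j q'_j=\ell\}$, together with a quantitative bound on its distance to the boundary. Our standing assumption $q_j\in(0,1)$ with $\sum_j q_j=\ell$ places $q$ in the relative interior, and the interior gap is controlled by $\min_j\min\{q_j,1-q_j\}$; feeding this into the stability estimates of \cite{SV19} is what yields the polynomial norm bound on $\mu^\star$ and on the computed $\mu$.

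Next I would translate the dual guarantee into the claimed total variation bound. Since $p_{\mu^\star}=p^\star$ we have $g_q(\mu^\star)=\mathrm{H}(p^\star)$, so the identity above gives $g_q(\mu)-g_q(\mu^\star)=\rmD_{\textnormal{KL}}\infdivx{p^\star}{p_\mu}$, and hence the output of Step~1 satisfies $\rmD_{\textnormal{KL}}\infdivx{p^\star}{p_\mu}\leq\delta$. Pinsker's inequality then gives $\mathrm{d}_{\textnormal{TV}}(p_\mu,p^\star)\leq\sqrt{\delta/2}$. Choosing $\delta=2\eps^2$ yields $\mathrm{d}_{\textnormal{TV}}(p_\mu,p^\star)\leq\eps$; since $\log(1/\delta)=O(\log(1/\eps))$, the running time and the norm bound from Step~1 are $\poly(n,\langle q\rangle,\log(1/\eps))$, which is $\poly(n,\log(1/\eps))$ once we note that the marginals $q$ arising in our applications have bit complexity $\poly(n,\log(1/\eps))$ (and in any case the input may be rounded to this precision beforehand at negligible cost, using the Lipschitz continuity of the map $\mu\mapsto p_\mu$ in $q$). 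The algorithm returns this $\mu$.

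I expect the main obstacle to be Step~1: carefully matching the interface of \cite[Theorem~15]{SV19} to the concrete setting here, namely checking that its required counting oracle is met by \Cref{lem:marginal_computation}, that its notion of an ``$\delta$-approximate dual optimizer'' coincides with the dual-suboptimality statement used above, and, most delicately, verifying that the dependence of its norm and runtime bounds on how deep $q$ sits inside the hypersimplex is genuinely polynomial in the stated parameters rather than only in $\log$ of the interior gap. Once that bookkeeping is in place, the remainder — the $g_q$ identity, Pinsker, and the substitution $\delta=2\eps^2$ — is routine.
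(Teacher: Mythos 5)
Your proposal follows exactly the route the paper indicates: the paper states this lemma without proof, citing \cite[Theorem~15]{SV19}, and its surrounding discussion is precisely your argument --- rewrite the dual objective as $g_q(\mu)=\mathrm{H}(p^\star)+\rmD_{\textnormal{KL}}\infdivx{p^\star}{p_\mu}$, solve the dual approximately via the ellipsoid method of \cite{SV19}, and convert to total variation distance by Pinsker with $\delta=\Theta(\eps^2)$. Your extra bookkeeping about the interior gap of $q$ in the hypersimplex and the bit complexity of the input is a reasonable fleshing-out of details the paper leaves implicit in the citation, so the proposal is correct and essentially the same approach.
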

Instead of using the ellipsoid method, it may be possible to derive an explicit polynomial bound on the time required to solve for $\mu^\star$ using Newton's method and a self-concordant barrier function (see, e.g., \cite{nesterov1994interiorpoint,Boyd_Vandenberghe_2004}) to remain within the ball of radius $R$, which is actually provided as an explicit polynomial in~\cite[{Theorem~7}]{SV19}. We leave this to future work.

This concludes the argument that a max-entropy construction satisfies the desired criteria listed at the beginning of this section, and that a polynomial-size representation of this distribution, in the form of the parameters $\mu$, can be computed in polynomial time.

\subsection{Fast algorithms in practice}\label{sec:fast_algorithms}
If the marginal probabilities $q_j$ are bounded away from zero and one by at least some small constant, then the iterative method in \cite[Theorem~2]{Chen94} computes these parameters in $O(\ell^2n)$ steps. Alternatively, one can use the fact that the gradient of the dual objective function is
$
  \nabla_\mu g_q(\mu) = q - q(\mu).
$
Since $g_q$ is convex, the roots of $\mu\mapsto q - q(\mu)$ are global minima. We can therefore try to solve for the optimal $\mu$ using the Newton's method, as proposed in~\cite{Chen00}, with the iterates
\begin{align}
  \mu^{(t+1)} = \mu^{(t)} + K(\mu^{(t)})^{-1}(q-q(\mu^{(t)})).
\end{align}
Here, $[K(\mu)]_{ij}\coloneq {Q}(\mu)_{i,j} - q(\mu)_iq(\mu)_j$ is a covariance matrix under the distribution $p_\mu$, which can be computed efficiently by \Cref{lem:marginal_computation}.

In practice, both these methods appear to converge quickly to the optimal solution, and we use the second of these in our numerics. In fact, since the covariance matrix $K(\mu)$ typically has small off-diagonal elements, we take them to be zero as a heuristic to speed up the computation, leaving just the diagonal elements $q(\mu)_i(1-q(\mu)_i)$.  The divergence in step size as $q(\mu)_i$ approaches 0 or 1 corresponds to the fact that these probabilities appear only in the limit of $\mu_i\ra \pm\infty$.
We refer the interested reader to Refs.~\cite{Tille06} for additional details on these algorithms and their history.

\section{Approximations in trace distance}\label{sec:trace-dist}
In this section we give efficient algorithms to compute the optimal randomized truncation to a pure state with respect to trace distance. More specifically, given a target unit vector $v \in\bbC^d$, our first algorithm, given in \Cref{sec:comp_opt_meas}, computes the value
\begin{align}\label{eq:main_td_optimization_problem}
  T_k(v):=\min_{\sigma\in \cI_k}\frac{1}{2}{\lVert vv^\dag-\sigma\rVert}_1=\min_{\sigma\in \cI_k}\max_{0\preceq M\preceq \mathds{1}}\tr(M(vv^\dag-\sigma)).
\end{align}
It also returns a description of a measurement $M^\star = mm^\dag$ with $m\in\bbC^d$ such that $(\sigma^\star,M^\star)$ is a saddle point, for some $\sigma^\star\in\cI_k$. A second algorithm, given in \Cref{sec:optimal_density_matrix_td}, then returns a description of the optimal solution $\sigma^\star$, either explicitly in the standard basis or in the form of an ensemble of $k$-sparse pure states $\{(p_i,v_i)\}_i$ where the distribution $p$ can be efficiently sampled.

\subsection{Computing the trace distance and optimal measurement}\label{sec:comp_opt_meas}
The main result of this section is:
\begin{thm}\label{thm:td_optimal_value_computation}
  Let $k,d\in\bbZ_{>0}$ such that $k\leq d$ and $v\in\bbC^d$ be a unit vector. There is an algorithm (\Cref{alg:td_approx}) running in time $O(dk + d\log d)$ that computes $T_k(v)$ as well as a measurement operator $M^\star$ such that $(\sigma^\star, M^\star)$ is a saddle point of the minimax in \cref{eq:main_td_optimization_problem}, for some $\sigma^\star\in\cI_k$.
\end{thm}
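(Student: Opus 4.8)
The plan is to convert the minimax in \cref{eq:main_td_optimization_problem} into a single-vector optimization over the Euclidean ball, and then solve that optimization by a combinatorial analysis of its optimality conditions.

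\smallskip
\noindent\emph{Step 1: reduction to a vector optimization.} By \cref{eq:perm_diag_invariance} I may assume $v=v^\downarrow$ is real, nonnegative, and nonincreasing, recovering the general case by conjugating every operator below with the diagonal-unitary-times-permutation that sorts $v$ (this is the source of the $O(d\log d)$ term). The payoff $(\sigma,M)\mapsto\tr(M(vv^\dag-\sigma))$ is bilinear and the feasible sets $\cI_k$ and $\{M:0\preceq M\preceq\mathds{1}\}$ are convex and compact, so the game has a value and a saddle point and $T_k(v)=\max_{0\preceq M\preceq\mathds{1}}\min_{\sigma\in\cI_k}\tr(M(vv^\dag-\sigma))$. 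Since the extreme points of $\cI_k$ are the $k$-sparse pure states, the inner minimum equals $\langle v,Mv\rangle-\max_{\|u\|_2=1,\ \|u\|_0\le k}\langle u,Mu\rangle$, and for $M=mm^\dag$ the subtracted term is exactly $\|m\|_{(k)}^2$. The crucial observation is that the optimal $M$ may be taken rank one: for any $\sigma\succeq0$ and any $w\perp v$ we have $\langle w,(vv^\dag-\sigma)w\rangle=-\langle w,\sigma w\rangle\le0$, and since every two-dimensional subspace meets $v^\perp$ nontrivially, $vv^\dag-\sigma$ has at most one positive eigenvalue; hence the positive-part projector --- an optimal measurement for $T(vv^\dag,\sigma)$ --- equals $mm^\dag$ for some vector $m$ with $\|m\|_2\le 1$. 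Combining this with the minimax identity yields
\[
  T_k(v)\;=\;\max_{\|m\|_2\le 1}\Bigl(|\langle v,m\rangle|^2-\|m\|_{(k)}^2\Bigr),
\]
and a maximizer $m_\star$ gives $M^\star=m_\star m_\star^\dag$. Since $m_\star m_\star^\dag$ attains $\min_\sigma\tr(m_\star m_\star^\dag(vv^\dag-\sigma))=|\langle v,m_\star\rangle|^2-\|m_\star\|_{(k)}^2=T_k(v)$, it is an optimal strategy for the $M$-player, so pairing it with any optimal $\sigma^\star\in\cI_k$ for the $\sigma$-player produces a saddle point, as required.

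\smallskip
\noindent\emph{Step 2: solving the vector optimization.} By the rearrangement inequality ($\langle v,m\rangle\le\langle v^\downarrow,m^\downarrow\rangle$, while $\|m\|_{(k)}$ and $\|m\|_2$ depend only on sorted absolute values) an optimal $m$ may be taken real, nonnegative and nonincreasing, so $\|m\|_{(k)}^2=\sum_{i=1}^k m_i^2$ and we must maximize $\bigl(\sum_{i=1}^d v_i m_i\bigr)^2-\sum_{i=1}^k m_i^2$ subject to $m_1\ge\cdots\ge m_d\ge0$ and $\sum_i m_i^2\le1$. This is a convex quadratic maximized over a polyhedron, so the optimum is pinned down by its active set; writing out the KKT conditions --- in the spirit of the derivation of \Cref{lem:k_supp_efficient_expr,lem:optimal_soln_to_k_supp} for the $k$-support norm (cf.\ \cite[Sec.~4]{AFS12}) --- shows that an optimal $m$ has a simple shape (with any part possibly empty): a prefix on which $m_i\propto v_i$, a single constant ``plateau'' passing through the index $k$, and a tail on which again $m_i\propto v_i$. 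The configuration is determined by the two breakpoints together with the $\ell_2$-normalization, and for each admissible position of the breakpoint near $k$ --- of which there are $O(k)$ --- the remaining data is computed in closed form using the partial sums $s^{v}_j$ precomputed after the sort, in $O(d)$ time. Taking the best of these $O(k)$ candidates gives $T_k(v)$ together with $m_\star$, hence $M^\star=m_\star m_\star^\dag$, in total time $O(dk+d\log d)$.

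\smallskip
\noindent\emph{Main obstacle.} The minimax manipulations and the rank-one reduction of Step 1 are routine once the ``at most one positive eigenvalue'' fact is spotted; the real work is Step 2 --- establishing that the top-$k$ support set of the optimal $m$ is a prefix, that only the claimed three-region shape can occur at optimality, and that the resulting scan over breakpoints is both correct and fits inside the $O(dk)$ budget. This is precisely the content that has to be turned into \Cref{alg:td_approx}.
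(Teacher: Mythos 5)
Your route is the same as the paper's: Step 1 reproduces \Cref{lem:rank_1_suffices,lem:main_td_opt_problem_real} (rank-one measurements via the single-positive-eigenvalue observation, then a minimax/saddle-point argument, with the final pairing of an optimal $m_\star m_\star^\dag$ with an optimal $\sigma^\star$ being exactly \Cref{lem:saddle_point_existence}), and Step 2 is the plan carried out in \Cref{lem:opt_soln_norm_inner_product_condn,lem:m_restricted_form} and \Cref{alg:td_approx} (a Fenchel/KKT characterization forcing a three-region optimizer, then a scan over two breakpoints using memoized partial sums, which matches the $O(dk+d\log d)$ budget). However, two steps are glossed in a way that hides the actual work. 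In Step 1, the per-$\sigma$ rank-one fact only yields $\min_\sigma\max_{M}=\min_\sigma\max_{\text{rank-one }M}$; to conclude $T_k(v)=\max_{\norm{m}_2\le1}\bigl(|\langle v,m\rangle|^2-\norm{m}_{(k)}^2\bigr)$ you also need that some rank-one $M$ attains the value of $\max_M\min_\sigma$, and this does not follow from ``the minimax identity'': for fixed $\sigma$ the map $m\mapsto|\langle v,m\rangle|^2-\langle m,\sigma m\rangle$ is not concave and the set of rank-one $M$ is not convex, so Sion cannot be reapplied in the $m$-parametrization. The paper closes this by taking a saddle point of the convex--concave game over $\cI_k\times\{0\preceq M\preceq\mathds{1}\}$, replacing its $M$-component by a rank-one best response, and sandwiching with the minimax inequality; the inequality $\max_m(\cdots)\le T_k(v)$ that your argument does give is the easy direction.

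In Step 2, the sentence ``a convex quadratic maximized over a polyhedron, so the optimum is pinned down by its active set'' is wrong on both counts: the objective $\langle v,m\rangle^2-\sum_{i\le k}m_i^2$ is indefinite, and the feasible set contains the Euclidean ball, hence is not polyhedral. More consequentially, because the inner problem is non-convex, first-order/KKT conditions are only necessary, so a scan that merely locates a stationary three-region configuration does not by itself certify optimality. The paper's \Cref{lem:m_restricted_form} is an if-and-only-if characterization whose sufficiency direction is a separate computation (verifying $\norm{u}_{(k,*)}^2+\norm{m}_{(k)}^2=2\langle u,m\rangle$ for the candidate, under the window conditions \cref{eq:theta_condns}); it also identifies the multiplier as $\lambda=T_k(v)$, which is how \Cref{alg:td_approx} reads off the value and why the normalization \cref{eq:td_norm_condn} becomes a cubic in $\lambda$ solved for each pair $(r,\ell)$. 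You correctly flag this as ``the real work,'' but it is precisely the content that is missing from the proposal; everything else agrees with the paper's proof.
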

Our proof will proceed by first showing that the optimal $M$ can be taken to be of the form $mm^\dag$, i.e.~a rank-1 measurement.  Then we derive the structure of $m$, proving that it consists of two pieces that are proportional to the entries of $v$ with a flat region in the middle (when sorted).  At this point, there is only a single parameter and it can be found by solving a cubic equation.  Then in \cref{sec:optimal_density_matrix_td} we will use this to find an optimal approximating ensemble of $k$-sparse states.  The three regions in $m$ correspond to three regions in our random $k$-sparse approximation $w$: the first region is proportional to the entries of $v$, the second region is uniform over a set (from which we will later importance sample), and the third region is zero.

The fact that $M$ can be taken without loss of generality to be rank-one is due to the following lemma.
\begin{lem}\label{lem:rank_1_suffices}
  For any $d\in\bbZ_{>0}$, density matrix $\sigma\in \cD(\bbC^d)$, and unit vector $v \in\bbC^d$, the optimum in
  $
    \max_{0\preceq M\preceq \mathds{1}} \tr(M(v v^\dagger-\sigma))
  $
  is attained by a rank-one projection operator.
\end{lem}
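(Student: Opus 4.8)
The plan is to show that the maximum of the linear functional $M \mapsto \tr(M(vv^\dagger - \sigma))$ over the convex body $\{0 \preceq M \preceq \mathds{1}\}$ is attained at an extreme point, and then to identify the extreme points. First I would invoke the standard fact that a linear function on a compact convex set attains its maximum at an extreme point; since $\{M : 0 \preceq M \preceq \mathds{1}\}$ is compact and convex, it suffices to understand its extreme points. The extreme points of this set are exactly the orthogonal projections, i.e.\ matrices of the form $M = \Pi$ with $\Pi^2 = \Pi = \Pi^\dagger$. (This is itself a short argument: any $M$ with an eigenvalue strictly between $0$ and $1$ can be perturbed along that eigenvector in both directions while staying in the set, hence is not extreme; conversely projections are extreme because $0$ and $1$ are the extreme points of $[0,1]$ and the spectral decomposition reduces to the diagonal case.)

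The second step is to reduce from an arbitrary projection to a rank-$1$ projection. Write the optimal $\Pi = \sum_{i} \pi_i \pi_i^\dagger$ as a sum of rank-$1$ orthogonal projectors onto an orthonormal set $\{\pi_i\}$. Then
\begin{equation}
  \tr(\Pi(vv^\dagger - \sigma)) = \sum_i \tr(\pi_i \pi_i^\dagger (vv^\dagger - \sigma)) = \sum_i \langle \pi_i, (vv^\dagger - \sigma)\pi_i\rangle.
\end{equation}
This is a sum of terms of the same form; at least one summand is at least as large as the average, so replacing $\Pi$ by that single rank-$1$ projector $\pi_i \pi_i^\dagger$ does not decrease the objective. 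Hence the maximum over projections — and therefore over all $0 \preceq M \preceq \mathds{1}$ — is attained by a rank-$1$ matrix.

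I do not expect a serious obstacle here; the argument is entirely standard convex analysis plus the spectral theorem. The only point requiring mild care is to ensure the chosen rank-$1$ piece is genuinely feasible (it is: any rank-$1$ orthogonal projector $\pi\pi^\dagger$ with $\|\pi\|=1$ satisfies $0 \preceq \pi\pi^\dagger \preceq \mathds{1}$), and to note that dropping the remaining terms is justified precisely because we are maximizing, not because each dropped term is individually non-positive. An alternative one-line route, if preferred, is to note that $vv^\dagger - \sigma$ is Hermitian, so by Weyl's variational characterization $\max_{0 \preceq M \preceq \mathds{1}} \tr(M(vv^\dagger - \sigma))$ equals the sum of the positive eigenvalues of $vv^\dagger - \sigma$; but since $vv^\dagger$ is rank $1$, the operator $vv^\dagger - \sigma$ has at most one positive eigenvalue (its positive part has rank at most that of $vv^\dagger$ by eigenvalue interlacing), so the optimal $M$ is the rank-$1$ projector onto the corresponding eigenvector. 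Either way the statement follows.
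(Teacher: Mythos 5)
Your primary argument has a genuine gap at the reduction from a projection to a rank-$1$ projector. From $\tr(\Pi(vv^\dagger-\sigma))=\sum_i \langle \pi_i,(vv^\dagger-\sigma)\pi_i\rangle$, choosing the largest summand only guarantees a value at least $\tr(\Pi(vv^\dagger-\sigma))/\mathrm{rank}(\Pi)$, not at least $\tr(\Pi(vv^\dagger-\sigma))$, so ``replacing $\Pi$ by that single rank-$1$ projector does not decrease the objective'' does not follow. Moreover, no argument of this shape can succeed without using the special structure of $vv^\dagger-\sigma$: for a general Hermitian $A$ the maximum of $\tr(MA)$ over $0\preceq M\preceq \mathds{1}$ equals the sum of all positive eigenvalues of $A$, and when $A$ has two or more positive eigenvalues this is attained only by operators acting as the identity on the positive eigenspace, hence never by a rank-$1$ matrix. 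The extreme-point step (extreme points are projections) is correct but does not address the real content of the lemma, which is that $vv^\dagger-\sigma$ --- a rank-one matrix minus a positive semidefinite one --- has at most one positive eigenvalue; your main route never uses this.

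Your ``alternative one-line route'' at the end is correct and is essentially the paper's proof: by Weyl's inequality (the paper's spectral-stability fact), $\lambda_2(vv^\dagger-\sigma)\le \lambda_2(vv^\dagger)+\lambda_1(-\sigma)\le 0$, so there is at most one positive eigenvalue, and the maximum, being the sum of the positive eigenvalues, is attained by the rank-$1$ projector onto the corresponding eigenvector. The only loose end is the degenerate case with no positive eigenvalue: this forces $\sigma=vv^\dagger$ (otherwise $\langle v,(vv^\dagger-\sigma)v\rangle=1-\langle v,\sigma v\rangle>0$), in which case the matrix is zero and any rank-$1$ projector attains the maximum value $0$, exactly as the paper notes. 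Promote the alternative route to the main argument and add that one sentence, and the proof is complete.
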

\begin{proof}
  By \Cref{fact:spectral_stability}, which is a standard consequence of Weyl's Inequality, the largest eigenvalue of the matrix $v v^\dagger-\sigma$ lies in the interval $[0,1]$, while the second largest eigenvalue lies in $[-1,0]$. If $\sigma = v v^\dagger$ then the optimal value of the maximization is zero, and $M= m m^\dagger$ is an optimal solution for any unit vector ${m}\in\bbC^d$. Otherwise, $v v^\dagger-\sigma$ has just a single positive eigenvalue, and an optimal solution is given by the rank-one projector onto this eigenspace.
\end{proof}
We also have the following related lemma.
\begin{lem}\label{lem:density_matrix_suffices}
     For any $d\in\bbZ_{>0}$, density matrix $\sigma\in \cD(\bbC^d)$, and unit vector $v \in\bbC^d$, it holds that \ba\label{eq:maximization_equivalence_density_matrices}\max_{0\preceq M\preceq \mathds{1}} \tr(M(v v^\dagger-\sigma))=\max_{\rho\in \cD(\bbC^d)} \tr(\rho(v v^\dagger-\sigma)).\ea Moreover, if $\sigma\neq vv^\dag$, then the optimum for the maximization in the right-hand side of \cref{eq:maximization_equivalence_density_matrices} is rank-one and unique.
\end{lem}
\begin{proof}
    First, note $\cD(\bbC^d)\subset \{M\in\bbC^{d\times d}: 0\preceq M\preceq \mathds{1}\}$, and the rank-one optimal solution for the maximization on the left-hand side of \cref{eq:maximization_equivalence_density_matrices}, which exists by \cref{lem:rank_1_suffices}, is also a feasible solution for the maximization on the right-hand side. This proves equality of the two optimal values.
    
    The uniqueness part of the claim follows from the fact that, as shown in the proof of \cref{lem:rank_1_suffices}, if $\sigma\neq vv^\dag$ then the matrix $A\coloneq vv^\dag - \sigma$ has a single positive eigenvalue $\lambda_1(A) > 0$ which is the optimal value for the maximization. Indeed, let $P$ be the orthogonal projection onto this eigenspace, and set $Q\coloneq \mathds{1}-P$. Then for any $\rho\in\cD(\bbC^d)$ we have
    \begin{align}
        \tr(\rho A) &= \tr(\rho PA) + \tr(\rho QA)\leq \tr(\rho PA) = \lambda_1(A)\tr(\rho P)
    \end{align}
    and the right-hand side of the above is strictly less than $\lambda_1(A)$ unless $\rho=P$.
\end{proof}
The proof of the next lemma uses the above results as well as some of the facts in \Cref{sec:minimax}, following a similar reasoning to the proof of Lemma~3 in \cite{AKT24-state}.
\begin{lem}\label{lem:main_td_opt_problem_real}
  Let $k,d\in\bbZ_{>0}$ such that $k\leq d$ and $v \in\bbR^d$ be a unit vector such that $v_1\geq v_2\geq\dots\geq v_d> 0$.
  Then $T_k(v)$ is equal to
  \begin{align}\label{eq:t_k_opt_simplified}
    \begin{aligned}
\textnormal{max.} & \quad \langle m,v\rangle^2-\norm{{m}}_{(k)}^2 \\
      \textnormal{s.t.} & \quad {m}\in\bbR^d                                       \\
                        & \quad \norm{{m}}_2 = 1.
    \end{aligned}
  \end{align}
  Moreover, the minimax problem defining $T_k(v)$ in \cref{eq:main_td_optimization_problem} has a saddle point $(\sigma^\star,M^\star)$ for some $\sigma^\star\in\cI_k$ and $M^\star = m^\star(m^\star)^\dag$ where $m^\star$ is the optimal solution to the maximization in \cref{eq:t_k_opt_simplified}.
\end{lem}
\begin{proof}
  If $k=d$ then $T_k(v)=0$ and the objective function in \cref{eq:main_td_optimization_problem} is just $m\mapsto \langle m, v\rangle^2 - 1$, which clearly has an optimal value of zero. Also, the statement about saddle points is trivial in this case. Hence, from now on we assume $k<d$. Let $\cM$ denote the set of Hermitian matrices $M\in \bbC^{d\times d}$ such that $0\preceq M\preceq \mathds{1}$ and $\cM_1$ the subset of such matrices which are rank-one projection operators. We have
  \begin{align}
    \max_{M\in \cM}\min_{\sigma\in\cI_k}\tr(M(vv^\dag -\sigma)) &= \min_{\sigma\in\cI_k}\max_{M\in\cM}\tr(M(vv^\dag-\sigma))\label{eq:original_minimax}\\
    &= \min_{\sigma\in\cI_k}\max_{\rho\in\cD(\bbC^d)}\tr(\rho(vv^\dag-\sigma))\label{eq:density_matrix_minimax}\\
    &=\max_{\rho\in\cD(\bbC^d)}\min_{\sigma\in\cI_k}\tr(\rho(vv^\dag-\sigma)),\label{eq:optimization_to_rewrite}
  \end{align}
  where in the first and third lines we used Sion's Minimax Theorem (\Cref{thm:fan_minimax}), and in the second line we used \Cref{lem:density_matrix_suffices}. By \Cref{lem:saddle_point_existence} there exists a saddle point $(\sigma^\star, \rho^\star)\in\cI_k\times \cD(\bbC^d)$ for the minimax in \cref{eq:density_matrix_minimax} such that
  \begin{align}\label{eq:max_at_eq_point}
    \tr(\rho^\star(v v^\dagger-\sigma^\star)) & = \max_{\rho\in \cD(\bbC^d)} \tr(\rho(v v^\dagger-\sigma^\star)).
  \end{align}
  Appealing once again to \Cref{lem:density_matrix_suffices}, the maximum in \cref{eq:max_at_eq_point} is uniquely attained by a rank-one projection operator because $k < d$, which implies $\sigma^\star\neq vv^\dag$. Hence, $\rho^\star\in\cM_1$ and the right-hand side of \cref{eq:optimization_to_rewrite} is equal to
  \begin{align}
   \max_{M\in \cM_1}\min_{\sigma\in\cI_k}\tr(M(v v^\dagger-\sigma))
     & = \max_{{m}\in\bbR^d:\ \norm{{m}}_2=1} \left\{\lvert \langle m, v \rangle \rvert^2 - \max_{\sigma\in\cI_k}\ \langle m, \sigma m \rangle\right\}\label{eq:rank_1_m_formulation} \\
     & = \max_{{m}\in\bbR^d:\ \norm{{m}}_2=1} \left\{\lvert \langle m,v\rangle\rvert^2 - \sum_{i=1}^k|m_i^\downarrow|^2\right\},\label{eq:max_with_complex_vectors}
  \end{align}
  where the second line follows from \Cref{lem:fidelity_equals_top_k}. This proves the first part of the claim.

  For the second part, it suffices to show that $(\sigma^\star, \rho^\star)$ is also a saddle point for the minimax in \cref{eq:original_minimax}. But this is clear by comparing the left-hand side of \cref{eq:rank_1_m_formulation} with that of \cref{eq:original_minimax}, which we have shown to be equal. Explicitly, we have
  \begin{align}
      \rho^\star \in \argmax_{M\in\cM} \min_{\sigma\in \cI_k}\tr(M(vv^\dag-\sigma))\quad \textnormal{and}\quad \sigma^\star\in \argmin_{\sigma\in\cI_k}\max_{M\in\cM}\tr(M(vv^\dag-\sigma))
  \end{align}
  where the second inclusion follows from \Cref{lem:rank_1_suffices}.
\end{proof}
The optimization in \cref{eq:t_k_opt_simplified} bears some resemblance to the computation of the proximal operator for the $k$-support norm, as studied in~\cite{AFS12}, and can be solved in a similar way. To begin executing this strategy, we require the following fact.
\begin{lem}\label{lem:opt_soln_norm_inner_product_condn}
  A point ${m}\in \bbR^d$ is an optimal solution for the optimization problem in \cref{eq:t_k_opt_simplified} with $k<d$ if and only if there exists a $\lambda\geq 0$ such that
  \begin{align}\label{eq:m_opt_condn}
    \norm{{u}}_{(k,*)}^2 + \norm{{m}}_{(k)}^2 & = 2\langle u, m\rangle\quad \textnormal{where}\quad {u}\coloneq \langle v , m\rangle v -\lambda {m}.
  \end{align}
\end{lem}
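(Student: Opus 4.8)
The plan is to read \cref{eq:m_opt_condn} off as the first-order optimality condition for \cref{eq:t_k_opt_simplified}. The one subtlety is that this program is \emph{not} concave: writing $g(m)\coloneq\langle v^\downarrow,m\rangle^2-\norm{m}_{(k)}^2$ and $a\coloneq v^\downarrow$ (a unit vector), the term $\langle a,m\rangle^2$ is convex, so one cannot apply a KKT theorem to \cref{eq:t_k_opt_simplified} directly. Two observations tame this. First, $g(a)=1-\norm{a}_{(k)}^2\geq0$, so $T_k(v)\geq0$. Second, $g$ is $2$-homogeneous, so $T_k(v)=\max_{\norm{m}_2\leq1}g(m)$ with the maximum attained on the sphere; moreover any optimal $m$ satisfies $\norm{m}_{(k)}>0$, whence $\langle a,m\rangle^2=T_k(v)+\norm{m}_{(k)}^2>0$, and since \cref{eq:m_opt_condn} is invariant under $m\mapsto-m$ we may assume $\langle a,m\rangle>0$.

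For the ``only if'' direction, let $m$ be optimal and set $c\coloneq\langle a,m\rangle>0$. Linearizing the convex term through $\langle a,x\rangle^2\geq 2c\langle a,x\rangle-c^2$ (equality at $x=m$), the concave function $\tilde g(x)\coloneq 2c\langle a,x\rangle-c^2-\norm{x}_{(k)}^2$ satisfies $\tilde g\leq g$ and $\tilde g(m)=g(m)=T_k(v)$, so $m$ maximizes $\tilde g$ over the sphere; a short homogeneity argument (if $\norm{x}_2<1$ and $\tilde g(x)>T_k(v)$, then $g(x/\norm{x}_2)=g(x)/\norm{x}_2^2\geq\tilde g(x)/\norm{x}_2^2>T_k(v)$, impossible) upgrades this to: $m$ maximizes $\tilde g$ over the unit \emph{ball}. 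This is now a genuine convex program with Slater point $0$, so its KKT conditions hold: there is $\lambda\geq0$ with $2\lambda m\in\partial\tilde g(m)=2ca-\partial(\norm{m}_{(k)}^2)$, equivalently $u\coloneq ca-\lambda m=\langle a,m\rangle a-\lambda m\in\partial[\tfrac12\norm{m}_{(k)}^2]$. Finally, by \Cref{lem:fenchel_conjugate_of_norm} the Fenchel conjugate of $\tfrac12\norm{\cdot}_{(k)}^2$ is $\tfrac12\norm{\cdot}_{(k,*)}^2$, so the Fenchel--Young equality characterization of subgradients (see \cref{sec:convex}) turns $u\in\partial[\tfrac12\norm{m}_{(k)}^2]$ into $\tfrac12\norm{m}_{(k)}^2+\tfrac12\norm{u}_{(k,*)}^2=\langle u,m\rangle$, which is precisely \cref{eq:m_opt_condn}.

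For the ``if'' direction I would reverse this chain: given $\norm{m}_2=1$ and $\lambda\geq0$ with $u=\langle a,m\rangle a-\lambda m\in\partial[\tfrac12\norm{m}_{(k)}^2]$, set $c\coloneq\langle a,m\rangle$ and $\tilde g(x)\coloneq 2c\langle a,x\rangle-c^2-\norm{x}_{(k)}^2$; then $2\lambda m\in\partial\tilde g(m)$, and since the KKT conditions are also \emph{sufficient} for the convex program $\max_{\norm{x}_2\leq1}\tilde g(x)$, $m$ maximizes $\tilde g$ over the ball, hence over the sphere. On the hyperplane slice $\{\norm{x}_2=1:\langle a,x\rangle=c\}$ one has $\tilde g=g$, so $m$ is $g$-optimal \emph{within that slice}. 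The main obstacle is promoting slice-optimality to optimality over the whole sphere, since $\tilde g\leq g$ is tight only on $\{\langle v^\downarrow,\cdot\rangle=\langle v^\downarrow,m\rangle\}$. I would close the gap by taking a global maximizer $m^\star$ of $g$ (it exists by compactness); by the ``only if'' direction it satisfies \cref{eq:m_opt_condn} with some $\lambda^\star\geq0$ and $c^\star=\langle a,m^\star\rangle$, and I would argue that the inclusion $c^\star a-\lambda^\star m^\star\in\partial[\tfrac12\norm{\cdot}_{(k)}^2]$ forces $\lvert c^\star\rvert$ to be unique --- this being exactly the structural analysis of the optimizer carried out in the remainder of \cref{sec:comp_opt_meas}, which parallels the proximal-operator computation of \cite{AFS12}. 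Granting that, $c^2=(c^\star)^2$; replacing $m^\star$ by $-m^\star$ if necessary so that $\langle a,m^\star\rangle=c$, slice-optimality of $m$ gives $g(m)\geq g(m^\star)=T_k(v)$, so $m$ is optimal. A secondary point needing care throughout is the sign $\lambda\geq0$ of the spherical multiplier, which is precisely why we relax to the ball and invoke $2$-homogeneity together with $T_k(v)\geq0$.
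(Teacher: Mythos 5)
Your ``only if'' half is correct, and it is essentially a more careful version of the paper's argument: the paper minimizes $f(m)=\tfrac12\norm{m}_{(k)}^2-\tfrac12\langle v^\downarrow,m\rangle^2$ over the unit ball, invokes Fermat's rule (\Cref{lem:fermats_rule}) with $\cN_\Omega(m)=\{\lambda m:\lambda\geq0\}$ at a unit vector $m$, and converts the inclusion $u\in\partial\bigl(\tfrac12\norm{\cdot}_{(k)}^2\bigr)(m)$ into \cref{eq:m_opt_condn} via \Cref{cor:dual_norm_fenchel_conj} --- the same Fenchel--Young step you use. Your linearization $\langle a,x\rangle^2\geq 2c\langle a,x\rangle-c^2$, the sign/normalization preliminaries, and the homogeneity argument relaxing the sphere to the ball are all sound, and they make explicit the nonconvexity issue that the paper handles in one stroke by citing Fermat's rule for both directions.

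The genuine gap is in your ``if'' direction. From the hypothesis you only get that $m$ maximizes $g$ over the slice $\{\norm{x}_2=1,\ \langle v^\downarrow,x\rangle=\langle v^\downarrow,m\rangle\}$, and to conclude you must know that this slice contains a global maximizer, i.e.\ that $\lvert\langle v^\downarrow,m\rangle\rvert$ coincides with $\lvert\langle v^\downarrow,m^\star\rangle\rvert$ for some optimal $m^\star$. You assert that the stationarity inclusion ``forces $\lvert c^\star\rvert$ to be unique'' and defer this to the structural analysis later in \Cref{sec:comp_opt_meas}; but that analysis (\Cref{lem:m_restricted_form} and the correctness of \Cref{alg:td_approx}) is itself derived \emph{from} the present lemma, so the appeal is circular, and no independent proof of the uniqueness claim is offered. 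Moreover the claim is not obviously easier than the lemma itself: any pair $(m,\lambda)$ satisfying \cref{eq:m_opt_condn} with $\norm{m}_2=1$ automatically has $\norm{u}_{(k,*)}=\norm{m}_{(k)}$ and $\langle u,m\rangle=\norm{m}_{(k)}^2=c^2-\lambda$, hence objective value $g(m)=\lambda$; so ruling out admissible points with differing $\lvert c\rvert$ (and differing $\norm{m}_{(k)}$) is precisely ruling out spurious stationary points with $\lambda<T_k(v)$, which is the content of the sufficiency direction you are trying to prove. As written, that half remains open: you would need either a self-contained derivation (e.g.\ redo the $r,\ell,\lambda$ analysis for arbitrary points satisfying \cref{eq:m_opt_condn} and show the normalization condition pins down $\lambda$), or a direct sufficiency argument from the subgradient inclusion over the ball, which is how the paper reads it off from Fermat's rule.
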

\begin{proof}
  We make use of basic facts from convex analysis, given in \Cref{sec:convex}. The optimization problem in \cref{eq:t_k_opt_simplified} is equivalent to minimizing $f({m})$ over all ${m}\in \Omega$, where $f({m})\coloneq-\frac{1}{2}\langle m , v\rangle^2 + \frac{1}{2}\norm{{m}}_{(k)}^2$ and $\Omega\coloneq \{{m}\in \bbR^d: \norm{{m}}_2 \leq 1\}$.
  Relaxing the norm constraint here is valid by homogeneity of the objective function in \cref{eq:t_k_opt_simplified}, along with the fact that the optimal value is greater than zero when $k < d$. By \cref{lem:fooc} the point ${m}$ is an optimal solution if and only if $0\in \partial f({m}) + \cN_{\Omega}({m})$, where $\partial f(m)$ is the subdifferential of $f$ at $m$ and $\cN_{\Omega}(m)$ is the normal cone at $m$. When ${m}$ is a unit vector, the normal cone $\cN_{\Omega}({m})$ is just the ray generated by the vector ${m}$. Also, the subdifferential can be computed as
  $
    \partial f({m}) = - \langle m , v\rangle v + \partial\big(\frac{1}{2}\norm{\cdot}_{(k)}^2\big)({m}).
  $
 Therefore the optimality condition for a given unit vector ${m}$ is equivalent to
  \begin{align}
    \langle v , m\rangle v-\lambda {m}\in \partial\big(\frac{1}{2}\norm{\cdot}_{(k)}^2\big)({m})
  \end{align}
  for some $\lambda\geq 0$. The equivalence between this condition and \cref{eq:m_opt_condn} is a direct consequence of \Cref{cor:dual_norm_fenchel_conj}.
\end{proof}

We are now ready to show that the optimal measurement vector ${m}$ has a simple functional form. An example of such an $m$ is given in \Cref{fig:m-example}.  This restricted form will allow us to efficiently compute it, as well as the optimal trace distance value. We first establish some notation.
\begin{dfn}\label{dfn:theta_def}
  Let $r<k<\ell\leq d$ be nonnegative integers and ${v}\in\bbC^d$ be a unit vector.
  For any $\lambda\geq 0$ we set
  \begin{align}
    \theta^{{v}}_{r,\ell}(\lambda) & \coloneq \frac{s_{k-r}^{{v}}-s_{\ell}^{{v}}}{r+1+(\ell-k+r)\lambda}
  \end{align}
  where $s^{{v}}_j$ is defined as in \cref{eq:s_defn}.
\end{dfn}
This definition excludes the case where $k=d$, since in that case the $v$ is already $d$-sparse and the trace distance is 0.
Also define $v_0\coloneq +\infty$ and $v_{d+1}\coloneq 0$.
\begin{lem}\label{lem:m_restricted_form}
  Let $k,d\in\bbZ_{>0}$ such that $k < d$ and $v \in\bbR^d$ be a unit vector such that $v_1\geq v_2\geq\dots\geq v_d> 0$. A solution ${m}\in\bbR^d$ to the optimization problem in \cref{eq:t_k_opt_simplified} is optimal if and only if:
  \begin{enumerate}
    \item ${m} = \frac{{\widetilde{m}}}{\norm{{\widetilde{m}}}_2}$ where
          \begin{align}\label{eq:tilde_m_defn}
            {\widetilde{m}_i} = \begin{cases}
                                  \frac{v_i}{1+\lambda}, & i=1,\dots,k-r-1    \\
                                  \theta_{r,\ell}^{{v}}(\lambda),   & i=k-r,\dots,\ell-1 \\
                                  \frac{v_i}{\lambda},   & i=\ell,\dots,d
                                \end{cases}
          \end{align}
          for some $r\in \{0,1,\dots,k-1\}$, $\ell\in\{k+1,\dots,d+1\}$, and $\lambda> 0$ such that
          \begin{align}\label{eq:theta_condns}
            \theta^{{v}}_{r,\ell}(\lambda)\in \frac{1}{1+\lambda}[v_{k-r}, v_{k-r-1})\cap \frac{1}{\lambda}(v_{\ell}, v_{\ell-1}],
          \end{align} and
    \item $\langle v , \widetilde{m}\rangle=1$.
  \end{enumerate}
  Moreover, for such a unit vector ${m}$, it holds that
  $
    \lambda = T_k(v).
  $
\end{lem}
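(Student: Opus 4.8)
The strategy is to characterize optimal solutions ${m}$ of \cref{eq:t_k_opt_simplified} by unpacking the optimality condition from \Cref{lem:opt_soln_norm_inner_product_condn}, namely $\norm{{u}}_{(k,*)}^2 + \norm{{m}}_{(k)}^2 = 2\langle u,m\rangle$ with ${u} = \langle v^\downarrow,m\rangle v^\downarrow - \lambda m$. First I would observe that, by Cauchy–Schwarz for the dual pair of norms, one always has $\langle u,m\rangle \leq \norm{u}_{(k,*)}\norm{m}_{(k)}$, so the equality in \cref{eq:m_opt_condn} forces $\norm{u}_{(k,*)} = \norm{m}_{(k)}$ and, moreover, that $u$ and $m$ are ``aligned'' in the sense that $u$ is a maximizer of the top-$k$-norm-dual pairing against $m$ (equivalently, ${u}/\norm{u}_{(k,*)}$ and ${m}/\norm{m}_{(k)}$ saturate the duality). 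Since $u = \langle v^\downarrow,m\rangle v^\downarrow - \lambda m$ is an explicit affine function of $m$ and $v^\downarrow$, this alignment constraint is what pins down the piecewise structure of $m$: dividing through, the unnormalized vector $\widetilde m$ should satisfy $\langle v^\downarrow, \widetilde m\rangle = 1$ (this is condition 2, fixing the overall scale) and then $u = v^\downarrow - \lambda \widetilde m$ up to rescaling.

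Next I would extract the functional form of $\widetilde m$. Assuming (WLOG by \cref{eq:perm_diag_invariance}) that $v = v^\downarrow$ has sorted nonnegative entries and no zero entry, the vector $\widetilde m$ is itself sorted in the same order, so I can work coordinatewise. The key input is \Cref{lem:optimal_soln_to_k_supp}, which describes exactly which vectors $x$ attain the Fenchel optimum $\langle v,x\rangle - \tfrac12\norm{x}_{(k)}^2 = \tfrac12\norm{v}_{(k,*)}^2$ — these have a flat plateau in the ``middle'' block $j = k-r,\dots$ and agree with the target on the top $k-r-1$ coordinates. Applying this with the roles played here: $u = v - \lambda\widetilde m$ plays the role of the argument whose top-$k$ norm is being computed against $\widetilde m$, and the alignment/saturation condition says $\widetilde m$ is the ``dual witness'' of $u$ in the same way $x$ witnesses $v$ in \cref{eq:k_supp_opt_problem}. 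Matching the subgradient structure of $\tfrac12\norm{\cdot}_{(k)}^2$ (whose value at $m$ along the top coordinates is just $m_i$, and on coordinates below index $k$ is zero, with the boundary index contributing a shared average) gives three regimes: on $i = 1,\dots,k-r-1$ the equation is $\widetilde m_i(1+\lambda) = v_i$, i.e.\ $\widetilde m_i = v_i/(1+\lambda)$; on the tail $i = \ell,\dots,d$ the top-$k$ seminorm does not ``see'' coordinate $i$, giving $\lambda \widetilde m_i = v_i$, i.e.\ $\widetilde m_i = v_i/\lambda$; and on the middle block $i = k-r,\dots,\ell-1$ the plateau value $\theta^v_{r,\ell}(\lambda)$ must be constant, determined by summing the relation $v_i = \widetilde m_i + \lambda\widetilde m_i \cdot(\text{indicator of being in the counted block})$ over the block — carrying out this linear algebra yields precisely $\theta^v_{r,\ell}(\lambda) = (s^v_{k-r} - s^v_\ell)/(r+1+(\ell-k+r)\lambda)$, and the range constraint \cref{eq:theta_condns} is exactly the requirement that this choice of $r$ (via \cref{eq:condition k support norm} applied to $u$) and this choice of $\ell$ (the first tail index) are the consistent ones, i.e.\ that the sorted order of $\widetilde m$ is respected at the two block boundaries.

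Finally, for the identity $\lambda = T_k(v)$: having established that at the optimum $\norm{u}_{(k,*)} = \norm{m}_{(k)}$ and $\langle u,m\rangle = \norm{u}_{(k,*)}\norm{m}_{(k)} = \norm{m}_{(k)}^2$, I substitute back into the objective $\langle m,v\rangle^2 - \norm{m}_{(k)}^2$. Using $u = \langle v,m\rangle v - \lambda m$ and $\norm{m}_2 = 1$, compute $\langle u,m\rangle = \langle v,m\rangle^2 - \lambda$; combined with $\langle u,m\rangle = \norm{m}_{(k)}^2$ this gives $\langle v,m\rangle^2 - \norm{m}_{(k)}^2 = \lambda$, and the left side is exactly $T_k(v)$ by \Cref{lem:main_td_opt_problem_real}. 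The main obstacle I anticipate is the bookkeeping in the middle step: one must check both directions of the ``if and only if'' (that the displayed piecewise form together with $\langle v^\downarrow,\widetilde m\rangle = 1$ and the range condition \cref{eq:theta_condns} actually implies the subgradient inclusion, not just the reverse), and one must verify that the index $r$ appearing in \Cref{lem:optimal_soln_to_k_supp} applied to $u$ genuinely matches the $r$ appearing in the definition of $\widetilde m$ — i.e.\ that the constraint \cref{eq:theta_condns} is equivalent to the uniqueness condition \cref{eq:condition k support norm} for $u$, rescaled by $1/(1+\lambda)$ on the top and $1/\lambda$ on the tail. Handling the degenerate boundary values $v^\downarrow_0 = +\infty$, $v^\downarrow_{d+1} = 0$ and the case $\ell = d+1$ (empty tail) cleanly is a minor but necessary piece of care.
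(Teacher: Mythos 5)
Your plan is correct and follows essentially the same route as the paper's proof: it starts from \Cref{lem:opt_soln_norm_inner_product_condn}, converts the equality $\norm{u}_{(k,*)}^2+\norm{m}_{(k)}^2=2\langle u,m\rangle$ into the statement that $m$ is Fenchel-optimal for $u$ (your Cauchy--Schwarz saturation argument is exactly the content of \Cref{cor:dual_norm_fenchel_conj}), applies \Cref{lem:optimal_soln_to_k_supp} with $u=\langle v,m\rangle v-\lambda m$ in the role of the target to obtain the three-block form and the plateau value $\theta^v_{r,\ell}(\lambda)$, and closes with the same computation $\langle v,m\rangle^2-\norm{m}_{(k)}^2=\langle v,m\rangle^2-\langle u,m\rangle=\lambda$. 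The bookkeeping you defer is precisely what the paper carries out explicitly: both directions of the equivalence, ruling out $\lambda=0$ by complementary slackness, and showing $\ell\geq k+1$ using the hypothesis that $v$ has no zero entries.
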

\begin{proof}
  By \Cref{lem:opt_soln_norm_inner_product_condn}, it holds that ${m}$ is optimal for the problem in \cref{eq:t_k_opt_simplified} if and only if $\norm{{m}}_2=1$ and there is a $\lambda> 0$ such that $\norm{{u}}_{(k,*)}^2 + \norm{{m}}_{(k)}^2 = 2\langle u , m\rangle$ where ${u}=\langle v,m\rangle v -\lambda{m}$. Note we are neglecting the case where $\lambda=0$ since this implies the optimal solution is $0$, which contradicts $k < d$ and ${v}$ having no entries equal to zero\footnote{This is by complementary slackness: since $\lambda$ is a Lagrange multiplier for the constraint $\norm{m}_2\leq 1$, if $\lambda=0$ then there is an $m$ with norm less than one which attains the optimal value. But this can only happen if that value is $0$.}. Let us assume that $m$ is nonnegative and sorted in nonincreasing order: we will justify this assumption shortly. By \Cref{cor:dual_norm_fenchel_conj}, the unit vector ${m}$ is optimal for the problem
  \begin{align}\label{eq:another_opt_in_lem_proof}
    \begin{aligned}
      \textnormal{max.} & \quad \langle u, m^\prime\rangle - \frac{1}{2}\norm{m^\prime}_{(k)}^2 \\
      \textnormal{s.t.} & \quad m^\prime\in\bbR^d.
    \end{aligned}
  \end{align}
  Here, we can assume ${u}$ is also nonnegative and sorted without loss of generality. Indeed, if $u_j< u_{j+1}$ for some index $j\in [d-1]$, we must have $m_j = m_{j+1}$. Otherwise, ${m}$ is not optimal for the maximization in \cref{eq:t_k_opt_simplified}. But then $v_j < v_{j+1}$ from the definition of $u$, which contradicts the assumption that $v$ is sorted.
  
  Let us first prove necessity of the two conditions in the lemma. Suppose $\ell$ is the least positive integer such that $u_\ell=0$, with $\ell\coloneq d+1$ if such an integer does not exist, and $r$ is the unique integer such that
  \begin{align}
    u_{k-r-1} > \frac{s^{{u}}_{k-r}}{r+1} \geq u_{k-r}.
  \end{align}
  Observe that we must have $\ell \geq k+1$; otherwise, $u_j = m_j$ for all $j\in [d]$ is an optimal solution to the problem in \cref{eq:another_opt_in_lem_proof}. But then this implies $u_j = \langle v,u\rangle v_j/(1+\lambda)$ for all $j\in [d]$, and therefore $v_{\ell}=0$, contradicting the assumption that ${v}$ has no entries equal to zero. Next, by the definition of $\ell$, we have $m_{j}=\langle v, m\rangle v_j/\lambda$ for every $j\in \{\ell,\dots, d\}$. By \Cref{lem:optimal_soln_to_k_supp}, ${m}$ must also satisfy $m_j = u_j$ for every $j\in \{1,\dots,k-r-1\}$ and $m_j= s^{{u}}_{k-r}/(r+1)$ for every $j\in \{k-r,\dots,\ell-1\}$. The first of these relations is equivalent to $m_j = \langle v, m\rangle v_j/(1+\lambda)$ for every $j\in\{1,\dots,k-r-1\}$. The second implies
  \begin{align}
    m_{k-r} & = \frac{1}{r+1}\left(\sum_{j=k-r}^{\ell-1}\langle v, m\rangle v_j - \lambda m_{j}\right) = \frac{1}{r+1}\left(\langle v, m\rangle(s_{k-r}^{{v}}-s_{\ell}^{{v}}) - \lambda (\ell-k+r)m_{k-r}\right)
  \end{align}
  which is true only if $m_{j}=\langle m,v\rangle\theta^{{v}}_{r,\ell}(\lambda)$ for each $j\in\{k-r,\dots, \ell-1\}$. Furthermore, $\theta_{r,\ell}^{{v}}(\lambda)$ must satisfy the condition in \cref{eq:theta_condns} for ${m}$ to be sorted in nonincreasing order. (Some of the inequalities are strict because of the definitions of $r$ and $\ell$.) Thus, we necessarily have ${m} = \langle v, m \rangle{\widetilde{m}}$ at the optimum, where ${\widetilde{m}}$ is of the same form as in \cref{eq:tilde_m_defn}. Additionally, since ${m}$ is a unit vector, we obtain $\langle v, m \rangle=\norm{{\widetilde{m}}}_2^{-1}$, which implies the second condition.

  We now return to the assumption that $m$ is sorted. We can justify this assumption in an identical manner to the final step in the proof of \Cref{lem:optimal_soln_to_k_supp} in \Cref{sec:k_supp_lem_proof}. Suppose there is an optimal $m$ for which $m_j < m_{j+1}$ for some $j\in [d-1]$. Then we must have $v_j = v_{j+1}$ and, consequently, the solution obtained by permuting the indices $j$ and $j+1$ in $m$ is also optimal; otherwise, the objective function in \cref{eq:t_k_opt_simplified} is less than that obtained by such a permutation. But this contradicts the necessary condition on the form of an optimal sorted $m$ derived above. For example, for $j\in [k-r-2]$ we have shown above that $m_j=m_{j+1}$ when $v_j=v_{j+1}$. Other cases may be handled similarly. 

  To see that the two conditions are sufficient, we compute
  \begin{align}
    s^{{u}}_{k-r}
     & = \sum_{i=k-r}^{\ell-1} \left(\langle v, m \rangle v_i - \lambda \frac{\widetilde{m}_i}{\norm{{\widetilde{m}}}_2}\right) & \\
     & = \langle v, m \rangle \left(s^{{v}}_{k-r} - s^{{v}}_\ell - \lambda (\ell-k+r)\theta_{r,\ell}^{{v}}(\lambda)\right)      & \\
     & = \langle v, m \rangle(r+1)\theta_{r,\ell}^{{v}}(\lambda) . \label{eq:s_k_r_u_expr}
  \end{align}
  Thus,
  \begin{align}
    u_{k-r}=\langle v, m \rangle(v_{k-r}-\lambda \theta_{r,\ell}^{{v}})\leq \langle v, m \rangle\frac{v_{k-r}}{1+\lambda}\leq \langle v, m \rangle\theta^{{v}}_{r,\ell}(\lambda)= \frac{s_{k-r}^{{u}}}{r+1}
  \end{align}
  where the first equality uses Condition $2$ in the statement of the lemma, both inequalities follow from \cref{eq:theta_condns} and the final equality is implied by \cref{eq:s_k_r_u_expr}. Similarly, from $\theta_{r,\ell}^{{v}}(\lambda)< \frac{1}{1+\lambda}v_{k-r-1}$, we have
  \begin{align}
    u_{k-r-1} & = \langle v, m \rangle\left(v_{k-r-1} - \lambda \frac{v_{k-r-1}}{1+\lambda}\right) = \langle v, m \rangle\frac{v_{k-r-1}}{1+\lambda} > \langle v, m \rangle\theta_{r,\ell}^{{v}}(\lambda) = \frac{s_{k-r}^{{u}}}{r+1}.
  \end{align}
  Noting that the constraints in \cref{eq:theta_condns} also ensure that ${m}$ is sorted in nonincreasing order, we may write
  \begin{align}
    \norm{{u}}_{(k,*)}^2 & = \sum_{i=1}^{k-r-1} u_i^2 + \frac{(s_{k-r}^{{u}})^2}{r+1}                                                                  \\
                         & = \langle v, m \rangle^2\left(\frac{\norm{{v_{1:k-r-1}}}_2^2}{(1+\lambda)^2} + (r+1)\theta_{r,\ell}^{{v}}(\lambda)^2\right) \\
                         & = \norm{{\widetilde{m}}}_2^{-2}\norm{{\widetilde{m}}}^2_{(k)}                                                               \\
                         & = \norm{{m}}_{(k)}^2,
  \end{align}
  where in the second line we used the fact that $u_i=m_i=\frac{\langle v, m \rangle v_{i}}{1+\lambda}$ for every $i\in [k-r-1]$. By the above and \Cref{lem:opt_soln_norm_inner_product_condn}, if $\norm{{m}}_{(k)}^2 = \langle u, m \rangle$ then ${m}$ is optimal. Since $m_i = \langle v, m \rangle\theta_{r,\ell}^{{v}}(\lambda)$ for $i\in \{k-r,\dots,\ell-1\}$ we have
  \begin{align}
    \langle u, m \rangle
     & = \sum_{i=1}^{k-r-1}m_i^2 + \langle v, m \rangle\theta_{r,\ell}^{{v}}(\lambda)s_{k-r}^{{u}}
    = \sum_{i=1}^{k-r-1}m_i^2 + \langle v, m \rangle^2(\theta_{r,\ell}^{{v}}(\lambda))^2(r+1) = \norm{{m}}_{(k)}^2
  \end{align}
  where the second equality follows from \cref{eq:s_k_r_u_expr}.

  Finally, it is easy to see that $\lambda=T_k(v)$ for an optimal solution ${m}$ of this form, since
  \begin{align}
    \langle v, m \rangle^2-\norm{{m}}_{(k)}^2 & = \langle v, m \rangle^2 - \langle u, m \rangle = \lambda.
  \end{align}
  This concludes the proof.
\end{proof}
\paragraph{Algorithm description.} Condition 2 in \Cref{lem:m_restricted_form}, which pertains to the norm of the vector ${\widetilde{m}}$ in that lemma, is equivalent to
\begin{align}\label{eq:td_norm_condn}
  1 & = \frac{\lVert v_{1:k-r-1}\rVert_2^2}{1+\lambda} + \frac{\norm{v_{k-r:\ell-1}}_1^2}{r+1+(\ell-k+r)\lambda} + \frac{\lVert v_{\ell:d}\rVert_2^2}{\lambda},
\end{align}
a cubic equation in terms of $\lambda$. This leads naturally to an algorithm for computing the trace distance and optimal measurement, given in \Cref{alg:td_approx}.

\algrenewcommand\algorithmicrequire{\textbf{Input:}}
\algrenewcommand\algorithmicensure{\textbf{Output:}}
\floatname{algorithm}{Algorithm}
\begin{algorithm}
  \caption{Optimal trace distance \& measurement}\label{alg:td_approx}
  \begin{algorithmic}[1]
    \Require Description of unit vector ${v}\in\mathbb{C}^d$; $k\in \{1,\dots, d\}$
    \Ensure $T_k(v)$ (optimal trace distance); ${m}$ (optimal distinguishing vector)
    \If{$k=d$}
    \State \Return $T_k\gets 0$ and ${m}\gets {v}$
    \EndIf
    \For{$r\in\{0,\dots,k-1\}$, $\ell\in\{k+1,\dots, d+1\}$}
    \State $\Lambda \gets$ positive solutions to the cubic equation in $\lambda$ implied by \cref{eq:td_norm_condn}
    \For{$\lambda\in \Lambda$}
    \If{\cref{eq:theta_condns} holds for these values of $r$, $\ell$, and $\lambda$}
    \State $T_k\gets \lambda$
    \State Set ${\widetilde{m}}$ as in \cref{eq:tilde_m_defn}
    \State ${m}\gets {\widetilde{m}}/\norm{{\widetilde{m}}}_2$
    \State Go to Line~\ref{lst:sort_line}
    \EndIf
    \EndFor
    \EndFor
    \State Re-order and re-introduce phases to ${m}$\label{lst:sort_line}
    \State \Return $T_k$ and ${m}$
  \end{algorithmic}
\end{algorithm}

A direct inspection of \Cref{alg:td_approx} leads to a runtime on the order of $d\log d + k(d-k)d$. The first term is due to the runtime of sorting the vector ${v}$. The second term is due to the $k(d-k)$ for-loop iterations. Within each of these iterations, the runtime is dominated by the computation of the constant (not $\lambda$) terms involved in \cref{eq:td_norm_condn}, each of which can be computed in time on the order of $d$. By a straightforward memoization of the partial sums and norms --- i.e., keeping track of the intermediate values of the sums between the for-loop iterations --- we can remove this order $d$ overhead in the second term. This proves \Cref{thm:td_optimal_value_computation}.
\begin{figure}
  \centering
  \begin{subfigure}[b]{0.495\textwidth}
    \centering
    \includegraphics[width=\linewidth]{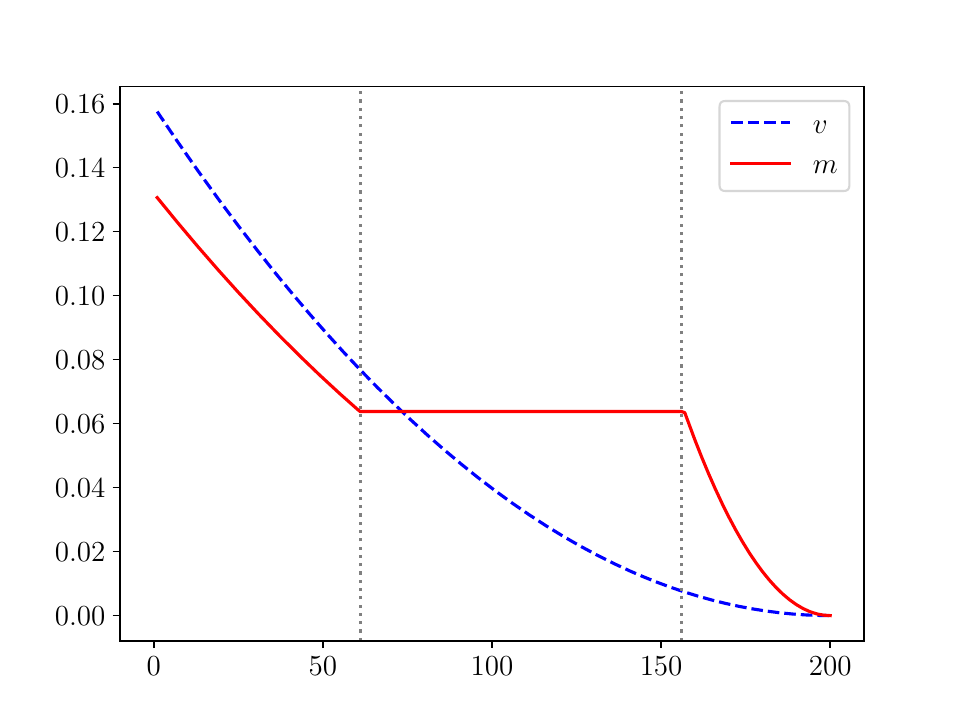}
    \caption{\label{fig:m-example}}
  \end{subfigure}
  \begin{subfigure}[b]{0.495\textwidth}
    \centering
    \includegraphics[width=\linewidth]{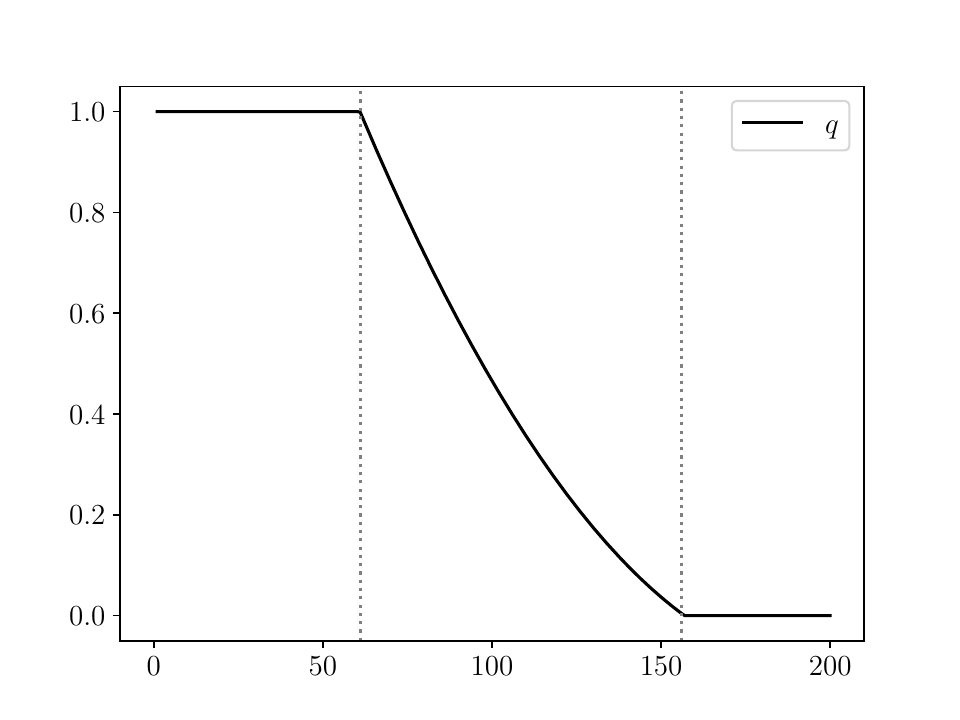}
    \caption{\label{fig:marginals}}
  \end{subfigure}
  \caption{With $d=200$ and $k=100$: (a) Example of a positive unit vector $v\in \bbR^d$ and the corresponding optimal $m\in\bbR^d$ derived in \cref{lem:m_restricted_form}. This $m$ has all three regions described in \cref{eq:tilde_m_defn}, but for other choices of $v$ and $k$, one might see only one or two of these regions. Vertical dotted lines are at $k-r$ and $\ell-1$, where in this case $r=39$ and $\ell=157$. (b) Marginal probability of drawing a $k$-sparse vector with a nonzero weight on the $i^\text{th}$ index, as a function of $i$, with the optimal ensemble of states described in \Cref{sec:optimal_density_matrix_td}. In this case, the first $60$ entries are kept deterministically.}
\end{figure}
\subsection{Computing and sampling the optimal mixed approximation}\label{sec:optimal_density_matrix_td}
In this section, we describe the density matrix $\sigma^\star\in\cI_k$ which attains the minimum value $T_k(v)$ in the optimization problem in \cref{eq:main_td_optimization_problem}, i.e., the optimal randomized truncation of $v$. We then show that this density matrix corresponds to an ensemble of $k$-sparse pure states which can be efficiently sampled, upon obtaining $T_k(v)$ and $m$ from \Cref{alg:td_approx}. The ensemble consists of pure states which are proportional to the vector obtained by keeping the top $k-r-1$ entries of $v$ deterministically, truncating indices greater than $\ell-1$, and drawing $r+1$ nonzero indices at random from the remaining $\ell-k+r$ indices. An example of the marginal inclusion probabilities $q_j$ for $j=k-r,\dots,\ell-1$ is given in \Cref{fig:marginals} and, as we will see in the proof of \Cref{thm:opt_td_state}, it holds that
\begin{align}
  q_j = \frac{v_j}{\theta} - \lambda\quad \textnormal{for every}\quad j\in\{k-r,\dots,\ell-1\}
\end{align}
where $\lambda=T_k(v)$ and $\theta=\theta_{r,\ell}^v(\lambda)$. The algorithm can also straightforwardly be applied to approximate an entangled quantum state by sampling states having Schmidt-rank at most $k$.
\begin{thm}\label{thm:opt_td_state}
  Let $k,d\in\bbZ_{>0}$ satisfy $k \leq d$, let $\eta\in (0,1)$, and let $v\in\bbC^d$ be a unit vector. There is an algorithm, requiring at most $\textnormal{poly}(d,\log(1/\eta))$ steps, which computes the entries in the standard basis of a density matrix $\sigma\in\cI_k$ such that
  $
    T(v,\sigma)\leq T_k(v) + \eta.
  $
\end{thm}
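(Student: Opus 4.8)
The plan is to run \Cref{alg:td_approx} on $v$ to obtain $\lambda=T_k(v)$, the optimal measurement vector $m$, and the structural indices $r\in\{0,\dots,k-1\}$, $\ell\in\{k+1,\dots,d+1\}$ from \Cref{lem:m_restricted_form}; by \cref{eq:perm_diag_invariance} we may assume $v$ is real, nonnegative and sorted, and (as in \Cref{lem:m_restricted_form}) that $v$ has no zero entries, treating the zero-entry case by restricting to the support. Let $\theta=\theta^v_{r,\ell}(\lambda)$ and set $b=(1+\lambda)\theta$. I then define a family of $k$-sparse unit vectors indexed by the $(r+1)$-subsets $S$ of the middle block $\{k-r,\dots,\ell-1\}$: put $w_S\coloneq\tfrac{1}{\sqrt C}\bigl(v_{1:k-r-1}+b\sum_{j\in S}e_j\bigr)$ with $C\coloneq\norm{v_{1:k-r-1}}_2^2+(r+1)b^2$, so that $\norm{w_S}_2=1$ for every $S$, and set $\sigma^\star\coloneq\sum_S p^\star(S)\,w_Sw_S^\dag$, where $p^\star$ is the maximum-entropy distribution on $\binom{\{k-r,\dots,\ell-1\}}{r+1}$ with single-index marginals $q_j=v_j/\theta-\lambda$. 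One checks from \cref{eq:theta_condns,eq:td_norm_condn} that $q_j\in(0,1)$ and $\sum_j q_j=r+1$, so this distribution exists, and $\sigma^\star\in\cI_k$ by construction.

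Next I would show that $(\sigma^\star,M^\star)$ with $M^\star=mm^\dag$ is a saddle point of the minimax in \cref{eq:main_td_optimization_problem}; combined with \Cref{thm:td_optimal_value_computation} this gives $T(v,\sigma^\star)=T_k(v)$ exactly. Two things must be verified. First, $\sigma^\star$ is a best response to $M^\star$, i.e.\ $\langle m,\sigma^\star m\rangle=\norm{m}_{(k)}^2$: the choice $b=(1+\lambda)\theta$ makes each $w_S$ proportional to the restriction of $m$ to the coordinates $\{1,\dots,k-r-1\}\cup S$, which (since the middle entries of $m$ are all tied) form a valid top-$k$ support for every $S$, so $\langle m,w_Sw_S^\dag m\rangle=\norm{m}_{(k)}^2$ and hence the same holds for the mixture. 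Second, $M^\star$ is a best response to $\sigma^\star$, i.e.\ $vv^\dag-\sigma^\star\preceq\lambda\,mm^\dag$. Here one first establishes the identity $\sigma^\star m=u$ with $u\coloneq\langle v,m\rangle v-\lambda m$, using the closed form of $w_S$, the elementary moment identity $\sum_{j'\neq j}Q(\mu^\star)_{jj'}=r\,q_j$ valid for any fixed-size subset distribution, and the normalization $\langle v,\widetilde m\rangle=1$ from \Cref{lem:m_restricted_form}; one also recalls from the proof of \Cref{lem:m_restricted_form} that $\langle u,m\rangle=\norm{m}_{(k)}^2$ and $\langle v,m\rangle^2-\norm{m}_{(k)}^2=\lambda$. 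Writing an arbitrary $x$ as $\alpha m+y$ with $y\perp m$, these identities make the $\alpha$-dependent terms cancel, so the desired operator inequality is equivalent to $\langle y,\sigma^\star y\rangle\geq\langle v,y\rangle^2$ for all $y\perp m$, i.e.\ to $\sigma^\star-vv^\dag+\lambda\,mm^\dag\succeq0$.

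It remains to prove this last positive-semidefiniteness, which I would do block-by-block across the three regions $\{1,\dots,k-r-1\}$, $\{k-r,\dots,\ell-1\}$, $\{\ell,\dots,d\}$ of $m$: on the top and tail blocks $m$ is proportional to $v$, so the corresponding diagonal blocks are rank-one multiples of $v_\bullet v_\bullet^\dag$ whose coefficients are nonnegative precisely because $\lambda\norm{\widetilde m}_2^2<1$ (itself a consequence of $\langle v,m\rangle^2-\norm{m}_{(k)}^2=\lambda$), $\sigma^\star$ vanishes on the tail, and in the middle block the relevant input is the strict negative correlation $Q(\mu^\star)_{jj'}<q_jq_{j'}$ from \Cref{lem:negative_correlation}; these are then assembled into positivity of the full matrix via a Schur-complement argument coupling the top and middle blocks. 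Finally, for computability: we cannot produce $p^\star$ exactly, but by \Cref{lem:efficient_computation_of_weights}, in time $\poly(d,\log(1/\eps))$, we get $\mu$ with $\norm{\mu}_\infty=\poly(d,\log(1/\eps))$ and $\mathrm{d}_{\textnormal{TV}}(p_\mu,p^\star)\leq\eps$; setting $\sigma\coloneq\sum_S p_\mu(S)\,w_Sw_S^\dag\in\cI_k$ and reading off its standard-basis entries from $q(\mu),Q(\mu)$, which by \Cref{lem:marginal_computation} cost $\poly(d)$ given $\mu$, we obtain $\norm{\sigma-\sigma^\star}_1\leq\sum_S|p_\mu(S)-p^\star(S)|=2\,\mathrm{d}_{\textnormal{TV}}(p_\mu,p^\star)\leq2\eps$, hence $T(v,\sigma)\leq T(v,\sigma^\star)+\eps=T_k(v)+\eps$; choosing $\eps=\eta$ completes the proof.

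The main obstacle I expect is the semidefinite inequality $vv^\dag-\sigma^\star\preceq\lambda\,mm^\dag$, i.e.\ the claim that the optimal measurement returned by \Cref{alg:td_approx} is still optimal against the constructed state: this is where the measurement structure of \Cref{lem:m_restricted_form}, the specific marginals $q_j$, and the negative-correlation property of the subset distribution all have to fit together, and it is the only step that requires more than bookkeeping. A secondary nuisance is the degenerate cases --- $r+1=1$ (the middle block contributes no off-diagonal terms and $p^\star$ is a categorical distribution), $\ell=d+1$ (no tail block), and $v$ with zero entries --- which should be dispatched separately but present no real difficulty.
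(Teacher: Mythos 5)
Your construction is in fact the same as the paper's: your $w_S$ coincides, after normalization, with the paper's $w(S)$ in \cref{eq:optimal_k_sparse_state_form_trace_distance}, the marginals $q_j=v_j/\theta-\lambda$ are identical, and your final accuracy accounting via \Cref{lem:efficient_computation_of_weights} and \Cref{lem:marginal_computation} matches the paper's. What differs is the organization: the paper never verifies the saddle point from scratch. It starts from the \emph{existence} of a saddle point $(\sigma^\star,m)$, already guaranteed in the proof of \Cref{lem:main_td_opt_problem_real}, deduces from the best-response condition against $m$ that any optimal $\sigma^\star$ must be a mixture of the states $w(S)$, reads the marginals $q_j$ off the eigenvector identity $(vv^\dag-\sigma^\star)m=T_k\,m$, and then notes that swapping in the max-entropy distribution with the same one-element marginals leaves $\sigma^\star m$ unchanged and hence preserves optimality. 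Your plan runs in the opposite direction --- construct the state explicitly and check both best-response conditions directly --- which is legitimate and somewhat more self-contained, but heavier.

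The step you flag as the main obstacle is where your plan goes off track, and it is also the only real gap: you have not proved $vv^\dag-\sigma^\star\preceq\lambda\,mm^\dag$, and the route you sketch (block-by-block positivity, Schur complements, and the negative correlation of \Cref{lem:negative_correlation}) is misdirected --- negative correlation is needed for the robustness result (\Cref{thm:robust_tau}), not here, and no block analysis is required. Once you have the eigenvector identity $\sigma^\star m=\langle v,m\rangle v-\lambda m$, equivalently $(vv^\dag-\sigma^\star)m=\lambda m$ --- which does follow from your closed form of $w_S$ together with the one-element marginals alone, since $\langle w_S,m\rangle=\norm{m}_{(k)}$ for every $S$ so only first moments enter --- \Cref{fact:spectral_stability} tells you that $vv^\dag-\sigma^\star$ has at most one nonnegative eigenvalue. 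Hence for $\lambda=T_k>0$ the eigenvalue $\lambda$ associated with $m$ is the largest one, the invariant subspace orthogonal to $m$ carries only nonpositive eigenvalues, and $vv^\dag-\sigma^\star\preceq\lambda\,mm^\dag$ is immediate; the case $T_k=0$ means $v$ is itself $k$-sparse and is trivial. With that two-line replacement your argument closes and becomes a direct-verification re-derivation of the paper's proof, trading the paper's appeal to the existence of a saddle point for an explicit check.
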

\begin{proof}
  We use the shorthand $T_k=T_k(v)$ and assume that ${v}$ is real, nonnegative, and sorted in nonincreasing order. As explained at the end of \Cref{sec:distance_measures}, the relevant phases and ordering can easily be handled upon solving the problem with this simplifying assumption. We make use of the minimax equality
  \begin{align}
    T_k = \min_{\sigma\in \cI_k}\max_{x\in\bbR^d: \norm{x}_2=1} \langle x, v\rangle^2 - \langle x, \sigma x\rangle = \max_{x\in\bbR^d: \norm{x}_2=1}\min_{\sigma\in \cI_k}\ \langle x, v\rangle^2 - \langle x, \sigma x\rangle.
  \end{align}
  The first equality is a direct consequence of \Cref{lem:rank_1_suffices} while the second was established in the proof of \Cref{lem:main_td_opt_problem_real}. Let $(\sigma^\star, m)$ be a saddle point for the this minimax problem. Observe that we must have
  \begin{align}\label{eq:sigma_star_maximizes}
    \max_{\sigma\in\cI_k} \ \langle m,\sigma m\rangle = \langle m, \sigma^\star m\rangle.
  \end{align}
  Otherwise,
  \begin{align}
      T_k = \min_{\sigma\in\cI_k}\ \langle m, v\rangle^2-\langle m,\sigma m\rangle = \langle m, v\rangle^2-\max_{\sigma\in\cI_k}\ \langle m,\sigma m\rangle < \langle m, v\rangle^2 - \langle m, \sigma^\star m\rangle
  \end{align}
  which contradicts $(\sigma^\star, m)$ being a saddle point.
  \cref{eq:sigma_star_maximizes} implies that $\sigma^\star$ is a mixture of pure states attaining the maximum in
  \begin{align}\label{eq:condition on w}
    \max_w\{|\langle m, w\rangle|^2: \norm{w}_2=1,\ \norm{w}_0\leq k\}.
  \end{align}
  Additionally, by \Cref{lem:m_restricted_form}, $m\propto \widetilde{m}$ with
  \begin{align}\label{eq:opt_m_form_again}
    \widetilde{m}_i & = \begin{cases}
                          \frac{v_i}{1+\lambda}, & i=1,\dots,k-r-1     \\
                          \theta,                & i=k-r,\dots,\ell-1, \\
                          \frac{v_i}{\lambda},   & i=\ell,\dots,d
                        \end{cases}
  \end{align}
  for some $r\in \{0,\dots,k-1\}$, $\ell\in \{k+1,\dots,d+1\}$, and $\theta\in\bbR$ such that $v_{k-r-1}/(1+\lambda) > \theta > v_{\ell}/\lambda$. Here, $\lambda=T_k$, and all these quantities can be efficiently computed. Eqs.~\eqref{eq:condition on w} and \eqref{eq:opt_m_form_again} then imply that each state in the mixture comprising $\sigma^\star$ is of the form
  \begin{align}\label{eq:optimal_k_sparse_state_form_trace_distance}
    w(S)\coloneq \frac{1}{\norm{\widetilde{m}}_{(k)}}\left(\sum_{i=1}^{k-r-1}\widetilde{m}_ie_i + \theta\sum_{j\in S} e_j\right)
  \end{align}
  for some $S\subseteq \{k-r,\dots,\ell-1\}$ satisfying $|S|=r+1$. Denote by $\Omega$ the collection of all such $S$. Then we can write $\sigma^\star = \bbE_S w(S)w(S)^\dag$ for some set-valued random variable $S\in \Omega$. Next, observe that
  \begin{align}
    T_k & = \max_{x: \norm{x}_2=1} x^\dag \left(vv^\dag - \sigma^\star\right)x
  \end{align}
  and $m$ is an optimal solution for this problem, which leads to the eigenvector equation
  $
    (vv^\dag - \sigma^\star)m = T_km.
  $
  Consequently,
  \begin{align}
    T_k\widetilde{m} & = \frac{T_k m}{\langle m, v\rangle}                                                                      \\
                     & = \frac{1}{\langle m, v\rangle}\left[\langle v, m\rangle v - \bbE_S \ \langle w(S), m\rangle w(S)\right] \\
                     & = v-\bbE_S \left[\sum_{i=1}^{k-r-1}\widetilde{m}_ie_i + \theta\sum_{j\in S} e_j\right]                   \\
                     & = v - \sum_{i=1}^{k-r-1}\widetilde{m}_ie_i - \theta\sum_{j=k-r}^{\ell-1}q_j e_j
  \end{align}
  where $q_j\coloneq \Pr_{S}\left[j\in S\right]$ is the marginal probability that the index $j$ is included in the random subset $S$, for each $j\in\{k-r,\dots,\ell-1\}$. Hence, for each $j\in \{k-r,\dots,\ell-1\}$ it holds that $T_k\theta = v_j - \theta q_j$, or equivalently,
  \begin{align}\label{eq:td_marginals}
    q_j = \frac{v_j}{\theta} - T_k.
  \end{align}
  One may verify that $\sum_{j=k-r}^{\ell-1} q_j = r+1$ by making use of the fact that $\theta=\theta^v_{r,\ell}(\lambda)$, as in \Cref{dfn:theta_def}, which implies these are valid single-element marginal probabilities, following the discussion in \Cref{sec:max_entropy}.

  We have successfully derived the fact that $\sigma^\star$ is a mixture of pure states $w(S)$ where $S$ is a random subset of $\{k-r,\dots,\ell-1\}$ of fixed size $r+1$, whose marginal inclusion probabilities satisfy \cref{eq:td_marginals}. Since each quantity on the right-hand side of that equation can be efficiently computed (see \Cref{alg:td_approx}) we can also compute these marginal probabilities efficiently.

  Now, suppose that taking $S$ to have the max-entropy distribution $p^\star$ (as in \Cref{sec:max_entropy}) subject to the marginal constraints given by the $q_j$, results in a density matrix $\sigma^\star = \bbE_{S\sim p^\star}w(S)w(S)^\top$ which is optimal. By \Cref{lem:efficient_computation_of_weights}, since $\ell-k+r\leq d$, using at most $\textnormal{poly}(d, \log(1/\eta))$ operations we can find a set of weights $\mu\in\bbR^{\ell-k+r}$, specifying a distribution $p_\mu$ such that $\mathrm{d}_{\textnormal{TV}}(p^\star,p_\mu)\leq \eta$. Setting $\sigma\coloneq \bbE_{S\sim p_\mu}w(S)w(S)^\top$, this implies $T(\sigma, \sigma^\star)\leq \eta$ and, by Triangle Inequality, $T(v,\sigma)\leq T_k + \eta$. Moreover, setting $y\coloneq m_{1:k-r-1}$ we have
  \begin{align}
    \sigma = \frac{1}{\norm{m}_{(k)}^2}\left[yy^\top + \theta \sum_{i=k-r}^{\ell-1}q(\mu)_i(ye_i^\top + e_iy^\top) + \theta^2\sum_{j, j^{\prime}=k-r}^{\ell-1} Q(\mu)_{jj^\prime} e_je_{j^\prime}^\top\right]
  \end{align}
  where $q(\mu)_i \coloneq \Pr_{S\sim p_\mu}[i\in S]$ and $Q(\mu)_{ij}\coloneq \Pr_{S\sim p_\mu}\left[i\in S \land j\in S\right]$ for every $i,j\in \{k-r,\dots,\ell-1\}$ are the single- and two-element marginal inclusion probabilities. By \Cref{lem:marginal_computation}, each term in the square brackets can be computed efficiently, and the claim follows.

  It remains to show that taking $S$ to have the max-entropy distribution $p^\star$ subject to these marginal constraints results in an optimal density matrix $\sigma^\star\in\cI_k$. But this follows from the observation that for any two distributions $p_1,p_2: \Omega\to [0,1]$ with the same single-element marginal probabilities, we necessarily have
  \begin{align}
    \left(\bbE_{S_1\sim p_1} w(S_1)w(S_1)^\top - \bbE_{S_2\sim p_2}w(S_2)w(S_2)^\top\right)m & = \norm{m}_{(k)} \left( \bbE_{S_1\sim p_1} w(S_1) - \bbE_{S_2\sim p_2} w(S_2) \right) = 0,
  \end{align}
  and this holds in particular for $p_1 = p^\star$ and $p_2$ any choice of distribution over the subsets $S\in \Omega$ which results in an optimal density matrix.
\end{proof}

\begin{thm}\label{thm:opt_td_sampling}
  Let $\sigma\in\cI_k(\bbC^d)$ be the $\eta$-accurate approximation to the optimal density matrix obtained by the algorithm in \Cref{thm:opt_td_state}. There is an algorithm running in time $\textnormal{poly}(d,\log(1/\eta))$ which samples a random $k$-sparse pure state $w\in \bbC^d$ from a distribution such that $\bbE ww^\dag = \sigma$. Moreover, subsequent samples can be obtained in $O(d)$ steps.
\end{thm}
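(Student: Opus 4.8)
The plan is to observe that Theorem~\ref{thm:opt_td_state} already exhibits $\sigma$ as the density matrix of a concrete, succinctly parametrized ensemble of $k$-sparse pure states; producing a sample then amounts to drawing a fixed-size subset from the associated max-entropy distribution and assembling the corresponding vector, so the runtime follows from \Cref{lem:marginal_computation} and \Cref{lem:efficient_computation_of_weights}. Concretely, recall from the proof of Theorem~\ref{thm:opt_td_state} that -- after reducing to the case where $v$ is real, nonnegative, sorted, and has strictly more than $k$ nonzero entries (the other cases being trivial, since then $T_k(v)=0$ and $\sigma=vv^\dag$) -- the output $\sigma$ is \emph{defined} as $\bbE_{S\sim p_\mu} w(S)w(S)^\top$. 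Here $w(S)$ is the $k$-sparse unit vector of \cref{eq:optimal_k_sparse_state_form_trace_distance}, built from the parameters $r,\ell,\theta$ and $\lambda=T_k(v)$ returned by \Cref{alg:td_approx}; $S$ ranges over the $(r+1)$-element subsets of the block $\{k-r,\dots,\ell-1\}$, which has $n\coloneq\ell-k+r\leq d$ elements; and $p_\mu$ is the max-entropy distribution on these subsets with single-element marginals $q_j$ given by \cref{eq:td_marginals} and weights $\mu\in\bbR^{n}$ furnished by \Cref{lem:efficient_computation_of_weights}. Because $\sigma$ is this ensemble's density matrix by construction, the identity $\bbE_w ww^\dag=\sigma$ holds automatically, and all that remains is to bound the cost of drawing $S\sim p_\mu$ and writing $w(S)$.

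For the first sample I would carry out a one-time preprocessing step: run \Cref{alg:td_approx} (time $\poly(d)$) to obtain $r,\ell,\theta,\lambda,\widetilde m$ and the marginals $q_j=v_j/\theta-T_k(v)$, then invoke \Cref{lem:efficient_computation_of_weights} with accuracy $\eta$ to get $\mu$ with $\mathrm{d}_{\textnormal{TV}}(p_\mu,p^\star)\leq\eta$ and $\norm{\mu}_\infty=\poly(d,\log(1/\eta))$, in total time $\poly(d,\log(1/\eta))$; also precompute, once, the auxiliary partition-function data used by the sampler. To emit a $k$-sparse state I would then (i) draw $S\sim p_\mu$ with the sequential sampler of \Cref{lem:marginal_computation} in expected time $O(n)=O(d)$; (ii) form $w(S)$ in the standard basis from \cref{eq:optimal_k_sparse_state_form_trace_distance} -- copy the $k-r-1$ deterministic entries of $\widetilde m$, place $\theta$ on the $r+1$ indices of $S$, and normalize by $\norm{\widetilde m}_{(k)}$ -- in time $O(d)$; and (iii) reorder the coordinates and reinsert the phases of the original $v$, again $O(d)$. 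Thus the first sample costs $\poly(d,\log(1/\eta))$, and every subsequent sample repeats only (i)--(iii), since $\mu$ and the precomputed data are already stored, giving $O(d)$ per sample.

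The one point requiring a little care -- and the main, if mild, obstacle -- is reconciling the ``expected $O(n)$ per sample'' guarantee of \Cref{lem:marginal_computation} with the claimed $O(d)$: I would stress that the auxiliary structure used by the faster sampler of \cref{sec:faster-sampling} depends only on $\mu$, so it is computed once during preprocessing and amortized away, leaving the per-sample cost at $O(n)=O(d)$. Apart from this bookkeeping, together with the routine handling of degenerate regions (an empty deterministic block when $r=k-1$, an empty tail when $\ell=d+1$, ties among the $v_i$), which is identical to that in the proof of Theorem~\ref{thm:opt_td_state}, the statement follows.
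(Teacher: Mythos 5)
Your proposal is correct and follows essentially the same route as the paper: take the ensemble $w(S)$ from \cref{eq:optimal_k_sparse_state_form_trace_distance} with $S$ drawn from the $\eta$-accurate max-entropy distribution $p_\mu$ whose parameters were already computed in the proof of \Cref{thm:opt_td_state}, and invoke \Cref{lem:marginal_computation} for the $O(d)$ per-sample cost. The extra bookkeeping you add (one-time preprocessing, reordering and phase reinsertion, degenerate cases) is consistent with, and slightly more explicit than, the paper's own argument.
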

\begin{proof}
  Take the ensemble of pure states to be the random pure state $w(S)$ defined in \cref{eq:optimal_k_sparse_state_form_trace_distance}, where $S$ is a random subset of indices having a distribution which is $\eta$-accurate in total variation distance to the max-entropy distribution. As stated in the proof of \Cref{thm:opt_td_state}, it takes $\textnormal{poly}(d,\log(1/\eta))$ operations to find the parameters $\mu$ corresponding to this distribution. The claim follows from \Cref{lem:marginal_computation}, which implies that it takes $O(d)$ steps to sample from that distribution once $\mu$ is found.
\end{proof}

\subsection{Coherence versus entanglement for the trace distance}\label{sec:coherence_vs_entanglement}
In this section we show that for approximating pure states by density matrices with low Schmidt number, the results on the $k$-sparse approximations carry over directly.
Recall that given any state $\rho \in \cD(\bbC^a \ot \bbC^b)$, the optimal approximation by a state with Schmidt rank at most $k$ is the following optimization problem:
\begin{align}\label{eq:ent_min_problem}
  T_k^{(E)}(\rho) := \min_{\sigma \in \cS_k} T(\rho,\sigma).
\end{align}
We prove the following.
\begin{lem}
  Let ${v} \in \bbC^a \ot \bbC^b$ have a Schmidt decomposition
  \begin{align}
    {v} = \sum_{i=1}^r \lambda_i\; {x_i} \ot {y_i}\quad \textnormal{with}\quad \lambda=(\lambda_1,\dots,\lambda_r)^\top.
  \end{align}
  It holds that $T_k^{(E)}(v) = T_k(\lambda)$. Moreover, an optimal solution $\sigma^\star\in\cS_k$ to the minimization in \cref{eq:ent_min_problem} is given by applying the linear map with the action $e_i\mapsto x_i\otimes y_i$ for each $i\in [r]$ to the optimum in the problem defining $T_k(\lambda)$.
\end{lem}

\begin{proof}
  Without loss of generality, choose the Schmidt basis to be the standard basis, so
  \begin{align}
    {v} = \sum_{i=1}^r \lambda_i\; e_i \ot e_i,
  \end{align}
  and define the $k$-sparse states with respect to the basis of states $e_i \ot e_i$.
  First of all it is clear that $T_k^{(E)}(v) \leq T_k(v)$ since, with this definition of sparsity, $\cI_k \subseteq \cS_k$.

  In order to conclude that we have equality, we need to show that we can restrict the optimization over $\cS_k$ which defines $T_k^{(E)}(v)$ to $\cI_k$. To this end, observe that the argument of \cref{lem:main_td_opt_problem_real} applies with $\cI_k$ replaced by $\cS_k$ without alteration, which shows that
  \begin{align}
    T_k^{(E)}(v) & = \max_{M: \norm{M}_{\infty} \leq 1} \min_{\sigma \in \cS_k} \tr\left(M (v v^\dagger - \sigma)\right) \\
                 & = \max_{m: \norm{m}_2 \leq 1} \abs{\langle m, v\rangle}^2 - \norm{m}_{(E,k)}^2,
  \end{align}
  where $\norm{m}_{(E,k)}^2 \coloneq \max_{\sigma \in \cS_k} \tr(m m^\dagger \sigma)$ equals the sum of the squares of the $k$ largest Schmidt coefficients of ${m}$, by \Cref{lem:fidelity_ent_equals_top_k}.
  Once we show that the optimal ${m}$ is of the form ${m} = \sum_i m_i e_i \ot e_i$, we can conclude that $T_k^{(E)}({v}) = T_k({v})$, since then it is clear we can optimize over $\cI_k$ instead of $\cS_k$.
  Consider an arbitrary ${m} = \sum_{ij} m_{ij} e_i \ot e_j$.
  We denote by $e_{ij}$ the matrix $e_i e_j^\top$.
  Let $P = \sum_i e_{ii} \ot e_{ii}$ and let ${\tilde m} = \sum_{i} m_{ii} e_i \ot e_i = P {m}$.
  Let $\Phi:\cL(\bbC^a\ot \bbC^b)\to\cL(\bbC^a\ot\bbC^b)$ be the quantum channel with the action $\Phi: X\mapsto \bbE U X U^\dag$ where $U\in\cL(\bbC^a\ot\bbC^b)$ is a random unitary operator of the form
  \begin{align}
    U=\sum_{i,j} z_i z_j^\star e_{ii} \ot e_{jj}
  \end{align}
  with uniformly random phases $z_i\in\bbC$, $|z_i|=1$. This implies that $\Phi$ acts on matrix elements through
  \begin{align}
    e_{ij} \ot e_{kl} \mapsto \begin{cases}
                                e_{ij} \ot e_{kl} & \text{if } i = j \text{ and } k = l  \text{ or } i=k \text{ and } j=l \\
                                0                 & \text{otherwise.}
                              \end{cases}
  \end{align}
  Furthermore, the channel is self-adjoint $\Phi = \Phi^\dagger$ and preserves Schmidt number.
  Acting on the rank-1 projection operator $mm^\dag$ gives
  \begin{align}
    \Phi(m m^\dagger) = \tilde m \tilde m^\dagger + \sum_{i \neq j} \abs{m_{ij}}^2 e_{ii} \ot e_{jj} \eqcolon \tilde m \tilde m^\dagger + B
  \end{align}
  for some operator $B$ which is positive semidefinite.
  In particular, for any ${w}\in\bbC^a\ot\bbC^b$ with Schmidt rank at most $k$, we have
  \begin{align}
    |\langle \tilde m, w\rangle|^2 = \tr\left(m m^\dagger \Phi(w w^\dagger)\right) - w^\dagger B w \leq \norm{m}_{(E,k)}^2
  \end{align}
  where the inequality follows from the fact that $\Phi(ww^\dag)\in\cS_k$ and the second term is at most zero. Maximizing over ${w} \in \cS_k$ we arrive at $\norm{\tilde m}_{(E,k)} \leq \norm{m}_{(E,k)}$. Since $\abs{\langle m, v \rangle}^2 = \abs{\langle \tilde m, v \rangle}^2$ we conclude that
  \begin{align}
    \abs{\langle m, v \rangle}^2 - \norm{m}_{(E,k)}^2 \leq \abs{\langle \tilde m, v \rangle}^2 - \norm{\tilde m}_{(E,k)}^2
  \end{align}
  and hence we can maximize over ${m}$ of the form ${m} = \sum_i m_i e_i \ot e_i$.
\end{proof}

\subsection{Variance estimates for ensembles}\label{sec:variance}
A randomized truncation yields a $k$-sparse state ${w(S)}$ with some probability. Given an observable $M$ (along with a method to compute its expectation with respect to a target state) one may use the random variable $w(S)^\dag Mw(S)$ as an estimator for the original expectation value of interest, incurring some bias and variance in the process. (For a deterministic truncation, there is no variance.) The trace distance value bounds the bias, while the variance can be related to the quality of an approximation using the individual states ${w(S)}$ comprising the ensemble. In this section, we record bounds on the variance in such a process, which may inform the feasibility of using these methods in numerical simulations, e.g., using MPS. (See \Cref{sec:MPS}.)  In fact, these bounds hold for generic ensembles of pure states.

Recall the trace distance is given by
\begin{align}
  T(v, \sigma) = \max_{\norm{M}_{\infty} \leq 1} \tr\left(M(v v^\dagger - \sigma)\right) = \max_{\norm{M}_{\infty} \leq 1} \E \tr\left(M(v v^\dagger - w(S) w(S)^\dagger)\right).
\end{align}
We have the following.
\begin{lem}\label{lem:variance}
  Let $S$ be a random variable, let $v$ be an arbitrary pure state, and let $w(S)$ be a random pure state such that $\E w(S)w(S)^\dag = \sigma$.
  \begin{enumerate}
    \item The expected value of the trace distance between $w(S)$ and $v$ is bounded by
          \be
          \E T(v, w(S)) \leq \sqrt{T(v,\sigma)}.
          \ee
    \item Let $M$ be any observable with $\norm{M}_{\infty} \leq 1$. It holds that
          \be
          \Var[w(S)^\dag M w(S)]\leq T(v,\sigma)\left(1 + \sqrt{T(v,\sigma)}\right)^2.
          \ee
  \end{enumerate}
\end{lem}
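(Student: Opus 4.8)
The plan is to prove the two bounds in sequence, with the first feeding into the second. For part 1, I would start from the relation between trace distance and fidelity for a pure state versus a mixed state. Since $v$ is pure, we have $T(v,w(S)) = \sqrt{1 - F(v,w(S))^2} = \sqrt{1 - \abs{\langle v, w(S)\rangle}^2}$ for each realization of $S$. Applying Jensen's inequality (concavity of $\sqrt{\cdot}$) gives $\E T(v,w(S)) \leq \sqrt{1 - \E \abs{\langle v, w(S)\rangle}^2}$. Now $\E\abs{\langle v, w(S)\rangle}^2 = \E\, v^\dag w(S)w(S)^\dag v = \langle v, \sigma v\rangle = F(v,\sigma)^2$, so the right-hand side is $\sqrt{1 - F(v,\sigma)^2}$. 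Finally, by the improved Fuchs--van de Graaf lower bound in \cref{eq:fvdg_pure}, which applies since one of the two states is pure, we have $1 - F(v,\sigma)^2 \leq T(v,\sigma)$, giving $\E T(v,w(S)) \leq \sqrt{T(v,\sigma)}$ as claimed.

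For part 2, I would bound $\Var[w(S)^\dag M w(S)]$ by comparing the random variable $w(S)^\dag M w(S)$ to the fixed number $v^\dag M v$. Since variance is minimized by subtracting the mean, $\Var[w(S)^\dag M w(S)] \leq \E\bigl[(w(S)^\dag M w(S) - v^\dag M v)^2\bigr]$. Each term in the expectation is of the form $(\tr[M(w(S)w(S)^\dag - vv^\dag)])^2$. Because $\norm{M}_\infty \leq 1$, we have $\abs{\tr[M(w(S)w(S)^\dag - vv^\dag)]} \leq \norm{w(S)w(S)^\dag - vv^\dag}_1 = 2T(v,w(S))$. Hence $\Var[w(S)^\dag M w(S)] \leq 4\,\E[T(v,w(S))^2]$. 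It remains to bound $\E[T(v,w(S))^2]$; the bound we want suggests we should end up with $\tfrac14 T(v,\sigma)(1+\sqrt{T(v,\sigma)})^2$. I would write $T(v,w(S))^2 = 1 - \abs{\langle v, w(S)\rangle}^2$ as before and try to control its expectation, but a cleaner route is the triangle-inequality decomposition $T(v, w(S)) \leq T(v,\sigma) + T(\sigma, w(S))$, so that $\E[T(v,w(S))^2] \leq \E[(T(v,\sigma) + T(\sigma,w(S)))^2]$. Expanding and using $\E[T(\sigma,w(S))] \leq \E[T(v,w(S))] + T(v,\sigma) \leq \sqrt{T(v,\sigma)} + T(v,\sigma)$ from part 1 — or more directly bounding $\E[T(\sigma,w(S))^2]$ — one assembles $\E[T(v,w(S))^2] \leq \tfrac14 T(v,\sigma)\bigl(1 + \sqrt{T(v,\sigma)}\bigr)^2$ after collecting terms, using $T(v,\sigma)\le\sqrt{T(v,\sigma)}$ where convenient since $T(v,\sigma)\le 1$.

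The main obstacle I anticipate is getting the constants in part 2 to line up exactly with the claimed form $T(v,\sigma)(1+\sqrt{T(v,\sigma)})^2$ rather than a looser bound. The crude step $\Var \leq 4\,\E[T(v,w(S))^2]$ followed by a crude triangle inequality may overshoot; the sharp constant likely requires being more careful about which reference point one subtracts inside the variance and exploiting that $w(S)^\dag M w(S)$ and $v^\dag M v$ both lie in $[-1,1]$, or alternatively working directly with $\E[\abs{\langle v, w(S)\rangle}^2]$ and the structure $\E w(S)w(S)^\dag = \sigma$ to avoid the factor of $4$. I would first try to push the direct computation $\Var \leq \E[(w(S)^\dag M w(S))^2] - (v^\dag M v)^2 + (\text{bias correction})$ and see whether the $F(v,\sigma)^2 = 1 - $ (something $\leq T(v,\sigma)$) identity combined with $\abs{w(S)^\dag M w(S)}\le 1$ closes the gap cleanly; if not, fall back on the triangle-inequality route and absorb the slack into the $(1+\sqrt{T(v,\sigma)})^2$ factor.
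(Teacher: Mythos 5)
Your part~1 is correct and essentially the paper's own argument: bound $\E T(v,w(S))$ by $\sqrt{\E T(v,w(S))^2}$ (your Jensen step is the same as the $L_1$--$L_2$ comparison), compute $\E\abs{\langle v,w(S)\rangle}^2 = F(v,\sigma)^2$ using $\E w(S)w(S)^\dag = \sigma$, and invoke \cref{eq:fvdg_pure}.

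Part~2 has a genuine gap. Your first route gives $\Var[w(S)^\dag M w(S)] \le \E[(w(S)^\dag M w(S) - v^\dag M v)^2] \le 4\,\E[T(v,w(S))^2] \le 4\,T(v,\sigma)$, and the repair you sketch cannot recover the lost factor: the inequality you hope to ``assemble,'' $\E[T(v,w(S))^2] \le \tfrac14 T(v,\sigma)\bigl(1+\sqrt{T(v,\sigma)}\bigr)^2$, is false in general. Take $w(S)=v$ with probability $1-p$ and $w(S)=u$ with $u\perp v$ with probability $p$: then $T(v,\sigma)=p$ and $\E[T(v,w(S))^2]=p$, while $\tfrac14 p(1+\sqrt p)^2 < p$ for every $p<1$. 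So once you have passed through the bound $\abs{\tr[M(w(S)w(S)^\dag - vv^\dag)]} \le \norm{w(S)w(S)^\dag - vv^\dag}_1 = 2T(v,w(S))$, no downstream triangle-inequality manipulation of $\E[T(v,w(S))^2]$ can close the gap. The missing ingredient is the normalization of $M$: your trace-norm duality step allows $M$ with eigenvalues anywhere in $[-1,1]$, and under that reading the stated variance bound itself fails (the same two-point ensemble with $M = vv^\dag - uu^\dag$ has variance $4p(1-p)$, which exceeds $p(1+\sqrt p)^2$ by nearly a factor of $4$ for small $p$). The lemma is meant with the measurement normalization used throughout the paper (cf.\ \cref{eq:main_td_optimization_problem}), under which $\abs{\tr(M(\rho-\sigma))} \le T(\rho,\sigma)$ for the allowed $M$, i.e.\ effectively $0 \preceq M \preceq \mathds{1}$ (eigenvalue spread at most one). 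With that convention your own first step already finishes, and in fact gives the stronger conclusion $\Var \le \E[T(v,w(S))^2] \le T(v,\sigma) \le T(v,\sigma)\bigl(1+\sqrt{T(v,\sigma)}\bigr)^2$.

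For comparison, the paper centers the random variable at its true mean $\tr(\sigma M)$ rather than at $v^\dag M v$ and applies the triangle inequality for the $L_2$ norm of random variables: $\norm{w(S)^\dag M w(S) - \tr(\sigma M)}_{L_2} \le \abs{\tr(M(\sigma - vv^\dag))} + \norm{w(S)^\dag M w(S) - v^\dag M v}_{L_2} \le T(v,\sigma) + \sqrt{T(v,\sigma)}$, and squaring yields exactly $T(v,\sigma)\bigl(1+\sqrt{T(v,\sigma)}\bigr)^2$. Your choice of reference point is perfectly serviceable; the only real defect in the proposal is the factor-of-two reading of $\norm{M}_\infty \le 1$ together with the false inequality invoked to undo it.
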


\begin{proof}
  We can bound the second moment of the random variable $T(v,w(S))$ through
  \begin{align}
      \E T(v,w(S))^2 &= 1-\E |\langle v,w(S)\rangle|^2 = 1-F(v,\sigma)^2\leq T(v,\sigma)
  \end{align}
  where the second equality uses $\sigma=\E w(S)w(S)^\dag$ and the inequality follows from \cref{eq:fvdg_pure}. Then the first bound in the lemma is a consequence of the relationship between $L_1$ and $L_2$ norms for random variables. For the second bound, note that
  \begin{align}
      \left|\tr\left( M(v v^\dagger - w(S) w(S)^\dagger)\right)\right|&\leq T(v, w(S))
  \end{align}
  which implies
  \begin{align}\label{eq:L_2_norm_bound}
      \E \tr\left( M(v v^\dagger - w(S) w(S)^\dagger)\right)^2 \leq \E T(v,w(S))^2\leq T(v,\sigma).
  \end{align}
  Let $\norm{X}_{L_2}$ denote the $L_2$ norm of a random variable $X$. By Triangle Inequality and \cref{eq:L_2_norm_bound},
  \begin{align}
      \norm{w(S)^\dag M w(S) - \Tr(\sigma M)}_{L_2} &\leq \left|\Tr(M(\sigma - vv^\dag))\right| + \sqrt{T(v,\sigma)}.
  \end{align}
  The claim follows upon noting that the first term on the right-hand side above is at most $T(v,\sigma)$, and $\E w(S)^\dag M w(S) = \Tr(\sigma M)$.
\end{proof}

These bounds are fairly constraining: suppose that $F_k(v) = \sqrt{1 - \eps}$ and we are in a case where randomized truncation is close to the optimal value allowed by the Fuchs-van de Graaf inequalities, i.e., $T(v, \sigma) = O(\eps)$. Then the above argument shows that \emph{on average} the trace distance $T(v, w(S))$ must be $O(\sqrt{\eps})$. But we also know that the optimal pure state approximation has trace distance $\sqrt{\eps}$, so the states ${w(S)}$ must be close to the optimal deterministic approximations with high probability. Additionally, in this ``good" regime the standard deviation of the estimator is at most $O(\sqrt{\eps})$, which is on the same order as the error from the best deterministic truncation even after a single sample. Obtaining multiple samples using the randomized truncation then decreases the variance, from this already-competitive starting point.

\section{Approximations in robustness}\label{sec:robustness}
Recall that a simple expression for the robustness with $\cI_k$ of a pure state is derived in~\cite{Regula18,kcoh18}. (See \Cref{thm:Rk-value}.) Namely, for any unit vector $v\in\bbC^d$, we have
$
  R_k(v) = {\lVert v \rVert}_{(k,*)}^2-1.
$
By the characterization of the $k$-support norm from \Cref{lem:k_supp_efficient_expr}, this value is efficiently computable~\cite{AFS12}, in time $O(d\log d)$.
In this section, we show that the optimal density matrix in robustness is also efficiently computable and samplable, via the same max-entropy construction used in \Cref{sec:optimal_density_matrix_td}.
\subsection{Computing and sampling the mixed approximation in robustness}
The following result can be shown using a simpler version of the construction leading to \Cref{thm:opt_td_state}.
\begin{thm}\label{thm:robust_tau}
  Let $k,d\in\bbZ_{>0}$ satisfy $2 \leq k \leq d$, let $\eta\in (0,1)$, and let $v\in\bbC^d$ be a unit vector. There is an algorithm, requiring at most $\textnormal{poly}(d,\log(1/\eta))$ steps, which computes the entries in the standard basis of a density matrix $\tau\in\cI_k$ such that
  $
    T(\tau,\tau^\star)\leq \eta
  $
  where $\tau^\star\in\cI_k$ satisfies
  \begin{align}
    \tau^\star = \frac{vv^\dag + R_k(v)\sigma}{1+R_k(v)}\quad \textnormal{for some}\quad \sigma\in\cI_2.
  \end{align}
\end{thm}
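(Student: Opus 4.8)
The plan is to exhibit $\tau^\star$ explicitly as a mixture of $k$-sparse pure states, with the subset to keep drawn from the max-entropy distribution of \Cref{sec:max_entropy}, and then to identify the residual $\|v\|_{(k,*)}^2\tau^\star-vv^\dag$ with a positive multiple of the covariance matrix of the inclusion indicators. The crux — the one genuinely nonobvious step — is that negative correlation (\Cref{lem:negative_correlation}) together with the constancy of the subset size forces that covariance matrix to be a weighted graph Laplacian, hence a $2$-sparse mixture; everything else is normalization bookkeeping already supplied by \Cref{lem:k_supp_efficient_expr} and the sampling machinery of \Cref{sec:max_entropy}. By \cref{eq:perm_diag_invariance} we may assume $v$ is real, nonnegative, and sorted nonincreasingly; if $v$ has at most $k$ nonzero entries then $R_k(v)=0$ and $\tau^\star=vv^\dag$ works with any $\sigma\in\cI_2$, so assume $v$ has more than $k$ nonzero entries and, restricting to its support, that all entries are positive. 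Let $r\in\{0,\dots,k-1\}$ and $\theta\coloneq s^v_{k-r}/(r+1)$ be as in \Cref{lem:k_supp_efficient_expr}, so that $\theta\ge v_{k-r}$ (this ensures $q_j\le1$ below) and, by \Cref{thm:Rk-value}, $1+R_k(v)=\|v\|_{(k,*)}^2=\|v_{1:k-r-1}\|_2^2+(r+1)\theta^2$. Write $m\coloneq k-r-1$, $a\coloneq v_{1:m}$, and $\chi_S\coloneq\sum_{j\in S}e_j$.

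For every $S\subseteq\{k-r,\dots,d\}$ with $|S|=r+1$ set $w_S\coloneq\|v\|_{(k,*)}^{-1}(a+\theta\chi_S)$; the norm identity above gives $\|w_S\|_2=1$, and $w_S$ is $k$-sparse because $m+(r+1)=k$. I draw $S$ from the max-entropy distribution $p^\star$ on $(r+1)$-element subsets of $\{k-r,\dots,d\}$ with single-element marginals $q_j\coloneq v_j/\theta$ (any index with $q_j=1$, which forces $v_j=\theta$, is moved into the deterministic block $a$; this is consistent since $v$ has support larger than $k$). These marginals are feasible, as each $q_j\in(0,1)$ and $\sum_{j\ge k-r}q_j=s^v_{k-r}/\theta=r+1$, matching the requirements of \Cref{sec:max_entropy}. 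Put $\tau^\star\coloneq\bbE_{S\sim p^\star}[w_Sw_S^\dag]\in\cI_k$.

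It remains to check $\tau^\star$ has the claimed form. Since $\theta q_j=v_j$ for $j\ge k-r$, we have $\bbE_{S\sim p^\star}[\|v\|_{(k,*)}\,w_S]=a+\theta\sum_{j\ge k-r}q_je_j=v$, whence
\[
\|v\|_{(k,*)}^2\tau^\star-vv^\dag=\bbE\big[(a+\theta\chi_S)(a+\theta\chi_S)^\dag\big]-\bbE[a+\theta\chi_S]\,\bbE[a+\theta\chi_S]^\dag=\theta^2 K,
\]
where $K\coloneq\mathrm{Cov}(\chi_S)$ acts on the block $\{k-r,\dots,d\}$ and vanishes elsewhere. The matrix $K$ is positive semidefinite, satisfies $\sum_j K_{ij}=0$ for every $i$ (because $|S|$ is a constant random variable), and has strictly negative off-diagonal entries by \Cref{lem:negative_correlation} applied to $p^\star=p_{\mu^\star}$. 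Any symmetric matrix with nonpositive off-diagonals and zero row sums equals $\sum_{i<j}(-K_{ij})(e_i-e_j)(e_i-e_j)^\top$, a nonnegative combination of projectors onto $2$-sparse vectors; hence $\sigma\coloneq\theta^2 K/\tr(\theta^2 K)\in\cI_2$. Finally $\tr(\theta^2 K)=\tr\!\big(\|v\|_{(k,*)}^2\tau^\star-vv^\dag\big)=\|v\|_{(k,*)}^2-1=R_k(v)$, so rearranging $\|v\|_{(k,*)}^2\tau^\star=vv^\dag+R_k(v)\,\sigma$ and dividing by $\|v\|_{(k,*)}^2=1+R_k(v)$ gives $\tau^\star=\big(vv^\dag+R_k(v)\,\sigma\big)/\big(1+R_k(v)\big)$, as required.

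The algorithm then reads: compute $r$, $\theta$, and $\|v\|_{(k,*)}$ via \Cref{lem:k_supp_efficient_expr} in $O(d\log d)$ time; form the marginals $q_j$; invoke \Cref{lem:efficient_computation_of_weights} to obtain $\mu$ with $\mathrm{d}_{\textnormal{TV}}(p_\mu,p^\star)\le\eta$ in time $\textnormal{poly}(d,\log(1/\eta))$; use \Cref{lem:marginal_computation} to compute $q(\mu)$ and $Q(\mu)$; and return, in the standard basis,
\[
\tau=\frac{1}{\|v\|_{(k,*)}^2}\Big[aa^\top+\theta\sum_{j=k-r}^{d}q(\mu)_j\big(ae_j^\top+e_ja^\top\big)+\theta^2\sum_{i,j=k-r}^{d}Q(\mu)_{ij}\,e_ie_j^\top\Big]=\bbE_{S\sim p_\mu}[w_Sw_S^\dag].
\]
Because each $w_S$ is a unit vector, $T(\tau,\tau^\star)=\tfrac12\big\|\bbE_{p_\mu-p^\star}[w_Sw_S^\dag]\big\|_1\le\tfrac12\sum_S|p_\mu(S)-p^\star(S)|=\mathrm{d}_{\textnormal{TV}}(p_\mu,p^\star)\le\eta$, and the overall runtime is $\textnormal{poly}(k,d,\log(1/\eta))$. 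I expect the only real obstacle to be the observation in the third paragraph — identifying the residual as $\theta^2\,\mathrm{Cov}(\chi_S)$ and using negative correlation plus fixed subset size to conclude it is an edge-Laplacian, hence an $\cI_2$ state; the rest follows from the cited lemmas with minor attention to the degenerate cases $q_j\in\{0,1\}$ and to zero entries of $v$.
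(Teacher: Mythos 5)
Your proposal is correct and follows essentially the same route as the paper's proof: the same ensemble of states (yours are the normalized version of the paper's $u(S)$), the same max-entropy distribution with marginals $q_j = v_j(r+1)/s^v_{k-r}$, the same identification of the residual $\widetilde{\tau}-vv^\dag$ with the (scaled) covariance of the inclusion indicators, and the same use of negative correlation plus fixed subset size, followed by the same approximation and runtime bookkeeping via \Cref{lem:efficient_computation_of_weights,lem:marginal_computation}. The only difference is cosmetic: where the paper cites Lemma~1 of \cite{kcoh18} for the fact that a matrix with nonpositive off-diagonals and zero row sums lies in $\bbR_{\geq 0}\cdot\cI_2$, you prove it inline via the edge-Laplacian decomposition $\sum_{i<j}(-K_{ij})(e_i-e_j)(e_i-e_j)^\top$, which also spares you the paper's separate check that the diagonal entries are nonnegative.
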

\begin{proof}
  We use the shorthand $R_k = R_k(v)$ and, as in the proof of \Cref{thm:opt_td_state}, assume that $v$ is real, nonnegative, and sorted in nonincreasing order.

  Let $\cS$ denote the set of all subsets  $S\subseteq \{k-r,\dots, d\}$ satisfying $|S|=r+1$.
For any $S\in \cS$, define the state
  \begin{align}\label{eq:us}
    u(S) \coloneq \sum_{i=1}^{k-r-1}v_i e_i + \frac{s_{k-r}^v}{r+1}\sum_{j\in S}e_j .
  \end{align}
  These states are not generally unit vectors; in fact, $\langle u(S), u(S)\rangle = 1+R_k$ for any $S\in \cS$.  We use them to construct  the (not necessarily normalized) mixture
  \be \widetilde{\tau} = \bbE_S u(S)u(S)^\dag \ee
for a yet-to-be-specified distribution over $\cS$.  Then $\tr(\widetilde{\tau})=1+R_k$ and 
  \begin{align}
    \widetilde{\tau} & = v_{1:k-r-1}v_{1:k-r-1}^\top + \frac{s^v_{k-r}}{r+1}\sum_{i=k-r}^d q_i(v_{1:k-r}e_i^\top + e_i v_{1:k-r}^\top) + \left(\frac{s^v_{j-r}}{r+1}\right)^2\sum_{j,j^\prime=k-r}^dQ_{j,j^\prime}e_je_{j^\prime}^\top
  \end{align}
  where $q_i=\Pr_{S}[i\in S]$ and $Q_{ij}\coloneq \Pr_{S}\left[i\in S \land j\in S\right]$ for every $i,j\in \{k-r,\dots,d\}$, are the single- and two-element marginal inclusion probabilities.
  Now, suppose that the distribution over $S\in \cS$ is the max-entropy distribution whose single-element marginals are given by $q_i = v_i(r+1)/s_{k-r}^v$. Evidently, $\sum_{i=k-r}^d q_i = r+1$, so this is a valid set of marginal probabilities. Then
  \begin{align}
\Delta \coloneq    \widetilde{\tau}-vv^\dag
     & = \left(\frac{s_{k-r}^v}{r+1}\right)^2\sum_{i,j=k-r}^d\left(Q_{ij}-q_iq_j\right)e_ie_j^\top.
  \end{align}
  Since, for every $i=k-r,\dots,d$ we have $Q_{ii}=q_i$, the matrix $\Delta$ has diagonal elements
  \be \Delta_{ii} = v_i\qty(\frac{s_{k-r}^v}{r+1} - v_i),\ee
  which are positive by the condition in \cref{eq:condition k support norm}. On the other hand, by \Cref{lem:negative_correlation}, $Q_{ij}< q_iq_j$ for every $i,j\in \{k-r,\dots,d\}$ with $i\neq j$. Finally, we can compute the $i^\text{th}$ row sum of $\Delta$ as
  \begin{align}
    \sum_{j=k-r}^d\Delta_{i,j} \propto
    \sum_{j=k-r}^d(Q_{ij}-q_iq_j) & = \bbE_{S}\mathbf{1}_{i\in S}\left(\sum_{j=k-r}^d\mathbf{1}_{j\in S}\right) - q_i(r+1) = 0.
  \end{align}
  In summary, $\Delta$ has (i) nonnegative diagonals, (ii) nonpositive off diagonals, and (iii)  rows summing to 0.  Such matrices are known as diagonally dominant and  are proportional to an element of $\cI_2$, by Lemma~1 in \cite{kcoh18}. Thus, we have established:
  \begin{align}
    \widetilde{\tau} = vv^\dag + R_k\sigma\qquad \textnormal{for some} \qquad \sigma\in \cI_2.
  \end{align}
  Setting $\tau^\star = \widetilde{\tau}/(1+R_k)$ gives the desired optimal density matrix. Of course, we cannot compute exactly the one- and two-element marginals corresponding to the max-entropy distribution. But we can do so for an approximation that is $\eta$ close in total variation distance in time $\textnormal{poly}(d,\log(1/\eta))$. This leads to an approximation $\tau$ to $\tau^\star$ which is $\eta$-close in trace distance, using Lemmas~\ref{lem:marginal_computation} and \ref{lem:efficient_computation_of_weights}, in exactly the same way as in the proof of \Cref{thm:opt_td_state}.
\end{proof}

 The construction above also tells us how to efficiently sample from the ensemble corresponding to the robustness-optimal density matrix, as in the case for the trace distance.
\begin{thm}\label{thm:robust_ensemble}
  Let $\tau\in\cI_k(\bbC^d)$ be the $\eta$-accurate approximation to the optimal density matrix obtained by the algorithm in \Cref{thm:robust_tau}. There is an algorithm running in time $\textnormal{poly}(d,k,\log(1/\eta))$ which samples a random $k$-sparse pure state $u\in \bbC^d$ from a distribution such that $\bbE uu^\dag = \sigma$. Moreover, subsequent samples can be obtained in $O(d)$ steps.
\end{thm}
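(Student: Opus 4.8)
The plan is to read the required ensemble directly off the construction already carried out in the proof of \Cref{thm:robust_tau}, and then invoke the sampling primitive of \Cref{lem:marginal_computation}, exactly mirroring the trace-distance case \Cref{thm:opt_td_sampling}. Recall that in that proof the output density matrix is $\tau = \widetilde{\tau}/(1+R_k)$ with $\widetilde{\tau} = \bbE_S\, u(S)u(S)^\dag$, where $u(S)$ is the vector defined in \cref{eq:us}, $S$ ranges over the $(r+1)$-element subsets of $\{k-r,\dots,d\}$, and the expectation is over $S$ drawn from the distribution $p_\mu$ that is $\eta$-close in total variation to the max-entropy distribution with single-element marginals $q_i = v_i(r+1)/s^v_{k-r}$. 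Since $\norm{u(S)}_2^2 = \norm{v}_{(k,*)}^2 = 1+R_k$ for every $S$ (by \Cref{lem:k_supp_efficient_expr} together with \Cref{thm:Rk-value}), the rescaled vector $u \coloneq u(S)/\sqrt{1+R_k}$ is a unit vector; moreover it is supported on the $k-r-1$ deterministically kept indices together with the $r+1$ indices in $S$, which are disjoint, so $\norm{u}_0 \leq k$ and $u$ is $k$-sparse. Consequently $\tau = \bbE_{S\sim p_\mu}\, uu^\dag$ holds \emph{exactly}, so sampling $S\sim p_\mu$ and returning $u = u(S)/\sqrt{1+R_k}$ produces an ensemble of $k$-sparse pure states whose average is the density matrix of \Cref{thm:robust_tau}. (This is precisely the importance-sampling-of-sets ensemble flagged in \cref{sec:intuition}, with marginals equal to $1$ on the top $k-r-1$ indices and proportional to $v_i$ on the rest.)

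For the running time I would account for the costs as follows. Computing the index $r$ and the partial sum $s^v_{k-r}$, and hence $R_k$, takes $O(d\log d)$ time by \Cref{lem:k_supp_efficient_expr} and \Cref{thm:Rk-value}. The ground set here has size $d-k+r+1 \leq d$ and block size $r+1 \leq k$, so computing the dual parameters $\mu$ specifying $p_\mu$ is a one-time cost of $\poly(d,k,\log(1/\eta))$ operations by \Cref{lem:efficient_computation_of_weights}, exactly as argued in the proof of \Cref{thm:robust_tau}. Given $\mu$, \Cref{lem:marginal_computation} yields a sample $S\sim p_\mu$ in expected time $O(d)$, and assembling the vector $u(S)/\sqrt{1+R_k}$ — placing $v_i/\sqrt{1+R_k}$ on the first $k-r-1$ coordinates and $s^v_{k-r}/((r+1)\sqrt{1+R_k})$ on the coordinates in $S$ — costs a further $O(d)$. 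Hence the first sample runs in $\poly(d,k,\log(1/\eta))$ time and every subsequent sample in $O(d)$ time, as claimed. Phases and the original ordering of the coordinates of $v$ are reintroduced at the end in the usual way.

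I do not expect a genuine obstacle: the statement is essentially a corollary of \Cref{thm:robust_tau} and \Cref{lem:marginal_computation}, paralleling \Cref{thm:opt_td_sampling}. The two points that require brief care are (i) the bookkeeping showing that $u(S)/\sqrt{1+R_k}$ is simultaneously a unit vector and $k$-sparse, which is immediate from $\norm{u(S)}_2^2 = 1+R_k$ and the disjointness of the deterministic support from $S$; and (ii) the observation that, since $\tau$ is \emph{defined} to be the mixture $\bbE_{S\sim p_\mu}\, u(S)u(S)^\dag/(1+R_k)$, the sampling step introduces no further error — the $\eta$ in \Cref{thm:robust_tau} controls the distance from $\tau$ to the exact optimum $\tau^\star$, not from the sampled mixture to $\tau$.
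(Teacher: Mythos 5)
Your proposal is correct and follows essentially the same route as the paper, whose proof is simply a one-line reduction to the argument of \Cref{thm:opt_td_sampling} with $w(S)$ replaced by the normalized $u(S)$ from \cref{eq:us}; your added bookkeeping (the identity $\norm{u(S)}_2^2 = 1+R_k$, the $k$-sparsity of the support, and the observation that the sampled mixture equals $\tau$ exactly) matches what the paper leaves implicit.
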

\begin{proof}
  The proof is identical to the proof of \Cref{thm:opt_td_sampling} upon taking the random pure state to be proportional to $u(S)$, defined in~\cref{eq:us}, rather than $w(S)$.
\end{proof}

\begin{remark}
  Our construction also yields a simple alternate proof of the fact from \cite{Regula18,kcoh18} that $\norm{v}_{k,*}^2 = 1 + R_k(v) = 1 + R_k^s(v)$ where $R_k^s(v)$ is the smallest $t\geq 0$ for which $\exists \sigma \in \cI_k$ such that $\frac{vv^\dag + t \sigma}{1+t} \in \cI_k$.    It is straightforward to show that
  \be
\norm{v}_{(k,*)}^2 \leq 1+ R_k(v) \leq 1+R_k^s(v).
\label{eq:k-supp-rob}  \ee
The second inquality is because $R_k^s$ imposes additional restrictions on $\sigma$ that do not appear in $R_k$.  For the first inequality, from the definition of $\norm{\cdot}_{(k,*)}$ there exists a vector $m$ with $\langle v,m\rangle = \norm{v}_{(k,*)}$ and $\norm{m}_{(k)} \leq 1$.  And from the definition of $R_k(v)$, there exists $\tau\in\cI_k$ with $vv^\dag \leq (1+R_k(v))\tau$.    Since $\tau$ is a mixture of $k$-sparse states, $\langle m, \tau m\rangle \leq 1$.  Combining these facts yields the first inequality in \cref{eq:k-supp-rob}.

Finally,  \Cref{thm:robust_tau} constructs $\sigma\in \cI_2 \subseteq \cI_k$ and $\tau\in \cI_k$ such that
\be
vv^\dag + R\sigma = (1+R)\tau,
\label{eq:rob-decomp}\ee
where $R = R_k(v)$.  Examining the proof of \Cref{thm:robust_tau}, we see that in fact it proves \cref{eq:rob-decomp} for $R = \norm{v}_{(k,*)}^2-1$.  Since \cref{eq:rob-decomp} is also a valid decomposition for $R_k^s$, this proves that $R_k^s(v) \leq \norm{v}_{(k,*)}^2-1$.  Thus all the inequalities in \cref{eq:k-supp-rob} are equalities.

Our proof is short and constructive.  However, it does rely on the existence of distributions over subsets of indices with the desired single-element marginals and with negative correlation.
\end{remark}

\section{Examples}\label{sec:examples}
We have derived the \emph{optimal} $k$-sparse trace distance approximations to pure states.
If $v$ is a pure state, and $v_k$ is the optimal pure $k$-sparse approximation, then we know that $F(v, v_k)^2 + T(v, v_k)^2 = 1$.
In other words, if $F_k(v) = F(v, v_k) = \sqrt{1 - \eps}$, we have $T(v, v_k) = \sqrt{\eps}$.
On the other hand, when we allow approximation by mixed states, the Fuchs-van de Graaf inequalities as in \cref{eq:fvdg_pure} only constrain the optimal trace distance as
$T_k(v) = T(v, \rho) \geq 1 - F(v, \rho)^2 \geq 1 - F_k(v)^2 = \eps$.
In \Cref{sec:simple-example}, we already saw an example where indeed $T_k(v) = O(\eps)$.
However, there are also cases where $T_k(v) = \Theta(\sqrt{\eps})$.

This raises the question: is such a quadratic improvement in approximation by allowing randomized truncation a very special case, or is it common?
We will now give a broad set of examples, which shows that a quadratic improvement happens in many cases.
The examples are discussed from the perspective of $k$-sparsity, but also all have bipartite entanglement equivalents.
In these examples, we will denote by $d$ the dimension of the Hilbert space, and we let $\eps$ denote truncation error in fidelity, so $F_k(v) = \sqrt{1 - \eps}$. If we fix an allowed error $\eps$, this also implicitly defines a value of $k$ that achieves this error (so we can write  $\eps(k)$ or $k(\eps)$).

The easiest way to upper bound $T_k(v)$ is by either giving an explicit approximation $\rho \in \cI_k$, or by using the robustness, which is computed by the $k$-support norm, using \cref{lem:k_supp_efficient_expr}.
Lower bounding $T_k(v)$ can either be done by the Fuchs-van de Graaf inequalities, or by explicitly choosing a vector $m \in \bbC^d$ which lower bounds the optimization problem in \cref{lem:main_td_opt_problem_real}.

\subsection{Additional simple examples}

\paragraph*{Uniform state.}
Suppose ${u} = d^{-1/2}\sum_{i\in [d]} e_i$ (i.e. in the bipartite case, this corresponds to a maximally entangled state).
We have $F_k(u) =\sqrt{k/d}$ so $k=d(1-\eps)$.
When computing the $k$-support norm of $u$, for the condition in \cref{eq:condition k support norm} we have $s_{k-r} = (d - k + r)/\sqrt{d}$, which is never smaller than $(r+1)u_{k-r-1} = (r+1)/\sqrt{d}$ for $r < k-1$ and $k < d$.
So in \cref{lem:k_supp_efficient_expr}, we get $r = k-1$ and
\be
\norm{{u}}_{(k,*)}^2=\frac{s_1^2}{k} = \frac{d}{k} = \frac{1}{1-\eps}
.\ee
This yields
\be
\begin{aligned}
  R_k(u) & = \frac{d}{k} - 1 = O(\eps)                                 \\
  T_k(u) & \leq \frac{R_k(u)}{1 + R_k(u)} = 1 - \frac{k}{d} = O(\eps).
\end{aligned}
\ee
We actually have $T_k(u) = 1 - \frac{k}{d}$, as can be seen by taking $m = u$ in \cref{lem:main_td_opt_problem_real}.\\

The uniform state is the state furthest away from $\cI_k$ for any $k$, when measured in fidelity, robustness or trace distance.
To see this, first $T(\rho,\cI_k)$ is a convex function of $\rho$, so is maximized by a $\rho$ pure.
Let $v$ be any pure state.
We have $\norm{v}_{(k)} \geq \sqrt{k/d}$, with equality for the uniform state.
This shows the uniform state $u$ is furthest from $\cI_k$ in fidelity.
Additionally, it implies that the dual norm satisfies $\norm{v}_{(k,*)} \leq \sqrt{d/k}$ and hence
\be\label{eq:maximal robustness}
R_k(v) = \norm{v}_{(k,*)}^2 - 1 \leq \frac{d}{k} - 1,
\ee
which further implies that for any state $v$
\be\label{eq:maximal trace dist}
T_k(v) \leq \frac{R_k(v)}{1 + R_k(v)} \leq 1 - \frac{k}{d}.
\ee
Both bounds are saturated by the uniform state $u$.

\paragraph*{A state for which there is no advantage.}
We start with a qubit example, with almost all of the weight on a single basis state.
Let
$
v = (\sqrt{1 - \eps}, \sqrt{\eps})^\top
$
and take $k = 1$, so $F_1(v) = \sqrt{1 - \eps}$.
The best 1-sparse approximation in trace distance is
\be
\rho = \begin{pmatrix} 1 - \eps & 0 \\ 0 & \eps \end{pmatrix} .
\ee
It has $F(\rho,v) = \sqrt{1-2\eps +2\eps^2}$, and gives a trace distance $T_1(v)=\sqrt{\eps(1-\eps)}$, which is only slightly better than the pure state approximation.
We can also compute $\norm{v}_{(1,*} = s_1 = \sqrt{1 - \eps} + \sqrt{\eps}$ and hence
\be
R_1(v) = \norm{v}_{(1,*)}^2 - 1 = 2\sqrt{\eps(1-\eps)}.
\ee

We can generalize this to states of arbitrary dimension $d$.
Fix $k \leq d$ and $\eps$.
Consider $v$ with
\begin{align}
  v_i = \sqrt{\frac{1 - \eps}{k}} \text{ for } i = 1, \dots, k \; \text{ and } \; v_i = \sqrt{\frac{\eps}{d-k}} \text{ for } i = k+1, \dots, d.
\end{align}
Then $F_k(v) = \sqrt{1 - \eps}$.
For the trace distance, let
\begin{align}
  m_i = \begin{cases}
          \frac{1}{\sqrt{2k}}     & 1 \leq i \leq k, \\
          \frac{1}{\sqrt{2(d-k)}} & k < i \leq d.
        \end{cases}
\end{align}
Then $\norm{m}_2 = 1$ and by \cref{lem:main_td_opt_problem_real}
\begin{align}
  T_k(v) \geq \abs{\langle m, v \rangle}^2 - \norm{m}_{(k)}^2 = \sqrt{\eps(1-\eps)}
\end{align}
so the trace distance does not improve much beyond the pure state approximation.
For the robustness, $s_{k-r} = (r + 1)\sqrt{\frac{1-\eps}{k}} + \sqrt{(d-k)\eps}$ is large for $d \gg k$,  and we get $r=k-1$ and
\be \norm{v}_{(k,*)}^2 = \frac{s_1^2}{k}  = \frac 1k (\sqrt{(1-\eps)k} + \sqrt{\eps(d-k)})^2 \approx \frac dk \eps. \ee
In this example, for fixed $k$ and $\eps$, the robustness $R_k(v)$ can get arbitrarily larger with increasing $d$. That is, even if the deterministic truncation error $\eps$ is small, $R_k(v)$ can be large.

\paragraph{The case $k=1$.}
A different generalization of the previous example is to consider a general $v\in \bbC^d$ and to take $k=1$, as studied in \cite{Chen_2016_quantifying}.   If $F_1(v)=\sqrt{1-\eps}$ then $v_1 = \sqrt{1-\eps}$ and so $\norm{v_{2:d}}_2=\sqrt{\eps}$.  

There is an explicit but complicated formula for $T_1$ in \cite{Chen_2016_quantifying}.  But without giving an exact expression, we can see that $T_1=O(\sqrt\eps)$ by taking
\be
m = \frac{1}{\sqrt 2} e_1 + \frac{1}{\sqrt{2\eps}} v_{2:k}.
\ee
From \cref{lem:main_td_opt_problem_real}, we have
\be
T_1(v) \geq \langle m,v\rangle^2 - \norm{m}_{(1)}^2
= \qty(\sqrt{\frac{1-\eps}{2}} + \sqrt{\frac{\eps}{2}} )^2 - \frac 12
=\sqrt{\eps(1-\eps)}.
\ee

Thus for $k=1$ we cannot asymptotically improve the trace distance using mixed states.  Of course, even in this case, mixed states are the only way to achieve nontrivial robustness, which here equals $\norm{v}_1^2-1$ .  And mixed states can still yield a non-asymptotic improvement in trace distance for $k=1$.

Intuitively the weakness of the $k=1$ approximation is because $\sigma$ has no off-diagonal entries, so there is not much space for interesting mixed strategies.

\subsection{Power laws}
An informative set of examples will be states in which the coefficients of $v$ decay as a power law.  Here we will see that the asymptotic improvement in trace distance (or lack thereof) will depend on the exponent of the power law. 

For $\gamma > 0$ choose
\be\label{eq:power_law_entries}
v_i = \frac{i^{-\gamma}}{\sqrt{Z}}
\qand
Z = \sum_{i=1}^d i^{-2\gamma}
\ee
so the coefficients of the state have a power law decay.
Does using randomized truncation help?
Recall that randomized truncation is closely related to the $k$-support norm, which interpolates between the $\ell_2$ and $\ell_1$ norms.
There are three relevant regimes. Consider the unnormalized vector $w$ with $w_i = i^{-\gamma}$, so $Z = \norm{w}_2$.
\begin{enumerate}[label=(\Alph*)]
  \item\label{it:small gamma} For $\gamma < \frac12$, both $\norm{w}_2$ and $\norm{w}_1$ diverge with increasing $d$.
  \item\label{it:medium gamma} For $\frac12 < \gamma < 1$, $\norm{w}_2$ converges, but $\norm{w}_1$ diverges with increasing $d$.
  \item\label{it:large gamma} For $\gamma > 1$, both $\norm{w}_2$ and $\norm{w}_1$ converge with increasing $d$.
\end{enumerate}
We will see that for cases \ref{it:small gamma} and \ref{it:large gamma}, we get a quadratic advantage, and for \ref{it:medium gamma} we get a smaller improvement.

For the following, we will assume that both $d$ and $k$ are large.
We can approximate\footnote{Note that the intention of this section is to illustrate a qualitative behaviour, rather than prove a rigorous claim. Hence, for brevity, we omit some details pertaining to the accuracy of approximating sums by integrals.} the normalization $Z$ by using the following asymptotic expansion of generalized harmonic numbers:
\be
H_d^{(\alpha)} := \sum_{j=1}^d j^{-\alpha} = \zeta(\alpha) + \frac{1}{1-\alpha}d^{1-\alpha} + O(d^{-\alpha}).
\ee
In particular, for $\gamma < \frac12$ the constant factor is small for large $d$ and we can approximate
\be
Z \approx \frac{d^{1 - 2\gamma}}{1 - 2\gamma}
\ee
and for $\gamma > \frac12$ and large $d$ we can approximate $Z \approx \zeta(2\gamma)$.
We can also approximate
\be
\eps = \sum_{i = k+1}^d \abs{v_i}^2 \approx \frac{1}{Z}\int_{k}^d x^{-2\gamma} \mathrm{d}x.
\ee
Finally, we can estimate
\be
s_{k-r} = \frac{1}{\sqrt{Z}} \sum_{i = k-r}^d i^{-\gamma} \approx \frac{1}{\sqrt{Z}} \int_{k-r-1}^d x^{-\gamma} \mathrm{d}x .
\ee
Ignoring some terms which are small as $d$ grows, this gives the approximations shown in \Cref{tab:power_law_quantities}.

\begin{table}
    \centering
    \begin{tabular}{cccccc}
    \toprule
    &$\gamma$                           & $Z$                                 & $\eps$                                      & $k$                                                         & $s_{k-r}$                                                                      \\
    \midrule
    \ref{it:small gamma}:& $(0,1/2)$    & $\frac{d^{1 - 2\gamma}}{1-2\gamma}$ & $1 - \qty(\frac kd)^{1-2\gamma}$            & $d(1-\eps)^{\frac{1}{1-2\gamma}}$                           & $\frac{d^{1-\gamma}-(k-r-1)^{1-\gamma}}{d^{\frac 12-\gamma}}$                  \\
    \ref{it:medium gamma}:& $(1/2,1)$   & $\zeta(2\gamma)$                    & $\frac{k^{-(2\gamma - 1)}}{\zeta(2\gamma)}$ & $\qty(\frac{\zeta(2\gamma)}{\eps})^{\frac{1}{2\gamma - 1}}$ & $\frac{d^{1-\gamma}-(k-r-1)^{1-\gamma}}{\sqrt{\zeta(2\gamma)}(1-\gamma)}$      \\
    \ref{it:large gamma}:& $(1,\infty)$ & same                                & same                                        & same                                                        & $\frac{(k-r+1)^{1 - \gamma}-d^{1-\gamma}}{\sqrt{\zeta(2\gamma)}(\gamma - 1)} $ \\
    \bottomrule
  \end{tabular}
    \caption{\label{tab:power_law_quantities}Approximations of relevant quantities for target states obeying power laws at different values of the parameter $\gamma$.}
\end{table}

We will now determine for these three cases how $T_k(v)$ scales with $\eps$, where $\sqrt{\eps}$ is the deterministic truncation error. We take $\gamma$ to be a fixed constant value (which we absorb into the constants in the big-$O$ notation).
We will see that we get a quadratic advantage for small $\eps$ for cases \ref{it:small gamma} and \ref{it:large gamma}, and a smaller advantage in between for \ref{it:medium gamma}. This is illustrated in \Cref{fig:power_law_plot}.

\paragraph*{Case \ref{it:small gamma}:}
For this case, where $0 < \gamma < \frac12$ we take $k$ as in the table, and use the general bound in \cref{eq:maximal trace dist} to get
\be
T_k(v) \leq 1 - \frac{k}{d} = 1 - (1 - \eps)^{\frac{1}{1 - 2\gamma}} = O(\eps),
\ee
so for large $k$ (and small $\eps$) randomized truncation gives a quadratic improvement in trace distance.

\paragraph*{Case \ref{it:medium gamma}:}
Here $\frac12 < \gamma < 1$.
Let $\beta = 1/(3 - 2\gamma)$, let $\ell = k^{2\beta}$, and define $w$ to be the deterministic truncation of $v$ to $\ell$ terms.
The error in trace distance can be bounded as
\be
T(v,w) = \sqrt{1 - F_{\ell}(v)^2} = \sum_{i = \ell+1}^d v_i^2 = O\mleft(\ell^{\frac{1}{2\gamma - 1}}\mright) = O(\eps^{\beta}).
\ee
We now bound $R_k(w)$. Using the quantities listed in \Cref{tab:power_law_quantities} for this case, we see
\be
\begin{aligned}
  \norm{w}_{(k,*)}^2 &= \sum_{i=1}^{k-r-1} \abs{w_i}^2 + \frac{s_{k-r}^2}{r+1} \\&\leq 1 + \frac{s_{1}^2}{k}\\
                     & = 1 + O\mleft(\frac{\ell^{2(1 - \gamma)}}{k}\mright) \\
                     &= 1 + O\mleft(k^{4\beta(1 - \gamma) - 1}\mright)\\
                     &= 1 + O\mleft(k^{-\frac{2\gamma - 1}{3 - 2\gamma}}\mright)\\
                     &= 1 + O(\eps^{\beta}),
\end{aligned}
\ee
which gives
\be
R_k(w) = O(\eps^{\beta}) \quad \text{ and } \quad T_k(w) = O(\eps^{\beta}).
\ee
Therefore,
\be
T_k(v) \leq T(v,w) + T_k(w) = O(\eps^{\beta}).
\ee
Since $\beta > \frac12$ this is an improvement over the pure state approximation, which achieves trace distance $\sqrt{\eps}$.

\begin{figure}
  \centering
  \includegraphics[width=0.45\linewidth]{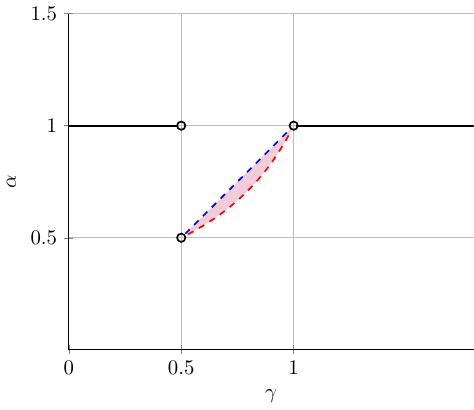}
  \caption{Leading order exponent $\alpha$ of the optimal randomized trace distance as a function of $\eps\equiv 1-F_k(v)^2$ for a $d$-dimensional power law state $v$, as described in the main text. In particular, the state $v$ is defined as in \cref{eq:power_law_entries} with parameter $\gamma$, and the value of $k$ is chosen to keep $\eps$ fixed as $d\to\infty$. Then the resulting trace distance is $c\eps^\alpha$ to leading order, for some constant $c$. For $\gamma\notin (1/2,1)$ the exponent is equal to $1$, corresponding to the maximum possible advantage using randomized truncation. For $\gamma\in (1/2,1)$ the value of the exponent lies within the shaded region.}
  \label{fig:power_law_plot}
\end{figure}

We now show that one can not do much better in this case, and that $T_k(v) = \Omega(\eps^{\gamma})$, which is worse than the optimal scaling with $\eps$ since $\gamma < 1$.
We give a lower bound by choosing a unit vector $m$ and bounding
\be
T_k(v) \geq \abs{\langle m, v \rangle}^2 - \norm{m}_{(k)}^2
\ee
by \cref{lem:main_td_opt_problem_real}.
We let $\ell > k$ and $0 < a < 1$, which we choose later, and let
\be
m_i = \begin{cases} a v_i                                 & \text{ for } i = 1, \dots, k        \\
              \sqrt{\frac{1 - a^2(1 - \eps)}{\ell}} & \text{ for } i = k+1, \dots, k+\ell
\end{cases}
\ee
which is chosen such that
\be
\norm{m}_2^2 = a^2 \norm{v}_{(k)}^2 + 1 - a^2(1 - \eps) = 1.
\ee
We now have
\be
\begin{aligned}
  \abs{\langle m, v \rangle}^2 & \geq a^2 \norm{v}_{(k)}^4 + 2a \sqrt{\frac{1 - a^2(1 - \eps)}{\ell}} \sum_{i=k+1}^{k+\ell} v_i \\
                               & = a^2(1-\eps)^2 + 2a \sqrt{\frac{1 - a^2(1 - \eps)}{\ell}} \sum_{i=k+1}^{k+\ell} v_i \, .
\end{aligned}
\ee
Also, as long as $m_{k+1} \leq m_k$,
\be
\norm{m}_{(k)}^2 = a^2 \norm{v}_{(k)}^2 = a^2(1 - \eps).
\ee
This gives
\be
T_k(v) \geq \abs{\langle m, v \rangle}^2 - \norm{m}_{(k)}^2 =  2a \sqrt{\frac{1 - a^2(1 - \eps)}{\ell}} \sum_{i=k+1}^{k+\ell} v_i - O(\eps).
\ee
We choose $a = \frac12 (1 - \sqrt{\eps})^{-\frac12}$, which gives $1 - a^2(1 - \eps) = \frac14(3 - \eps)$, so and $ab = \Theta(1)$. We choose $\ell = \Theta(k^{2\gamma})$,  which is good enough to make sure $\Omega(k^{-\gamma}) = m_{k} \geq m_{k+1} = \Theta(1/\sqrt{\ell})$, and for which $v_{k+1} + \dots + v_{k+\ell} = \Omega(\ell^{1 - \gamma})$.
We conclude that
\be
T_k(v) =\Omega(\ell^{\frac12 - \gamma}) = \Omega(k^{-\gamma(2\gamma - 1)}) = \Omega(\eps^{\gamma}).
\ee

\paragraph*{Case \ref{it:large gamma}:}
For large $d$, we have
\be
\eps = \Theta(k^{1 - 2\gamma}).
\ee
When computing the $k$-support norm, we have $s_{k-r} = \Theta((k-r)^{1 - \gamma})$, since for $\gamma > 1$, the $d^{1-\gamma}$ term is subleading.
The condition that $(r+1)v_{k-r} = \Theta(r (k-r)^{-\gamma}$ should be approximately equal to $s_{k-r}$ in \cref{eq:condition k support norm} then gives $r = \Theta(k)$ and as before we can bound
\be
\norm{v}_{(k,*)}^2 = \sum_{i=1}^{k-r-1} \abs{v_i}^2 + \frac{s_{k-r}^2}{r+1} \leq 1 + O(k^{-1} (k^{1 - \gamma})^2) = 1 + O(k^{1 - 2\gamma}) = 1 + O(\eps).
\ee
Hence, $R_k(v) = O(\eps)$, and also $T_k(v) = O(\eps)$.

\paragraph{Summary.} We can summarize the state of affairs for power law states with a diagram, given in \Cref{fig:power_law_plot}. For $v$ a fixed $d$-dimensional power law state, as in \cref{eq:power_law_entries}, consider the trace distance $T_d(\eps)$ defined as $T_k(v)$ for $k$ such that $\eps= 1-F_k(v)^2$. In other words, this is the optimal randomized trace distance as a function of $\eps$, where $\sqrt{\eps}$ is the optimal deterministic trace distance. Then the arguments above show that $\lim_{d\to\infty}T_d(\eps) = \Theta(\eps^{\alpha})$ for some $\alpha\in [1/2,1]$, the exponent of the leading-order contribution to the trace distance. The case where $\alpha=1$ represents the maximal possible advantage allowed by the bound in \cref{eq:fvdg_pure}, while $\alpha=1/2$ corresponds to no advantage over deterministic truncation.

\section{Application to matrix product states}\label{sec:MPS}

In this section, we discuss the application of randomized truncation to tensor network algorithms. We focus on matrix product states (MPS), but the principle applies to other tensor networks as well.
\begin{dfn}
  Let $\cH = (\bbC^d)^{\otimes n}$, let $ \{e_{i_1 \dots i_n} = e_{i_1} \ot \cdots \ot e_{i_n}: i_1,\dots,i_n \in[d]\}$ be the standard basis vectors, and let $r_0,r_1,\dots, r_n\in\bbZ_{>0}$ with $r_0=r_n$. A state $v\in\cH$ is a \emph{matrix product state} with bond dimensions $(r_0,\dots,r_n)$ if it can be written as
  \begin{equation}
    v
    \;=\;
    \sum_{i_1,\ldots,i_n\in[d]}
    \tr\Big( M^{(1,i_1)} M^{(2,i_2)} \cdots M^{(n,i_n)} \Big)\,e_{i_1\cdots i_n},
  \end{equation}
  where for each $\alpha\in [n]$ and $i\in [d]$ we have $M^{(\alpha,i)}\in\bbC^{r_{\alpha-1}\times r_\alpha}$ are some matrices.
\end{dfn}
We restrict our attention to MPS with open boundary conditions, where $r_0=r_n=1$. In many applications, the goal is to simulate a quantum many-body system (e.g., spin degrees of freedom on a lattice, or shallow quantum circuits) by representing its state as an MPS. In order to limit the computational and memory resources used in these simulations, a common subroutine is to truncate a bond, which usually refers to keeping the largest Schmidt coefficients with respect to the bipartition corresponding to that bond.

We propose using an optimal randomized truncation of the bonds instead. Although the results from \Cref{sec:trace-dist,sec:robustness} give efficient methods for this procedure, there are at least two unresolved issues in this setting. First, the trace distance is a worst-case error metric and therefore does not provide a guaranteed improvement over determinstic truncation for a specific observable of interest. Nevertheless, it is reasonable to expect quantum states which are closer in trace distance (or robustness) to lead to better estimates. 
The second issue is that performing multiple randomized truncations in succession may not yield a globally optimal approximation, i.e., an approximation of the target state as a convex combination of MPS with bounded bond dimensions. However, this observation applies equally well in the case where one uses deterministic truncations. In fact, finding the best product state approximation (corresponding to $\chi = 1$) to a multipartite state is $\NP$-hard to compute, even for $n=3$ parties \cite{hillar2013most}, so one should not expect efficient algorithms for finding globally optimal MPS approximations.

\paragraph{Details of numerics.} In light of the above discussion, we perform numerics to test the effectiveness of using randomized truncation for multipartite states. We begin by generating partially random $10$-qubit states as follows. First, we sample a $10$-qubit state from the uniform distribution. Next, we compute a Schmidt decomposition of the state with respect to the first bond and replace the resulting Schmidt coefficients with coefficients obeying a (normalized) power law as in \cref{eq:power_law_entries}, for fixed exponent $\gamma$. We then carry out an identical procedure on the remaining bonds in sequence from left to right. The result is a multipartite state whose entanglement spectrum with respect to any given bipartition is constrained depending on the value of $\gamma$.

To compare the deterministic and randomized approaches, we truncate the bonds of the resulting state (viewed as an MPS) and compute the expected value of the observable $Z_5Z_6$, the Pauli-$Z$ operators on the 5th and 6th qubits. Specifically, we once again perform Schmidt decompositions from left to right in sequence, truncating and normalizing each time. For the deterministic truncation, at each bond we keep the top $\chi$ Schmidt coefficients, where $\chi$ is the bond dimension cutoff. For the randomized method, at each bond we randomly sample the coefficients according to the optimal randomized truncation in trace distance with sparsity $\chi$, as in \Cref{thm:opt_td_sampling}. The result is a randomly sampled MPS with bond dimensions at most $\chi$, from which we can compute the expected value of $Z_5Z_6$. To compute the final estimate, we take the sample mean of 100 samples. We plot the results for different values of $\gamma$ in \Cref{fig:mps}.

\paragraph{Discussion.} \Cref{fig:mps} suggests that randomized truncation offers a significant improvement for heavier tails and at small bond dimension cutoffs, by an order of magnitude in the relative error in some cases. However, when the exponent for the power law $\gamma=0.8$, there is little improvement over the deterministic approach, and both methods result in fairly accurate estimates. This is consistent with the observation that for, say, the middle bond, more than $70\%$ of the entanglement spectrum is concentrated on just the top two squared Schmidt coefficients, so one should not expect more than a modest improvement by sampling smaller Schmidt coefficients.
\begin{figure}
  \centering
  \includegraphics[width=0.85\linewidth]{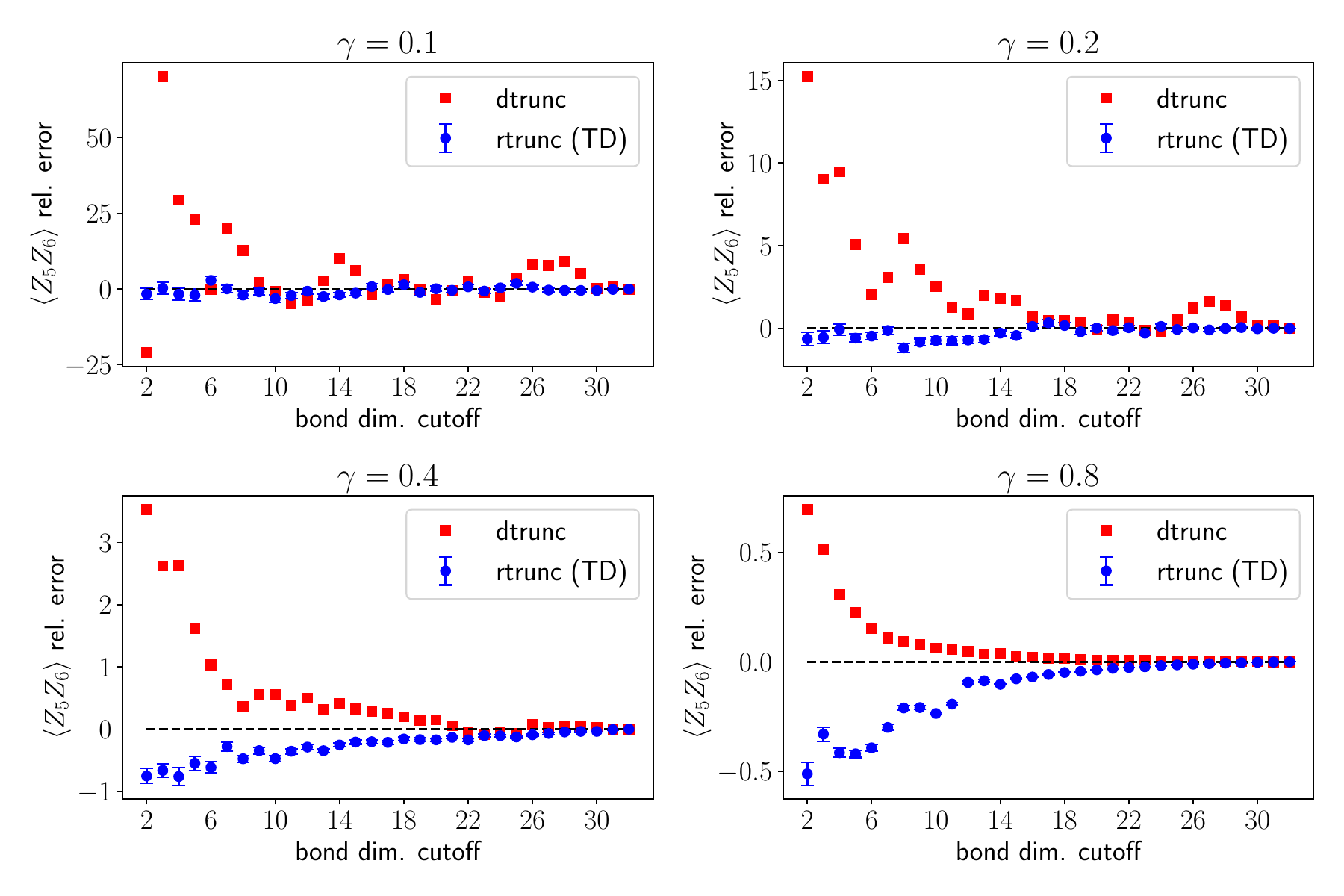}
  \caption{\label{fig:mps}Estimating the observable $\langle Z_5Z_6\rangle$ with respect to 4 random states on 10 qubits, which are generated using the method described in the main text. The parameter $\gamma$ determines the strength of the power law decay in the Schmidt coefficients across each bond. The \textsf{dtrunc} estimate is obtained from the standard, deterministic truncation of the bonds, while the \textsf{rtrunc (TD)} estimate is the sample mean of 100 samples obtained by applying the optimal (bipartite) randomized truncation in trace distance.}
\end{figure}

Randomized truncation can in principle be used as a subroutine in any MPS algorithm that requires bond dimension truncation, including TEBD and DMRG (see, e.g., \cite{SCHOLLWOCK201196}). Whether it is beneficial in practice will depend on the application at hand and how well trace distance or robustness correlate with the relevant metrics for success.
Crucially, the computational cost of the randomized truncation may be comparable to the standard method. Note that in either case we must perform Schmidt decompositions at each bond. However, for randomized truncation we must in addition apply the algorithms from \Cref{sec:optimal_density_matrix_td} or \Cref{sec:robustness}, which require computing the weights in a max-entropy distribution. As shown in \Cref{sec:max_entropy}, this can be done efficiently and in practice has cost $O(\chi d)$ per iteration when using the Newton method with the diagonal approximation outlined in \Cref{sec:fast_algorithms}. Upon solving for the weights, obtaining a single sample of the Schmidt coefficients at a given bond takes $O(\chi)$ time. Each of these times is subleading compared to the amount of time to perform the SVDs in the deterministic truncation. To fully evaluate the utility of the method, it remains to determine the number of iterations and samples needed, and then compare this with the cost of reaching the same precision with deterministic truncation and a larger bond dimension. We leave a detailed analysis of this tradeoff to future work.

\paragraph{Prior work on tensor network Monte Carlo.} The use of a randomized tensor network truncation algorithm has previously been proposed in Ref.~\cite{Ferris15}. There are two key differences between that work and the approach we have discussed above. First, we use explicitly normalized states for the randomized truncation, whereas the method from Ref.~\cite{Ferris15} uses an unbiased estimate of the state vectors, and so each state is only normalized in expectation. Iterating the latter procedure enough times will lead to an exponentially growing variance~\cite{todo2024markov}, while iterating approximations using normalized states will instead incur growing bias.

A more substantial difference is that our method may drop or keep a subset of the Schmidt coefficients deterministically, and chooses marginal inclusion probabilities for the remaining indices, which we have denoted by $q$ thoughout this work, in a way that minimizes trace distance or robustness.  In \cref{sec:max_entropy} we describe how to compute from $q$ the max-entropy weights $\mu$ which lead to the optimal ensembles of states. In contrast, Ref.~\cite{Ferris15} simply sets $q_i \coloneq v_i^2$ for all the indices $i$.  This is a reasonable heuristic, since larger values of $v_i$ then yield larger values of $q_i$, though it does not enjoy the theoretical guarantees established in the present work. How the method of~\cite{Ferris15} compares to our randomized approach is an interesting question which could be explored numerically in future work.

\section*{Acknowledgments}

This material is based upon work supported by the U.S. Department of
Energy, Office of Science, National Quantum Information Science
Research Centers, Quantum Systems Accelerator.  This work was
partially funded by the Wellcome Leap as part of the Quantum for Bio
(Q4Bio) Program.  AWH was supported by a grant from the Simons
Foundation (MP-SIP-00001553, AWH) and the NSF (PHY-2325080).
FW was supported by the European Union (ERC, ASC-Q, 101040624).
We thank Garnet Chan, Jakob G\"unther, Allen Liu, John Martyn, Ankur Moitra,  Norbert Schuch, and Ewin Tang for helpful conversations.

\appendix

\section{Background material}
For the convenience of the reader, we collect a few standard definitions and facts from convex analysis, nonlinear optimization and matrix analysis used in the proofs in the main text.

\subsection{Minimax and zero-sum games}\label{sec:minimax}
A minimax inequality of the form
\begin{align}\label{eq:minimax_ineq}
  \max_{y\in \cY}\min_{x\in \cX}f(x,y)\leq \min_{x\in\cX}\max_{y\in\cY}f(x,y)
\end{align}
holds for arbitrary choices of sets $\cX,\cY$ and a function $f:\cX\times \cY\to\bbR$. Here and throughout when we take maxima and minima of functions, it is implied that these values are attained. Such an inequality admits a game-theoretic interpretation as follows. Consider a zero-sum, two-player game where the payoff function for the $x$-player is $-f(x,y)$ and that for the $y$-player is $+f(x,y)$. Then the right-hand side of \cref{eq:minimax_ineq} represents the worst-case payoff for the $y$-player when they ``go second'' in the game, i.e., the strategy for the $y$-player may depend on $x$. Through this lens, the minimax inequality amounts to the statement that going second in the game is a privilege. We encounter a specific example of such a two-player game in \Cref{sec:trace-dist}.

A minimax theorem claims that the minimax inequality holds with equality under certain conditions. One example which suffices for our purposes is Sion's Minimax Theorem, provided below. We refer the interested reader to \cite{simons1995minimax} for a history of other minimax theorems.
\begin{thm}[{Implied by Thm.~3 in \cite{fan1952fixed}}]\label{thm:fan_minimax}
  Let $\cX$ and $\cY$ be compact, convex subsets of a Hilbert space and $f: \cX\times\cY\to \bbR$ be a real-valued function such that:
  \begin{enumerate}
    \item $f(\cdot,y)$ is continuous and convex on $\cX$ for every fixed $y\in \cY$, and
    \item $f(x,\cdot)$ is continuous and concave on $\cY$ for every fixed $x\in\cX$.
  \end{enumerate}
  Then
  \begin{align}\label{eq:minimax_eq}
    \max_{y\in\cY}\min_{x\in \cX} f(x,y) = \min_{x\in\cX}\max_{y\in\cY} f(x,y).
  \end{align}
\end{thm}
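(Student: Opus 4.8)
The plan is to deduce the equality from the elementary minimax inequality together with a compactness reduction to the finite-dimensional separating-hyperplane theorem; although the cited result of Fan derives this from a fixed-point theorem, I would give the (equivalent but more elementary) separation argument. The inequality $\le$ in \cref{eq:minimax_eq} is the trivial one from \cref{eq:minimax_ineq}, so everything reduces to proving $\min_{x}\max_{y}f(x,y)\le\max_{y}\min_{x}f(x,y)\eqcolon\alpha$. First I would record that all the extrema are attained: $\psi(y)\coloneq\min_{x\in\cX}f(x,y)$ is well defined since $\cX$ is compact and $f(\cdot,y)$ continuous, it is upper semicontinuous as a pointwise infimum of continuous functions, and concave as an infimum of concave functions, hence attains its maximum on the compact convex set $\cY$; symmetrically $x\mapsto\max_{y}f(x,y)$ is convex and lower semicontinuous, so its minimum over $\cX$ is attained.

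For each $y\in\cY$ set $C_y\coloneq\{x\in\cX:f(x,y)\le\alpha\}$. Each $C_y$ is nonempty (since $\min_x f(x,y)\le\alpha$), closed (continuity of $f(\cdot,y)$), and convex (convexity of $f(\cdot,y)$). Any $x^\star\in\bigcap_{y\in\cY}C_y$ satisfies $\max_y f(x^\star,y)\le\alpha$, which is precisely the desired inequality, so it suffices to show the intersection is nonempty. Because $\cX$ is compact and the $C_y$ are closed, by the finite intersection property it is enough to prove $\bigcap_{i=1}^n C_{y_i}\neq\emptyset$ for an arbitrary finite list $y_1,\dots,y_n\in\cY$.

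For this finite case I would argue by contradiction using separation. Define $h\colon\cX\to\bbR^n$ by $h(x)\coloneq(f(x,y_1),\dots,f(x,y_n))$ and set $K\coloneq h(\cX)+\bbR^n_{\ge0}$. The key observation is that convexity of each coordinate $f(\cdot,y_i)$ forces $\conv\, h(\cX)\subseteq K$: a convex combination $\sum_j\lambda_j h(x_j)$ dominates, coordinatewise, the vector $h\!\left(\sum_j\lambda_j x_j\right)$, which lies in $h(\cX)$. Hence $K=\conv\, h(\cX)+\bbR^n_{\ge0}$ is convex, and it is closed since $h(\cX)$ is compact. If $\bigcap_i C_{y_i}=\emptyset$ then $\alpha\mathbf 1\notin K$, so strict separation gives $p\in\bbR^n\setminus\{0\}$ and $c\in\bbR$ with $\langle p,\alpha\mathbf 1\rangle<c\le\langle p,z\rangle$ for all $z\in K$; invariance of $K$ under adding $\bbR^n_{\ge0}$ forces $p\ge0$, and after rescaling $p$ is a probability vector. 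Then, using concavity of $f(x,\cdot)$ and $\sum_i p_i y_i\in\cY$,
\be
\alpha \;<\; \inf_{x\in\cX}\sum_{i=1}^n p_i f(x,y_i)\;\le\;\inf_{x\in\cX} f\!\Big(x,\textstyle\sum_i p_i y_i\Big)\;\le\;\max_{y\in\cY}\min_{x\in\cX} f(x,y)\;=\;\alpha,
\ee
a contradiction. Therefore $\alpha\mathbf 1\in K$, i.e.\ some $\bar x\in\cX$ has $f(\bar x,y_i)\le\alpha$ for all $i$, completing the finite case and hence the proof.

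The main obstacle is precisely this finite case: the image $h(\cX)$ is in general nonconvex, and the trick that makes separation applicable is that adding $\bbR^n_{\ge0}$ convexifies it, which uses \emph{genuine} convexity of the $f(\cdot,y_i)$ (mere quasiconvexity would not give the inclusion $\conv\, h(\cX)\subseteq K$, and one would then have to fall back on a KKM/fixed-point argument). The remainder is routine semicontinuity bookkeeping to guarantee that all the maxima and minima in \cref{eq:minimax_eq} are attained.
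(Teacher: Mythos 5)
Your proof is correct, and it takes a genuinely different route from the paper: the paper does not prove this statement at all, but imports it as a black box from Ky Fan's 1952 minimax theorem (invoked in the main text as Sion's theorem), whose original proof rests on a KKM/fixed-point-type argument. Your argument is self-contained and elementary: the nontrivial direction is reduced, via compactness of $\cX$ and the finite intersection property, to showing finitely many level sets $C_{y_i}=\{x: f(x,y_i)\le\alpha\}$ meet, and the finite case is settled by strictly separating $\alpha\mathbf{1}$ from $K=h(\cX)+\bbR^n_{\ge 0}$ in $\bbR^n$. The details check out: each $C_y$ is nonempty, closed and convex; $K$ is convex because convexity of each $f(\cdot,y_i)$ gives $\conv h(\cX)\subseteq K$ (this is exactly where genuine, not quasi-, convexity is used, as you note), and closed as a compact set plus a closed cone; the separating functional is forced into the nonnegative orthant by the recession cone and, after normalization, yields $\bar y=\sum_i p_i y_i\in\cY$ with $\min_{x}f(x,\bar y)>\alpha$, contradicting the definition of $\alpha$; and your semicontinuity remarks (usc concave $\min_x f$, lsc convex $\max_y f$) justify that the outer extrema are attained, matching the paper's convention. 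The trade-off between the two routes: your separation argument needs only finite-dimensional convex separation but requires honest convexity/concavity, whereas Fan's and Sion's theorems hold under weaker hypotheses (convexlike/quasi-convex structure, compactness of only one factor) at the cost of fixed-point or KKM machinery. For the paper's application -- a function bilinear in $(\sigma,M)$ over compact convex sets -- either argument suffices, so your proof could serve as a self-contained replacement for the citation.
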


A related concept is the existence of a saddle point, also referred as an equilbrium point in the context of two-player games.
\begin{dfn}[Saddle point]
  Given arbitrary sets $\cX$ and $\cY$ and a function $f:\cX\times\cY\to\bbR$, a point $(x^\star,y^\star)\in\cX\times\cY$ is a saddle point if
  \begin{align}\label{eq:saddle_point_def}
    f(x^\star,y)\leq f(x^\star,y^\star)\leq f(x,y^\star)
  \end{align}
  for every $x\in\cX$ and $y\in\cY$.
\end{dfn}
The following lemma states that the existence of a saddle point is equivalent to a minimax equality where both the minimum and the maximum are attained.
\begin{lem}[Well-known]\label{lem:saddle_point_existence}
  A saddle point $(x^\star,y^\star)\in\cX\times \cY$ for the function $f:\cX\times\cY\to\bbR$ exists if and only if
  \begin{align}
    \max_{y\in\cY} \min_{x\in\cX} f(x,y)=\min_{x\in\cX}\max_{y\in\cY}f(x,y)\eqcolon f^\star.
  \end{align}
  Furthermore, for such a saddle point we have $f(x^\star, y^\star)=f^\star$.
\end{lem}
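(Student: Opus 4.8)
The plan is to derive both implications directly from the defining double inequality \cref{eq:saddle_point_def}, using the always-valid minimax inequality \cref{eq:minimax_ineq} as the bridge, and invoking the standing convention that all the maxima and minima in question are attained.

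For the ``only if'' direction, suppose $(x^\star,y^\star)$ is a saddle point. The inequality $f(x^\star,y^\star)\le f(x,y^\star)$ holds for every $x\in\cX$, so taking the minimum over $x$ and then relaxing gives $f(x^\star,y^\star)\le \min_{x}f(x,y^\star)\le \max_{y}\min_{x}f(x,y)$. Symmetrically, $f(x^\star,y)\le f(x^\star,y^\star)$ for every $y\in\cY$ yields $\min_{x}\max_{y}f(x,y)\le \max_{y}f(x^\star,y)\le f(x^\star,y^\star)$. Combining these two chains with \cref{eq:minimax_ineq} pinches everything together: $\max_{y}\min_{x}f=\min_{x}\max_{y}f=f(x^\star,y^\star)$.

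For the ``if'' direction, assume the two nested extrema coincide, and write $\nu$ for their common value. By the attainment convention we may choose $x^\star$ realizing the minimum in $\min_{x}\max_{y}f(x,y)$ and $y^\star$ realizing the maximum in $\max_{y}\min_{x}f(x,y)$, so that $\max_{y}f(x^\star,y)=\nu=\min_{x}f(x,y^\star)$. Then $f(x^\star,y^\star)\le \max_{y}f(x^\star,y)=\nu=\min_{x}f(x,y^\star)\le f(x^\star,y^\star)$, forcing equality throughout; in particular $f(x^\star,y^\star)=\nu$, and moreover $f(x^\star,y^\star)=\max_{y}f(x^\star,y)$ and $f(x^\star,y^\star)=\min_{x}f(x,y^\star)$. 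The first identity gives $f(x^\star,y)\le f(x^\star,y^\star)$ for all $y$, and the second gives $f(x^\star,y^\star)\le f(x,y^\star)$ for all $x$, which is exactly the saddle-point condition \cref{eq:saddle_point_def}.

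Both directions are purely algebraic manipulations of inequalities, so there is no substantial obstacle. The only point demanding care is bookkeeping: one must track which variable is optimized in each nested extremum, and the ``if'' direction genuinely relies on the standing assumption that the outer extrema are attained, since otherwise the selections of $x^\star$ and $y^\star$ would be meaningless.
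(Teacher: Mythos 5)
Your proof is correct and follows essentially the same route as the paper: for the ``if'' direction you pick $x^\star$ and $y^\star$ as optimizers of the outer extrema and pinch $f(x^\star,y^\star)$ between $\min_x f(x,y^\star)$ and $\max_y f(x^\star,y)$, and for the ``only if'' direction you chain the saddle-point inequalities into a reverse minimax inequality, exactly as in the paper's argument. No gaps; the attainment convention is invoked at the same point where the paper implicitly uses it.
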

\begin{proof}
  Suppose \cref{eq:minimax_eq} holds. Then we may choose
  \begin{align*}
    x^\star\in\argmin_{x\in\cX}\max_{y\in\cY}f(x,y),\qquad y^\star\in \argmax_{y\in\cY}\min_{x\in\cX}f(x,y)
  \end{align*}
  in which case
  \begin{align*}
    \max_{y\in\cY} \min_{x\in\cX} f(x,y) & = \min_{x\in\cX}f(x,y^\star)
    \leq f(x^\star,y^\star)
    \leq \max_{y\in\cY}f(x^\star, y)
    =\min_{x\in\cX}\max_{y\in\cY}f(x,y)
  \end{align*}
  and, by the minimax equality, the two inequalities hold with equality. Hence, the relations in \cref{eq:saddle_point_def} are satisfied.

  On the other hand, suppose there is a saddle point $(x^\star, y^\star)\in\cX\times\cY$. Then
  \begin{align}
    \min_{x\in\cX}\max_{y\in\cY}f(x,y) & \leq \max_{y\in\cY}f(x^\star,y) \leq f(x^\star,y^\star)\leq \min_{x\in\cX}f(x,y^\star)\leq \max_{y\in\cY}\min_{x\in\cX}f(x,y)
  \end{align}
  where the second and third inequalities use the relations in \cref{eq:saddle_point_def}. But this is an inequality in the opposite direction of the minimax inequality, which always holds.
\end{proof}

\subsection{Convex analysis}\label{sec:convex}
We now list some standard facts from convex analysis. See \cite{rockafellar1970convexanalysis,hl2001fundamentalsofconvex} for an introduction to these concepts. Throughout, we use the extended reals and let $\dom f$ denote the set of all values $x$ for which $f(x)$ is finite.
\begin{dfn}[Subgradient and subdifferential]
  Let $f:\bbR^d\to\bbR\cup\{+\infty\}$ be a convex function. A vector $g\in\bbR^d$ is a \emph{subgradient} of $f$ at $x\in\bbR^d$ if
  \begin{align}
    f(y)-f(x)\;\ge\;\langle g,\,y-x\rangle \qquad \text{for all }y\in\bbR^d.
  \end{align}
  The set of all such subgradients is the \emph{subdifferential} of $f$ at $x$, denoted $\partial f(x)$.
\end{dfn}
Some rules familiar from calculus are applicable in many cases.
\begin{lem}[Implied by Theorem~23.8 in \cite{rockafellar1970convexanalysis}]\label{lem:subdifferential_addition}
Let $f,g:\bbR^d\to \bbR\cup \{+\infty\}$ be convex functions and suppose that the relative interiors of $\dom f$ and $\dom g$ have a point in common. Then for any $x\in\bbR^d$ it holds that $\partial (f+g)(x) = \partial f(x) + \partial g(x)$.
\end{lem}
\begin{dfn}[Normal cone]\label{def:normal_cone}
  For $\Omega\subseteq\bbR^d$ and $x\in\Omega$, the \emph{normal cone} of $\Omega$ at $x$ is
  \begin{align}
    \cN_\Omega(x) = \{g\in\bbR^d:\ \langle g,\,y-x\rangle\le 0\ \text{for all }y\in\Omega\}.
  \end{align}
\end{dfn}
The above definition allows us to state the following first-order optimality condition, which is valid even when the objective function is not differentiable.
\begin{lem}\label{lem:fooc}
  Let $f:\bbR^d\to\bbR$ be a convex function and $\Omega\subseteq\bbR^d$ be a convex set with nonempty relative interior. A point $x\in\Omega$ minimizes $f$ over $\Omega$ if and only if
  \begin{align}
    0\in \partial f(x) + \cN_\Omega(x).
  \end{align}
\end{lem}
\begin{proof}
    It is straightforward to verify that for any convex function $h:\bbR^d\to\bbR\cup \{+\infty\}$ it holds that $x$ minimizes $h$ if and only if $0\in\partial h(x)$. This is sometimes referred to as \textit{Fermat's Rule}. Now, the set of optimal solutions for $\min \{f(x): x\in \Omega\}$ is identical to that for the optimization $\min\{f(x)+\delta_\Omega(x): x\in \bbR^d\}$ where $\delta_\Omega:\bbR^d\to\bbR\cup\{+\infty\}$ satisfies $\delta_\Omega(x)=0$ if $x\in\Omega$ and $\delta_\Omega(x)=+\infty$ otherwise. Furthermore, making use of \Cref{def:normal_cone} we see that $\partial \delta_\Omega(x) = \cN_\Omega(x)$ for any $x\in \Omega$. Putting $h=f+\delta_\Omega$ and applying \Cref{lem:subdifferential_addition} completes the proof.
\end{proof}
If $f$ is differentiable and the feasible set is specified using equality constraints only then the condition above gives back the usual method of Lagrange multipliers.

Given a convex set $\cX$ in a finite-dimensional real inner-product space and $f:\cX\to\bbR$, the \emph{Fenchel conjugate} $f^*:\cX\to\bbR$ is given by
\begin{align}
  f^*(y) = \sup_{x\in\cX} \bigl\{ \langle x,y\rangle - f(x) \bigr\}.
\end{align}
When $f$ is a squared norm we have the following lemma. We omit the proof, which is a simple application of the Cauchy-Schwarz inequality.
\begin{lem}\label{lem:fenchel_conjugate_of_norm}
  Let $\norm{\cdot}$ be a norm on $\bbR^d$ and set $f(x)=\tfrac12\norm{x}^2$. Then
  \begin{align}
    f^*(y) = \tfrac12\norm{y}_*^2,
  \end{align}
  where $\norm{\cdot}_*$ is the dual norm, defined as $\norm{y}_* = \sup_{x\in\cX} \Bqty{\langle x,y\rangle :  \norm{x}\leq 1}$.
\end{lem}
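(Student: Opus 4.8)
This is a standard Fenchel-conjugate computation, and I would prove the two inequalities $f^*(y)\le\tfrac12\norm{y}_*^2$ and $f^*(y)\ge\tfrac12\norm{y}_*^2$ separately. The upper bound comes from combining the defining inequality of the dual norm with Young's inequality; the lower bound comes from exhibiting an explicit near-optimal $x$.

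\textbf{Upper bound.} By positive homogeneity of $\norm{\cdot}$, for every $x\neq 0$ we have $\langle x,y\rangle=\norm{x}\,\langle x/\norm{x},\,y\rangle\le\norm{x}\,\norm{y}_*$, and this also holds trivially at $x=0$. Applying the scalar inequality $ab\le\tfrac12 a^2+\tfrac12 b^2$ with $a=\norm{x}$ and $b=\norm{y}_*$ gives
\begin{align}
  \langle x,y\rangle-\tfrac12\norm{x}^2\;\le\;\norm{x}\,\norm{y}_*-\tfrac12\norm{x}^2\;\le\;\tfrac12\norm{y}_*^2
\end{align}
for all $x\in\bbR^d$. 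Taking the supremum over $x$ yields $f^*(y)\le\tfrac12\norm{y}_*^2$.

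\textbf{Lower bound.} Since $\bbR^d$ is finite-dimensional, the unit ball $\{u:\norm{u}\le 1\}$ is compact and $u\mapsto\langle u,y\rangle$ is continuous, so there exists $u^\star$ with $\norm{u^\star}\le 1$ and $\langle u^\star,y\rangle=\norm{y}_*$. Plugging the feasible point $x=\norm{y}_*\,u^\star$ into the definition of $f^*$ gives
\begin{align}
  f^*(y)\;\ge\;\langle \norm{y}_* u^\star,\,y\rangle-\tfrac12\norm{\,\norm{y}_* u^\star\,}^2
  \;=\;\norm{y}_*^2-\tfrac12\norm{y}_*^2\norm{u^\star}^2
  \;\ge\;\tfrac12\norm{y}_*^2,
\end{align}
using $\norm{u^\star}\le 1$. (When $y=0$ both bounds read $f^*(0)=0$ trivially.) Combining the two inequalities proves the claim. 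There is no real obstacle here — the only point requiring a word of justification is the attainment of the supremum defining $\norm{y}_*$, which is immediate from compactness in finite dimensions.
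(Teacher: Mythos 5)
Your proof is correct. The paper states this lemma in \cref{sec:convex} as a standard fact without giving a proof, and your argument is exactly the textbook one it implicitly invokes: the upper bound from the dual-norm inequality $\langle x,y\rangle\le\norm{x}\norm{y}_*$ combined with Young's inequality, and the lower bound by plugging in $x=\norm{y}_*u^\star$ where $u^\star$ attains the supremum defining $\norm{y}_*$ (attainment being justified by compactness of the unit ball in finite dimensions, as you note). Nothing is missing.
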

\begin{cor}\label{cor:dual_norm_fenchel_conj}
  Let $\norm{\cdot}$ be a norm on $\bbR^d$ with dual $\norm{\cdot}_*$. For any $x,y\in\bbR^d$, the following are equivalent:
  \begin{enumerate}
    \item $y\in\partial(\tfrac12\norm{\cdot}^2)(x)$, and $x$ optimizes $\max_{x'}\{\langle x',y\rangle-\tfrac12\norm{x'}^2\}$,
    \item $x\in\partial(\tfrac12\norm{\cdot}_*^2)(y)$, and $y$ optimizes $\max_{y'}\{\langle x,y'\rangle-\tfrac12\norm{y'}_*^2\}$, and
    \item $\norm{x}^2+\norm{y}_*^2=2\langle x,y\rangle$.
  \end{enumerate}
\end{cor}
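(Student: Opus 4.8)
The plan is to recognize conditions 1, 2, and 3 as three restatements of the Fenchel--Young equality for the conjugate pair $f = \tfrac12\norm{\cdot}^2$ and $f^* = \tfrac12\norm{\cdot}_*^2$, the latter identification being exactly \Cref{lem:fenchel_conjugate_of_norm}. Two standard facts about convex conjugation do all the work. First (Fenchel--Young), for any proper function $f$ one has $f(x) + f^*(y) \ge \langle x, y\rangle$ for all $x,y$, with equality if and only if $y \in \partial f(x)$; and when $f$ is closed and convex, so that $f^{**} = f$, equality is likewise equivalent to $x \in \partial f^*(y)$. Second, unwinding the definition of a subgradient, $y \in \partial f(x)$ holds if and only if $x$ maximizes $x' \mapsto \langle x', y\rangle - f(x')$, since the defining inequality $f(x') \ge f(x) + \langle y, x' - x\rangle$ for all $x'$ is just a rearrangement of $\langle y, x'\rangle - f(x') \le \langle y, x\rangle - f(x)$ for all $x'$. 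The function $f = \tfrac12\norm{\cdot}^2$ is finite and convex on all of $\bbR^d$, hence continuous and closed, so both facts apply, and $f^* = \tfrac12\norm{\cdot}_*^2$ by \Cref{lem:fenchel_conjugate_of_norm}.

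Given these, the corollary follows by a short chain of equivalences. By the second fact, the two clauses of condition 1 say the very same thing, namely $y \in \partial f(x)$; similarly condition 2 is exactly $x \in \partial f^*(y)$. By the first fact, each of these is equivalent to the equality case $f(x) + f^*(y) = \langle x, y\rangle$, which, substituting $f(x) = \tfrac12\norm{x}^2$ and $f^*(y) = \tfrac12\norm{y}_*^2$ and multiplying by $2$, is precisely condition 3. Hence $1 \iff 3 \iff 2$.

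The one place calling for a little care --- and the closest thing to an obstacle --- is the symmetric half of Fenchel--Young invoked for condition 2, which relies on $f^{**} = f$ (valid here because $f$ is closed convex, or because $\norm{\cdot}$ is its own bidual in finite dimensions). If one prefers to avoid appealing to the biconjugate, one can instead verify condition 3 $\Rightarrow$ condition 2 by hand: assuming $\tfrac12\norm{x}^2 + \tfrac12\norm{y}_*^2 = \langle x, y\rangle$, for every $y'$ the definition of $f^*$ gives $f^*(y') \ge \langle x, y'\rangle - f(x) = \langle x, y'\rangle - \langle x, y\rangle + f^*(y) = f^*(y) + \langle x, y' - y\rangle$, which is exactly $x \in \partial f^*(y)$; the optimization clause of condition 2 then follows from the second fact above. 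This makes the argument self-contained modulo \Cref{lem:fenchel_conjugate_of_norm}.
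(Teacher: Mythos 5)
Your proof is correct. The paper itself states this corollary without proof, as a standard fact from convex analysis (citing Borwein--Lewis), and your argument---identifying each of the three conditions with the Fenchel--Young equality case for the conjugate pair $\tfrac12\norm{\cdot}^2$ and $\tfrac12\norm{\cdot}_*^2$ via \Cref{lem:fenchel_conjugate_of_norm}, with the biconjugacy point for condition 2 handled either by closedness of $\tfrac12\norm{\cdot}^2$ or by your direct verification---is exactly the standard justification the paper implicitly invokes.
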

\begin{proof}
    To see that $x$ optimizes $\max_{x^\prime}\{\langle x^\prime, y\rangle -\frac{1}{2}\norm{x^\prime}^2\}$ whenever $y\in \partial(\frac{1}{2}\norm{\cdot}^2)(x)$, or the analogous statement for $y$, use the definition of the subgradient. The equivalence between the first and third statements then follows from \Cref{lem:fenchel_conjugate_of_norm}, and similarly for the equivalence between the second and third statements. 
\end{proof}

\subsection{Spectral stability}
The following is a well-known consequence of Weyl's Inequality concerning the spectral stability of Hermitian matrices. We let $\lambda_1(A),\lambda_2(A),\dots $ denote the eigenvalues of a Hermitian matrix $A$ sorted in nonincreasing order.
\begin{fact}[Spectral stability]\label{fact:spectral_stability}
  Let $k, d\in\bbZ_{>0}$ be such that $1\leq k\leq d$ and $A, B\in\bbC^{d\times d}$ be Hermitian matrices. It holds that
  \begin{align*}
    \lambda_k(A+B)-\lambda_k(A)\in [\lambda_d(B),\lambda_1(B)].
  \end{align*}
\end{fact}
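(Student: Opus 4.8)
The plan is to obtain both inequalities directly from the Courant--Fischer min-max characterization of eigenvalues applied to the splitting $A+B$. Recall that for any Hermitian $M\in\bbC^{d\times d}$,
\begin{align*}
  \lambda_k(M) \;=\; \min_{\dim V = d-k+1}\ \max_{x\in V,\ \norm{x}_2=1}\ \langle x, Mx\rangle ,
\end{align*}
the minimum running over all subspaces $V\subseteq\bbC^d$ of dimension $d-k+1$ (and similarly $\lambda_k(M)=\max_{\dim V=k}\min_{x\in V,\norm{x}_2=1}\langle x,Mx\rangle$).

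For the upper bound, note that for every unit vector $x$ we have $\langle x,(A+B)x\rangle = \langle x,Ax\rangle + \langle x,Bx\rangle \le \langle x,Ax\rangle + \lambda_1(B)$, since $\langle x,Bx\rangle\le\lambda_1(B)$ for unit $x$. Because this bound is pointwise, maximizing over the unit sphere of any fixed subspace $V$ of dimension $d-k+1$ preserves it, and then so does taking the minimum over all such $V$; this yields $\lambda_k(A+B)\le\lambda_k(A)+\lambda_1(B)$, i.e.\ $\lambda_k(A+B)-\lambda_k(A)\le\lambda_1(B)$. The lower bound is symmetric: from $\langle x,(A+B)x\rangle \ge \langle x,Ax\rangle + \lambda_d(B)$ for all unit $x$, the same two monotonicity steps give $\lambda_k(A+B)\ge\lambda_k(A)+\lambda_d(B)$, i.e.\ $\lambda_k(A+B)-\lambda_k(A)\ge\lambda_d(B)$. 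Combining the two estimates gives $\lambda_k(A+B)-\lambda_k(A)\in[\lambda_d(B),\lambda_1(B)]$.

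There is essentially no obstacle here; the only point deserving a word of care is that passing from a pointwise scalar bound $f(x)\le g(x)+c$ to $\max_{x\in V}f(x)\le\max_{x\in V}g(x)+c$ and then to $\min_V(\cdot)$ is valid because adding a constant commutes with $\max$ and $\min$ and both operations are monotone in the function. Equivalently, one may simply invoke Weyl's inequalities $\lambda_{i+j-1}(A+B)\le\lambda_i(A)+\lambda_j(B)$ and $\lambda_{i+j-d}(A+B)\ge\lambda_i(A)+\lambda_j(B)$ with $i=k$ and $j=1$ (respectively $j=d$), which is the route the statement alludes to; the argument above is just the standard one-line derivation of these two special cases.
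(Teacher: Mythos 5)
Your argument is correct: the two-sided bound follows exactly as you say from Courant--Fischer (or equivalently from the special cases $i=k$, $j=1$ and $j=d$ of Weyl's inequalities), which is precisely the route the paper takes, since it states the fact without proof as a well-known consequence of Weyl's inequality. Nothing is missing.
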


\section{Proof of \Cref{lem:optimal_soln_to_k_supp}}\label{sec:k_supp_lem_proof}
    We make use of the following optimization fact, the proof of which we omit.
    \begin{lem}\label{lem:norm_opt_fact}
        Let $a,b,w_1,w_2\in\mathbb{R}_{>0}$ such that $b> a$ and $w_1,w_2\in [a,b]$. The unique optimal solution to the optimization problem
        \begin{equation}
        \begin{aligned}
            \underset{x\in \bbR^2}{\textnormal{max.}}\quad & w_1x_1 + w_2x_2 - \frac{1}{2}(x_1^2 + x_2^2)\\
           \textnormal{s.t.} \quad & b\geq x_1\geq x_2\geq a
        \end{aligned}
        \end{equation}
        is given by $x_1=x_2=\frac{1}{2}(w_1+w_2)$ if $w_1 < w_2$ and $x = (w_1,w_2)^\top$ otherwise.
    \end{lem}
    We now prove \Cref{lem:optimal_soln_to_k_supp}. The sufficiency of the condition in \cref{eq:k_supp_opt_soln_condn} is explicitly shown in the proof of Proposition~2.1 in \cite{AFS12}, so we restrict ourselves to the argument for necessity. We may assume without loss of generality that $v$ is nonnegative and sorted in nonincreasing order. We will first take $x$ to be sorted as well before justifying this assumption at the end of the proof. Clearly, an optimal solution $x$ which is sorted such that $x_1\geq x_2\geq\dots\geq x_{\ell-1}$ exists.
    
    First, similar to the proof of Proposition~2.1 in \cite{AFS12}, we give necessary conditions on the form of such an optimal solution. Consider the optimization
    \begin{equation}
    \begin{aligned}
        \textnormal{max.}\quad & f(x)\coloneq \sum_{i=1}^{\ell-1}v_ix_i - \frac{1}{2} \sum_{i=1}^k x_i^2 \\
        \textnormal{s.t.}\quad & x\in\bbR^d\\
        \quad & x_1\geq x_2\geq \dots \geq x_{\ell-1}> 0.
    \end{aligned}
    \end{equation}
    By the definition of $\ell$, a sorted solution $x$ to the maximization in \cref{eq:k_supp_opt_problem} is optimal only if its first $\ell-1$ entries are optimal for the problem above. Additionally, an optimal solution must have $x_{k}=x_{k+1}=\dots=x_{\ell-1}$. We can rewrite the objective function for such an optimal solution as 
    \begin{align}
        f(x)=\sum_{i=1}^{k-2}v_ix_i - \frac{1}{2}\sum_{i=1}^{k-2}x_i^2 + v_{k-1}x_{k-1} + s^v_k x_k - \frac{1}{2}x_{k-1}^2 - \frac{1}{2}x_k^2.
    \end{align}
    If $s^v_k < v_{k-1}$ then the optimal solution must satisfy $x_i=v_i$ for $i=1,\dots, k-1$ and $x_k=s^v_k$. (The optimization just becomes the Fenchel conjugate of half the 2-norm squared in this case.) Otherwise, by \Cref{lem:norm_opt_fact} we must have $x_{k-1} = x_{k}$ and therefore
    \begin{align}
        f(x)=\sum_{i=1}^{k-3}v_ix_i - \frac{1}{2}\sum_{i=1}^{k-3}x_i^2 + v_{k-2}x_{k-2}+s^v_{k-1} x_{k-1} - \frac{1}{2}x_{k-2}^2 - x_{k-1}^2
    \end{align}
    at the optimum. Continuing in this manner, we arrive at precisely the condition in the statement of the lemma.
    
    Now, for \emph{any} optimal solution $x$ (not necessarily sorted), we must have $x_j\geq x_{j+1}$ for all $j\in [\ell-2]$ such that $v_j\neq v_{j+1}$; otherwise, the objective function is less than that obtained by permuting the indices $j$ and $j+1$ in the vector $x$. On the other hand, if $v_j=v_{j+1}$ for some $j\in [\ell-2]$, then the objective function is unchanged by permuting the indices $j$ and $j+1$ in the vector $x$. Therefore, if there exists an optimal solution with $x_j < x_{j+1}$ for some $j\in [\ell-2]$, this contradicts the above necessary condition on the form of the sorted optimal solution. In particular, we cannot have simultaneously that $x_j \neq x_{j+1}$ and $v_j=v_{j+1}$ for entries $j$ and $j+1$ of $x$ which match $v$. For the case where $j=k-r-1$, we cannot have that the condition in \cref{eq:condition k support norm} is satisfied when $v_{j}=v_{j+1}$.

\section{Faster algorithms for max-entropy sampling}
\label{sec:faster-sampling}

In this section we prove \cref{lem:marginal_computation}.   We use notation defined there.  Here the algorithms for $Z_\mu$ and $p_\mu$ are new, while those for $q(\mu)$ and $Q(\mu)$ are lightly adapted from \cite{Tille06}, which in turn is summarizing earlier work.

\paragraph{Computing the partition function $Z$.}
Theorem 3 of \cite{Chen94} states that for $1\leq \ell \leq |S|$ we have
\be
Z(\ell, S) = \frac 1 \ell \sum_{j=1}^{\ell} (-1)^{j+1} p_j(S) Z(\ell-j, S),
\label{eq:Chen94-thm3}\ee
where $p_j(S)$ is the power-sum symmetric polynomial
\be
p_j(S) := \sum_{i\in S} e^{-j\mu_i} .
\ee
Thus, given $p_1,\ldots,p_\ell$, we can compute $Z(1,S),\ldots, Z(\ell,S)$ in time $O(\ell^2)$.

A naive computation of $p_1,\ldots,p_\ell$ would take time $O(n \ell)$.  However, they can be computed in time $O(n\poly\log(n))$ using an algorithm attributed to Strassen in Corollary 2.25 of \cite{alg-comp-th}.   The rough idea is that all the power sums can be encoded in terms of a power series $g(t) = \sum_{j\geq 0} p_j t^j$, which can be calculated using
\ba
g(t) &= \sum_{j=1}^n \frac{1}{1-x_j t} = n - t \frac{f'(t)}{f(t)} \\
f(t) &= \prod_{i=1}^n (1-x_i t) .
\ea
We can compute $f(t)$ (thinking of it as a univariate polynomial) using a multiplication tree.  This entails $n/2$ multiplications of degree-1 polynomials, $n/4$ multiplications of degree-2 polynomials, etc., for a total time of $O(n \log^2 n)$, assuming that multiplying degree-$n$ polynomials takes time $O(n \log n)$.  For the rest of the calculation we only need to track the coefficients of $1,t,t^2,\ldots,t^\ell$; formally, we can do our computations modulo $t^{\ell+1}$.

Putting this together we obtain a runtime of $O(n\log^2 n+ \ell^2)$.

\paragraph{Computing $q$ and $Q$.}
Write  $q_j^{(\ell)}$ to make the $\ell$-dependence of $q_j$ explicit.    Then
Result 22(i) from \cite{Tille06} states that
\be
q_j^{(\ell)} = e^{-\mu_j} (1 - q_j^{(\ell-1)}) \frac{Z(\ell-1,[n])}{Z(\ell,[n])} .
\ee
Note that $q_j(0)=0$.    Then given knowledge of $Z(1,[n]),\ldots,Z(\ell,[n])$ from the previous step we can compute each $q_j$ in time $O(\ell)$.  Computing all of $q$ takes time $O(n\ell)$.
(Result 22(iii) from \cite{Tille06} is another option, yielding the same runtime.)

For $Q_{ij}$, when $\mu_i\neq \mu_j$ there is a formula that can be evaluated in constant time.  In this case, Result 23(ii) from \cite{Tille06} states
\be
Q_{ij} = \frac{q_i e^{-\mu_j} - q_j e^{-\mu_i}}{e^{-\mu_j} - e^{-\mu_i}}  .
\ee
The remaining entries of $Q$ can be computed using the fact that $Q_{ii} = q_i$ and $\sum_j Q_{ij} = q_i n$.   Concretely,  
when $\mu_i=\mu_j$, define $C \coloneq  \abs{\{ j : j\neq i, \mu_j = \mu_i\}}$, and then
\be
Q_{ij} = \frac 1C q_i(n-1) - \frac 1C \sum_{\substack{k \\ \mu_k \neq \mu_i} } Q_{ik}
\qq{when}
i\neq j, \mu_i = \mu_j .
\ee
This can also be made $O(1)$ time per entry, if we assume that the $\mu$ vector is sorted.  Thus $Q$ can be computed in time $O(n^2)$.

\paragraph{Sampling from $p_\mu$.}
Exact sampling is possible in time $O(n\ell)$ using Procedure 1 of \cite{Chen94}, which is also described in Section 5.6.11 of \cite{Tille06}.  As the procedure is somewhat involved, we will not reproduce it here.  Instead we will describe a method for sampling $p_\mu$ in expected time $O(n\log n)$ that we have not seen previously in the literature.

First, given $\mu$, define the product distribution $P_\mu$ where $x_i$ is distributed according to $\text{Ber}(\frac{1}{1+e^{\mu_i}})$, i.e.,~the
Bernoulli random variable with $\Pr[x_i=1]=\frac{1}{1+e^{\mu_i}}$ and
$\Pr[x_i=0]=\frac{e^{\mu_i}}{1+e^{\mu_i}}$.  Then it is straightforward to check that $p_\mu$ is equivalent to conditioning
$P_\mu$ on the event $x \in H_\ell$.  Next, we will uniformly shift $\mu$ by some $\Delta\in\bbR$ satisfying $\sum_{i=1}^n \frac{1}{1+e^{\mu_i+\Delta}} = \ell$. Performing this shift $\mu_i\to\mu_i+\Delta$ for every $i\in [n]$ ensures the expected Hamming weight of the $n$ independent Bernoulli random variables is equal to $\ell$, and does not affect $p_\mu$.

Our procedure has two main parts: (i) find $\Delta$, and (ii) sample $x\sim P_\mu$ and run a simple Markov process until $x\in H_\ell$. The first of these can be done in time $O(n\log n)$, while the second runs in expected time $O(n)$. Therefore, the runtime is dominated by the first part, which we explain next.

To find $\Delta$, we run a binary search. Let $g:\bbR\to [0, n]$ be defined through $g(\Delta) = \sum_{i=1}^n\frac{1}{1+e^{\mu_i+\Delta}}$.  Our goal is to find $\Delta$ such that $g(\Delta)=\ell$.  Since $g$ is monotonically decreasing, binary search will work here.
By \Cref{lem:efficient_computation_of_weights}, we can assume $|\mu_i|\leq R$ for some $R=\textnormal{poly}(n)$.  Thus
\begin{align}
  \frac{n}{1+e^{R+\Delta}}\leq g(\Delta) \leq \frac{n}{1+e^{-R+\Delta}}
  \qand
  -n \leq g'(\Delta) \leq 0.
\end{align}
We can then set
\begin{align}
    a\coloneq -R + \log(\frac{n-\ell}{\ell})\quad\textnormal{and}\quad b\coloneq R+\log(\frac{n-\ell}{\ell})
\end{align}
to ensure that $g(a)\geq \ell$ while $g(b)\leq \ell$. We may then perform a binary search over the interval $[a,b]$ to compute $\Delta$ in time $O(n\log R) = O(n\log n)$.  This achieves accuracy $1/\poly(n)$ which will suffice for the rest of the algorithm.

We now proceed assuming that the vector $\mu\in\bbR^n$ has been chosen so that an $n$-bit string drawn according to $P_\mu$ has expected Hamming weight $\ell$. The rest of the algorithm is as follows.
\bit
\item
  Sample $x$ from $P_\mu$.
\item Repeat the following $n$ times.
  \bit
\item
  If $x\in H_\ell$ then output $x$ and stop.
\item
  Choose $i\in [n]$ uniformly randomly and replace $x_i$ with $\text{Ber}(\frac{1}{1+e^{\mu_i}})$.
  \eit
  \eit
  
  We claim that this procedure halts with probability $\Omega(1)$, and conditional on halting, outputs a sample from $p_\mu$.  To prove the latter claim, let $x{(0)},x{(1)}, x{(2)}, \ldots$ denote the values of $x$ produced by the algorithm.  By construction, $x{(0)}$ is distributed according to $P_\mu$. And since the update procedure is Glauber dynamics, the subsequent $x{(t)}$ are as well.  Finally, the algorithm halts when $x{(t)}\in H_\ell$, thus producing a sample from $P_\mu$ conditioned on $H_\ell$.  This yields a sample from $p_\mu$.

  It remains only to analyze the success probability.  
  Define $s(x)$ to be $+1$ if $|x|>\ell$, $0$ if $|x|=\ell$, and $-1$ if $|x|<\ell$.     A sufficient, but not necessary, condition for the  algorithm to succeed is for $s(x(0))s(x(n))$ to equal 0 or -1.  If the product is 0, then either $x(0)$ or $x(n)$ is in $H_\ell$.  If the product is -1 then the Hamming weight went from $>\ell$ to $<\ell$ (or vice versa) and so at some intermediate point we had $x(t)\in H_\ell$.  It is possible for the algorithm to succeed even if $s(x(0))s(x(n))=1$, but we will not analyze this case.  Our proof strategy will be use the mixing properties of the walk to show that the signs of $s(x(0))$ and $s(x(n))$ have bounded correlations, and thus have a decent chance of either having opposite signs or of being zero.

  Let  $\pi \coloneq P_\mu(H_\ell)$.  It turns out that $\pi \gtrsim \ell^{-1/2}$, but we will not use this fact. Theorem~3.2 in Ref.~\cite{JS68} states that the median for the Hamming weight of $x$ is equal to $\ell$, given our assumption that the mean is equal to $\ell$.  This implies that
  \begin{align}
    \abs{\bbE_{x\sim P_\mu}[s]} = \abs{\Pr_{x\sim P_\mu}[|x|>\ell]-\Pr_{x\sim P_\mu}[|x|<\ell]} \leq \pi.
  \end{align}
  Let $T$ denote the transition matrix of the update step in which we resample a random coordinate of $x$.  This process is reversible and has stationary distribution $P_\mu$/ Furthermore, $T$ has a spectral gap equal to $1/n$\footnote{To see this, observe that the entire nontrivial spectrum of $T$ (not including $1$) can be generated as an eigenvalue with corresponding eigenvector $\phi_S:\{0,1\}^n\to \bbR$ defined through $\phi_S(x) = \prod_{i\in S}(x_i-\frac{1}{1+e^{\mu_i}})$ for any $S\subseteq [n]$. The largest such eigenvalue comes from, say, $S=\{1\}$, and is equal to $1-1/n$.}.
  
  For $f,g:\{0,1\}^n\to\bbC$, define $\langle f,g\rangle \coloneq \bbE_{x\sim P_\mu}[f(x)^*g(x)]$.  Reversibility means that $T$ is Hermitian with respect to this inner product and the gap means that
\be
\langle f,(I-T) f\rangle \geq \frac 1n \Var(f) = \frac1n \qty(\langle f,f\rangle - \langle f,1\rangle^2) \ .
\ee
(Here, the ``1" in the inner product denotes the constant function equal to one.) Let $f=s-\bbE[s]$ so that $\langle f,1\rangle=0$.  Rearranging the gap equation  we obtain
\ba
\langle f, T^nf \rangle 
& \leq \qty(1-\frac 1n)^n \langle f,f\rangle
\\ & \leq e^{-1} (\langle s,s\rangle - \bbE[s]^2)
\\ & \leq e^{-1} (1-\bbE[s]^2) \ .
\ea
On the other hand, $\langle f, T^nf \rangle = \langle s,T^n s\rangle - \bbE[s]^2$.
Thus 
\be
\langle s,T^n s\rangle \leq e^{-1} (1-\bbE[s]^2) + \bbE[s]^2 \leq e^{-1} + (1-e^{-1})\pi^2 \ .
\ee
The left-hand side can also be related to the underlying walk as
\be
\langle s,T^n s\rangle = 
\bbE[s(x(0))s(x(n))]
\geq 2\Pr[s(x(0))s(x(n))=1]-1.\ee
Combining these we find that
\be
\Pr[s(x(0))s(x(n))=1] \leq \frac{1}{2} (1+e^{-1})(1+\pi^2) \ .
\ee
Now suppose that $\pi \leq 1/2$.  Then the right-hand side is strictly less than $7/8$ and there is a constant probability of success.  On the other hand with probability $\pi$ the first sample $x(0)$ is already in $H_\ell$, so if $\pi>1/2$ then we also obtain a constant probability of success.  Either way, each iteration of the sampling algorithm succeeds with probability $\Omega(1)$, so the expected number of repetitions is $O(1)$.
\newpage

\bibliographystyle{hyperalpha}
\bibliography{references}

\end{document}